\definecolor{Pink}{rgb}{1.0, 0.5, 0.5}
\definecolor{Maroon}{rgb}{0.8, 0.0, 0.0}
\def\boxit#1{\vbox{\hrule\hbox{\vrule\kern6pt\vbox{\kern6pt#1\kern6pt}\kern6pt\vrule}\hrule}}
\newcommand{\M}{\mbox}
\newtheorem{corollary}[theorem]{Corollary}
\newcommand{\bM}{\mbox{\bf M}}
\newcommand{\bN}{\mbox{\bf N}}
\newcommand{\bdelta}{\mbox{\boldmath $\delta$}}
\newcommand{\bbeta}{\mbox{\boldmath $\beta$}}
\newcommand{\btheta}{\mbox{\boldmath $\theta$}}
\newcommand{\bepsilon}{\mbox{\boldmath $\epsilon$}}
\newcommand{\bgamma}{\mbox{\boldmath $\gamma$}}
\newcommand{\what}{\widehat}
\def\beqn{\begin{eqnarray}}
\def\eeqn{\end{eqnarray}}
\def\beqns{\begin{eqnarray*}}
\def\eeqns{\end{eqnarray*}}
\def\0{{\bf 0}}
\def\a{{\bf a}}
\def\b{{\bf b}}
\def\C{{\bf C}}
\def\c{{\bf c}}
\def\D{{\bf D}}
\def\d{{\bf d}}
\def\h{{\bf h}}
\def\H{{\bf H}}
\def\I{{\bf I}}
\def\r{{\bf r}}
\def\R{{\bf R}}
\def\T{{\bf T}}
\def\bO{{\bf O}}
\def\bP{{\bf P}}
\def\U{{\bf U}}
\def\S{{\bf S}}
\def\s{{\bf s}}
\def\u{{\bf u}}
\def\W{{\bf W}}
\def\w{{\bf w}}
\def\X{{\bf X}}
\def\T{{\bf T}}
\def\y{{\bf y}}
\def\Z{{\bf Z}}
\def\z{{\bf z}}
\def\1{{\bf 1}}
\def\trans{^{\rm T}}
\newcommand{\bbR}{\mathbb{R}}
\newcommand{\mcJ}{\mathcal{J}}
\newcommand{\mcA}{\mathcal{A}}
\DeclareMathOperator*{\argmin}{arg\,min}
\newcommand{\bs}{\boldsymbol}
\newcommand{\beginsupplement}{%
        \setcounter{table}{0}
        \renewcommand{\thetable}{S\arabic{table}}%
        \setcounter{figure}{0}
        \renewcommand{\thefigure}{S\arabic{figure}}%
     }
\begin{document}

\title{Robust regression with compositional covariates}

% List of authors, with corresponding author marked by asterisk
\author{ADITYA K. MISHRA$^\ast$, CHRISTIAN L. M\"ULLER\\[4pt]
% Author addresses
\textit{Center for Computational Mathematics, 
Flatiron Institute, 162 5th Avenue,  NY 10010}
\\[2pt]
% E-mail address for correspondence
{amishra@flatironinstitute.org}}

% Running headers of paper:
\markboth%
% First field is the short list of authors
{A. K. Mishra and C. L. M\"uller}
% Second field is the short title of the paper
{RobRegCC}
\maketitle

% Add a footnote for the corresponding author if one has been
% identified in the author list
\footnotetext{To whom correspondence should be addressed.}

\begin{abstract} 
{Many biological high-throughput data sets, such as  targeted amplicon-based and metagenomic sequencing data, 
are compositional in nature. A common exploratory data analysis task is to infer statistical associations 
between the high-dimensional microbial compositions and habitat- or host-related covariates. 
We propose a general robust statistical regression framework, \textsf{RobRegCC} (Robust Regression with Compositional
Covariates), which extends the linear log-contrast model by a mean shift formulation 
for capturing outliers. \textsf{RobRegCC} includes sparsity-promoting convex and non-convex 
penalties for parsimonious model estimation, a data-driven
robust initialization procedure, and a novel robust cross-validation model selection scheme. 
We show \textsf{RobRegCC}'s ability to perform simultaneous sparse log-contrast regression and 
outlier detection over a wide range of simulation settings and provide theoretical non-asymptotic 
guarantees for the underlying estimators. To demonstrate the seamless applicability of 
the workflow on real data, we consider a gut microbiome data set from HIV patients and 
infer robust associations between a sparse set of microbial species and host immune response 
from soluble CD14 measurements. All experiments are fully reproducible and available on GitHub at \url{https://github.com/amishra-stats/robregcc}.
}
{Compositional data; Microbiome; Robust regression; Mean shift; Sparsity; non-convexity}
\end{abstract}

\section{Introduction}
Many scientific data measurements are compositional in nature. Prominent examples include chemical
composition measurements of rocks and sediments in geology and relative abundances of sequencing
reads in microbial ecology. For instances, targeted amplicon sequencing (TAS) and metagenomic
profiling provides genomic survey data of microbial communities in their natural habitat, ranging
from marine ecosystems to the human gut \citep{Huttenhower2012, Thompson2017, Sunagawa2016,
McDonald2018}. These microbiome surveys typically comprise sparse relative (or compositional) counts of
operational taxonomic units (OTUs) or amplicon sequence variants (ASVs) \citep{Callahan2017,Edgar2016} 
and are often accompanied by measurements of additional covariates that characterize the underlying
habitat or the phenotypic status of the host. 

An important step in exploratory microbiome data analysis is the inference of parsimonious and robust
statistical relationships between the microbial compositions and habitat- or host-specific
measurements. Standard linear regression modeling can, however, not be applied in
this context because the microbial count data only carry relative or compositional information. 
Several regression techniques have been introduced to handle compositional data, including Dirichlet
multinomial mixture modeling \citep{Holmes2012} and kernel penalized regression \citep{Randolph2015}. 
A popular approach to regression modeling with compositional covariates is  
log-contrast regression, put forward by \cite{aitchison1984log} in the context of 
experiments with mixtures. In the linear log-contrast model, the continuous response is expressed 
as linear combination of log-transformed compositions subject to a zero-sum constraint on the
regression vector. This model allows the intuitive interpretation of the response as a linear
combination of log-ratios of the original compositions. An alternative equivalent low-dimensional 
approach considers linear regression after applying an isometric log-ratio (ilr) transform to the
compositions \citep{Hron2012}.

For microbiome data analysis, the linear log-contrast model has been brought
to the high-dimensional setting via regularization, e.g, via $\ell_1$ penalization \citep{Lin2014}
or more general structured sparsity approaches \citep{shi2016regression,Wang2017,Sun2018}. 
A related approach is the selection of balance (selbal) approach \citep{rivera2018balances}
which performs sparse greedy covariate selection on ilr transformed variables. 
While these approaches can lead to parsimonious models linking (microbial) compositions 
to responses of interest, they are sensitive to outliers or high-leveraged data points in the response. 

In this contribution, we alleviate these shortcomings by introducing \textbf{Rob}ust log-contrast 
\textbf{Reg}ression estimators with \textbf{C}ompositional \textbf{C}ovariates 
(\textsf{RobRegCC}), a novel robust regression modeling framework for compositional data. Figure~\ref{fig:workflow} shows the general \textsf{RobRegCC} workflow. 

\begin{figure}[h]
\begin{center}
\includegraphics[page = 1, width=1\textwidth, angle=0]{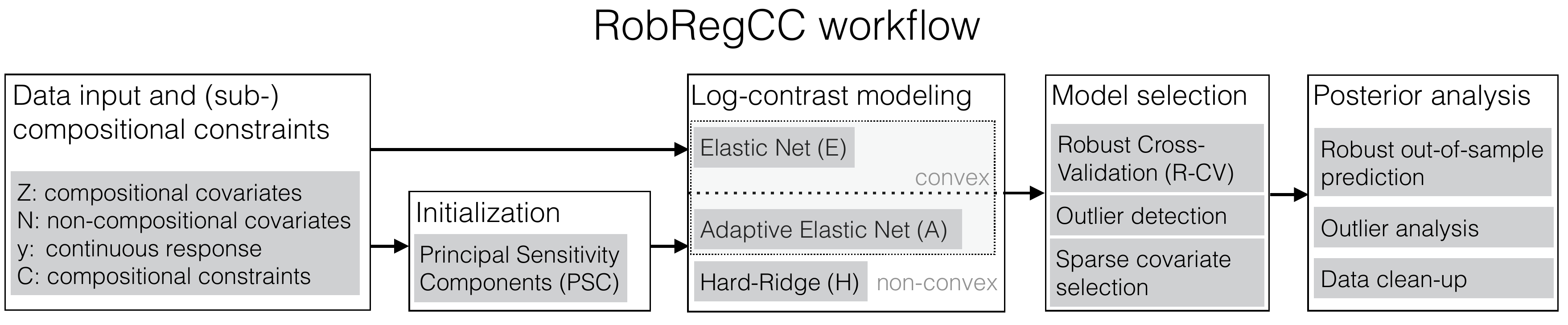}
\caption{The \textsf{RobRegCC} workflow for robust regression with compositional covariates.  }\label{fig:workflow}
\end{center}
\end{figure}

\textsf{RobRegCC} integrates a mean shift formulation in linear log-contrast regression which enables
the modeling of outliers in the response variable. The approach achieves 
parsimonious model identification, i.e., simultaneous outlier detection and variable selection,
through the integration of sparsity-promoting convex and non-convex regularizers. 
\textsf{RobRegCC} includes three different penalization approaches, the standard Elastic Net(E) penalty, a novel
adaptive Elastic Net(A) penalty, and a non-convex hard-ridge (H) penalty, resulting in a family of 
robust estimators. We derive theoretical guarantees for these estimators in the non-asymptotic 
setting. The latter two estimation procedures require initial parameter estimates which we provide
via principal sensitivity component (PSC) analysis \citep{pena1999fast}, adapted to the compositional
setting. We formulate the associated (non-)convex optimization problems using an augmented Lagrangian
framework and present an iterative thresholding/proximal algorithm for efficient numerical
minimization. \textsf{RobRegCC} also includes novel robust model selection and robust out-of-sample
prediction measures which may be of independent interest. For model selection, we put forward a robust 
cross-validation (R-CV) scheme which computes test sample error on ``clean" leave-out data using a
specifically tailored robust test statistics. The same statistic is also used to perform robust
out-of-sample prediction. All presented simulation and real-world experiments and computations
are available in a reproducible workflow on GitHub at \url{https://github.com/amishra-stats/robregcc}. \textsf{RobRegCC} is available on CRAN at \url{https://CRAN.R-project.org/package=robregcc}. 

\section{Robust log-contrast regression for compositional data}
High-throughput next-generation sequencing techniques typically provide  
read count data of the form $\D = [\d_1,\ldots,\d_n]\trans \in \bbR^{n \times p}$, comprising 
$n$ observations of a $p$-dimensional vector of read counts. 
The counts correspond, for instance, to the estimated number of OTUs, ASVs, or genes in a biological sample. 
Due to experimental limitations, the read counts only carry relative or proportional information 
and do not represent absolute abundances. One way to normalize these count data is to divide each sample 
by its total sum, resulting in a matrix $\W =[\w_1,\ldots,\w_n]\trans  \in \bbR^{n \times p}$ where each $\w_i =
\d_i/\1_p\trans\d_i$ represents a $p$-dimensional vector of proportions or compositions. 
Prior to the normalization, any zero count is replaced by a constant pseudo-count
\citep{aitchison1982statistical} or a small random count generated from an 
appropriate probability distribution \citep{sparcc}. Any compositional vector $\w_i$ is 
thus constraint to the $(p-1)$-dimensional simplex 
$\S^{(p-1)} =  \{[s_{1},\ldots,s_{p}]\trans: 0<s_{k}\le 1, \sum_{k=1}^p s_{k} = 1 \}$. 
The problem of interest is to find linear associations between the measured compositions $\W$ and
a continuous response or outcome variable of interest $\y = [y_1,\ldots,y_n]\trans \in \bbR^{n}$
that has been jointly collected with the relative abundance data. \cite{aitchison1984log} provide a 
useful framework to model such associations via log-contrast regression.

\subsection{The standard log-contrast regression model}
\label{subsec:logcontm}
The principle idea of log-contrast regression is to model the outcome $\y$ as linear 
combination of log-ratios derived from the compositional covariate data $\W$. A common
transform is the additive log-ratio (alr) transform \citep{aitchison1982statistical} which 
requires the choice of a reference. When considering the $k$th predictor as reference, 
the alr-transformed data are $\U= [\u_1,\ldots,\u_n]\trans$, where $\u_i = [u_{i1}, \ldots , 
u_{ip}]$ with $u_{ij} = \log(w_{ij}/w_{ik})$. 
The log-contrast regression model is written as
\begin{equation}
\label{eq:logconm}
y_i = \u_{i,- k} \,\, \b_{-k} +\epsilon_i , \qquad i = 1,\ldots,n 
\end{equation}
where $\b  = [b_{1} \, \ldots \, b_{p}]\trans$ is the coefficient vector, 
and $\bepsilon = [\epsilon_1 \, \ldots \,\epsilon_n] \in \bbR^{n}$ is independent and 
identically distributed (\textsf{IID}) noise with mean $\mathbb{E}(\epsilon_i) = 0$ and 
variance $\mathrm{Var}(\epsilon_i) = \sigma^2$. 
The symbol $-k$ denotes the exclusion of $kth$ entries in $\u_{i}$ and $\b$. 
A major drawback of model \eqref{eq:logconm} is its loss of permutation invariance
due to the choice of a reference \citep{aitchison1982statistical}. 
By expressing $b_k = - \sum_{i \neq k} b_i$, we can reformulate model \eqref{eq:logconm} 
into a symmetric permutation-invariant form as
\begin{align}
\label{eq:logcon_clr}
y_i = \z_i\trans \b+\epsilon_i , \qquad \1_p\trans \b = 0,  \qquad i = 1,\ldots,n \,,  
\end{align}
where $\z_i = [z_{i1} \, \ldots \, z_{ip}]\trans$ are log-transformed predictors with 
$z_{ij} = \log(w_{ij})$ \citep{aitchison1984log}. The linear constraint in 
\eqref{eq:logcon_clr} 
ensures that, after model fitting, the response can be equivalently expressed as linear 
combinations of log-ratios of the original compositions 
\citep{Aitchison2003,Sun2018,Bates2018,Combettes2020a}. The model also ensures 
subcompositional coherence, a key principle in compositional data analysis. This principle 
states that the analysis should be coherent even if we had only selected subcompositions 
out of the full compositions, or if the analyzed compositions are only parts of larger 
compositions containing other parts.

When additional $m$ non-compositional covariates $\bN \in \mathbb{R}^{n \times m}$, such as   
habitat and host-associated factors or other control variables are available, we can extend the 
linear log-contrast model to
\begin{align}
\label{eq:logcotLC}
\y =   \Z\b + \bN\a + \bepsilon , \qquad \1_p\trans \b = 0,
\end{align}
where $\Z = [\z_1,\ldots,\z_n]\trans$, $\b  \in \mathbb{R}^{p}$ is the 
coefficient vector for the compositional covariates, and $\a \in \mathbb{R}^{m}$ is the coefficient vector for all non-compositional variables, respectively. This 
model also allows to include an unconstrained intercept in the linear log-contrast model by
taking the first column of $\bN$ to be $\1_n$, the $n\times 1$ vector of ones.

The zero-sum constraint in \eqref{eq:logcotLC} can be generalized when grouping 
information about the predictors is available. In the microbiome context, each predictor 
can be associated with taxonomic or phylogenetic information, 
typically encoded in a taxonomic or phylogenetic tree $\mathcal{T}_{K,p}$ 
with $p$ leaves and $K$ levels. Following \citet{shi2016regression}, we can include this 
information in \eqref{eq:logcotLC} via a linear equality constraint. For instance,
when analyzing microbiome data at a fixed (taxonomic or phylogenetic) 
level of the tree, e.g., at the phylum level, this level induces a grouping of the $p$ taxa into 
$k$ disjoint sets with column index set ${\mathbb{A}_r}$ such that $|\mathbb{A}_r| = p_r$ 
for $r=1,\ldots,k$ and $\sum_{r=1}^k p_r = p$. Each set represents the taxa in the 
respective phylum. If the goal of the analysis is to be subcompositionally coherent with 
respect to the phylum groups, we can define the subcomposition matrix $\C_s$: 
\begin{align}
\C_s^\trans = \begin{bmatrix} \c_1 & \c_2 & \c_3 & \ldots & \c_k \end{bmatrix}\trans = \begin{bmatrix}
    \1_{p_1}\trans & \0 & \dots  & \0 \\
    \0  & \1_{p_2}\trans &  \dots  & \0 \\
    \vdots & \vdots &  \ddots & \vdots \\
    \0 & \0 &  \dots  & \1_{p_k}\trans 
\end{bmatrix}_{k \times p} \,,
\label{eq:defCmat}
\end{align}
where $ \c_{j}$ accounts for the composition in the subgroup with index set 
${\mathbb{A}_j}$ such that $ (\c_{j})_{\mathbb{A}_j} = \1_{p_j}$.

The model in \eqref{eq:logcotLC} can thus be generalized by including the subcomposition matrix $\C_s$:
\begin{align}
\label{eq:cp-slr}
\y =  \Z\b + \bN\a + \bepsilon = \sum_{r=1}^{k} 
\Z_{\mathbb{A}_r}\b_{\mathbb{A}_r} + \bN\a + \bepsilon, \qquad \M{s.t.} \qquad \C_s\trans \b = 
\0 \,, 
\end{align}
where $\{\Z_{\mathbb{A}_r} , \b_{\mathbb{A}_r}\}$ are the covariates and unknown 
coefficients corresponding to the $r$th sub-group. The model in \eqref{eq:logcotLC} is 
a special case of model \eqref{eq:cp-slr} with $\C_s = \1_p$. 

\subsection{Robust log-contrast regression model}
Many biological datasets, including microbiome profiling data, contain outliers or other 
forms of data corruptions that can hamper statistical estimation. For example, the extended 
log-contrast model in \eqref{eq:cp-slr} assumes errors $\bepsilon$ to be well behaved, i.e., 
free from outliers in $[\y,\Z,\bN]$. Following earlier work for wavelet estimation in partial
linear model \citep{Antoniadis2007,Gannaz2007} and linear regression
\citep{she2011outlier,Lee2012b,Nguyen}, we propose to extend 
the log-contrast model in \eqref{eq:cp-slr} with a mean shift vector 
$\bgamma = [\gamma_1 , \ldots  , \gamma_n]\trans$, accounting 
for the grossly corrupted observations in $y$, resulting in the model
\begin{align}
\label{eq:cp-Rslr}
\y =   \Z\b + \bN \a + \bgamma + \bepsilon , \quad \M{s.t.} \quad \C_s\trans\b= \0. 
\end{align}
The support set $\bs\mcJ(\bgamma)$ of the vector $\bgamma$ can thus capture potential
outliers in the response $y$. By fusing the compositional and non-compositional 
covariates into the general design matrix $\X = [\Z \,\, \bN ]$, we can denote the
corresponding model coefficients by $\bbeta = (\b \,\,\a) \in \mathbb{R}^p$. 
Augmenting the linear constraint matrix $\C_s$ by a $k \times m$ zero-matrix, denoted by
$\C = [\C_s \trans \,\, \0_{k \times m}]\trans$, the model in \eqref{eq:cp-Rslr} 
simplifies to
\begin{align}
\label{eq:cp-Rslr2}
\y =   \X \bbeta + \bgamma + \bepsilon , \quad \M{s.t.} \quad \C\trans\bbeta = \0 \,. 
\end{align}
This model forms the basis for the \textbf{Rob}ust log-contrast 
\textbf{Reg}ression estimators with \textbf{C}ompositional \textbf{C}ovariates 
(\textsf{RobRegCC}), considered in the remainder of the paper.

\subsection{Regularization for parameter estimation}%Robust log-contrast regression estimation
As the \textsf{RobRegCC} model in \eqref{eq:cp-Rslr2} is over-specified even in the
low-dimensional setting, comprising $(p+m+n-k)$ unknown parameters, we introduce a family of
regularized estimators using sparsity-inducing penalties. The proposed class of estimators for 
the parameters $(\bgamma, \bbeta)$ are associated with the following 
general optimization problem:     
\begin{align}
(\what{\bgamma},\what{\bbeta}) \equiv \argmin_{\bgamma,\bbeta} \Bigg\{ \frac{1}{2n} \| \y
- \X \bbeta - \bgamma \|_2^2 & + P_{\lambda_1}^1(\bgamma) + P_{\lambda_2}^2(\bbeta) \Bigg\}   
  \quad \M{s.t.} \quad \C\trans \bbeta = \0 \,, 
\label{eq:objfun21} 
\end{align}
where $P_{\lambda_1}^1(\bgamma)$ and  $P_{\lambda_2}^2(\bbeta)$ are sparsity-inducing 
regularizers with tuning parameters $\lambda_1$ and $\lambda_2$, respectively. The
regularization framework involves solving the optimization problem \eqref{eq:objfun21}  over a
grid of tuning parameters $\{\lambda_1,\lambda_2\}$. Our theoretical results (see Theorem
\ref{sec:non-asy:th1} in Section \ref{sec:nasyanal}) show that optimal tuning can be
achieved by setting $\lambda_1 = A\sigma(\log(en))^{1/2}$ and $\lambda_2 =
A\sigma(\log(ep))^{1/2}$ for some constant $A$. This motivates the introduction of a single
tuning parameter $\lambda$ and expressing $\lambda_1 =  k_1\lambda$ and $\lambda_2 = k_2
\lambda$ where $k_1 = \sqrt{\log(en)} $ and $k_2 = \sqrt{\log(ep)}$.

Normalizing the $\ell_2$-norm of the columns of $\X$ to  $\sqrt{n}$, scaling the mean shift
vector $\bgamma$ by the factor $\sqrt{n}$, and concatenating the unknowns into $\bs \delta
=[\delta_1, \ldots, \delta_{n+p}]\trans =  [\bgamma\trans \, \bbeta\trans]\trans$ leads to
a compact reformulation of \eqref{eq:objfun21} of the form: 
\begin{align}
\label{eq:objfun}
(\what{\bgamma},\what{\bbeta}) \equiv \argmin_{\bgamma,\bbeta} \Bigg\{\frac{1}{2n} \| \y  
- \X \bbeta - \sqrt{n}\bgamma \|_2^2  +  P_{\lambda}(\bs\delta) \Bigg\} \quad \M{s.t.} \quad \C\trans \bbeta = \0,
\end{align}
with $P_\lambda(\bs \delta) =  P_{\lambda_1}^1(\bgamma) + P_{\lambda_2}^2(\bbeta)$. 

We focus, in theory and practice, on three different choices for the penalty function $P_{\lambda}(\bs\delta)$:
\begin{itemize}
\label{list:penalties}
\item[\textsf{I)}] $P_{\lambda}^H(\bs\delta; \alpha) = \alpha \lambda^2 \sum_{i = 1}^{n+p} \kappa_i^2 \| \delta_i \|_0/2  + (1 - \alpha) \lambda \| \bs\delta \|_2^2 / 2 $, \,
\item[\textsf{II)}] $P_{\lambda}^E(\bs\delta;\alpha) = \alpha \lambda \| \bs\kappa \circ \bs\delta  \|_1   + (1 - \alpha) \lambda \| \bs\delta \|_2^2 / 2$, \, 
\item[\textsf{III)}] $P_{\lambda}^A(\bs\delta;\alpha,\w) =  \alpha\lambda  \|\bs\kappa \circ \w \circ \bs\delta\|_1  + (1 - \alpha) \lambda \| \bs\delta \|_2^2 / 2$. \, 
\end{itemize}
The vector $\bs\kappa = [\kappa_1, \ldots, \kappa_{n+p}]\trans = [k_1\1_n\trans \,\,\, k_2\1_{p}\trans]\trans$
represents the multiplying factors to each model parameter in $\bs\delta, $ $\w$ a vector of non-negative weights, 
and $\alpha \in [0, 1]$ the mixing weight between the sparsity-inducing $\ell_0$/$\ell_1$-norm and the $\ell_2$-norm, 
respectively. The symbol $\circ$ denotes the element-wise product. Penalty function \textsf{I} 
comprises a mixture of the non-convex $\ell_0$ ``norm" and ridge (or Tikhonov) regularization via 
the squared $\ell_2$-norm. Following \citet{she2011outlier}, we refer to this penalty as the 
hard-ridge penalty $P^H$. Penalty \textsf{II} is a convex relaxation of penalty
\textsf{I}, the so-called ``Elastic-Net" penalty \citep{zou2005} 
as mixture of $\ell_1$-norm and squared $\ell_2$-norm and is denoted by $P^E$. Penalty
\textsf{III} augments penalty \textsf{II} by a non-negative weight vector $\w$ in the
$\ell_1$-norm leading to a weighted or ``adaptive" penalty $P^A$, similar to the adaptive lasso
\citep{zou2006}. This convex penalty is novel in the context of mean shift estimation and
requires the construction of appropriate weights $\w$ via a robust data-driven initialization
procedure (see Section \ref{subsec:init}). 

\begin{remark}
The choice of the penalty function determines the properties of the corresponding
robust estimator. In the low-dimensional setting, \citet{Antoniadis2007} and \citet{Gannaz2007}
showed equivalence between Huber's M-estimator and the mean shift model with $\ell_1$ norm
penalization. In robust linear regression, this model, even with added $\ell_2$ regularization
(i.e., the $P^E$ penalty), is prone to ``masking" and ``swapping" effects due to
leveraged outliers \citep{she2011outlier}. Our simulation experiments (see Section \ref{sec:comp})
also confirm this behavior in log-contrast regression. To alleviate this shortcoming
\textsf{RobRegCC} includes the non-convex hard-ridge penalty function $P^H$ \citet{she2011outlier} 
and the convex adaptive penalty function $P^A$ which inherits 
the statistical strength of $P^H$ while simultaneously simplifying computation.
\end{remark}

\section{A unifying computational framework for robust log-contrast regression}
\label{sec:comp}
The computational framework (see Figure~\ref{fig:workflow}) for parameter estimation of
the robust log-contrast regression model in \eqref{eq:cp-Rslr2} comprises three parts:
(i) a novel robust initialization procedure that is instrumental when penalty functions \textsf{I}
or \textsf{III} are used, (ii) a general algorithm for solving the optimization problem in
\eqref{eq:objfun} that can encompass any of the introduced penalty functions, and (iii) a new
robust cross-validation-based (R-CV) model selection strategy specifically tailored to robust
estimation.  

\subsection{A general optimization algorithm}
\label{subsec:optim}
Our algorithmic framework can handle the optimization problem in \eqref{eq:objfun}
with any of the penalty functions \textsf{I}-\textsf{III}. While specialized 
optimization strategies are available for the convex problem instances, (see,
e.g.,\citet{Antoniadis2001,Combettes2011a,she2011outlier,Combettes2012,Briceno-Arias2018}), we present
an general iterative thresholding algorithm, derived from an augmented Lagrangian formulation, that can
encompass all penalty functions. 

A fundamental building block for the proposed algorithm is the use of the proximity or 
thresholding operator $\Theta(\cdot)$ associated with a penalty function $P(\cdot)$:

\begin{align*}
\Theta(t) =  \argmin_{\theta} \frac{1}{2} \| t-\theta\|^2 + P(\theta). %\label{toyprob} 
\end{align*}
For any scalar $a$, the soft thresholding operator is defined 
as $\Theta_\lambda^S(a) = \M{sign}(a)(|a| - \lambda)_+$, 
and the hard threshold operator is $ \Theta_\lambda^H(a) = a  1_{|a| > \lambda}$. 
Table \ref{tab:thresh} summarizes the parameterized scalar thresholding operators, associated with the
penalty functions \textsf{I}-\textsf{III} (see also
\citet{Antoniadis2001,she2011outlier,Combettes2011a}). Note that for vector-valued input to the penalty
functions, the thresholding operators are applied element-wise. 

\begin{table}[htp]
	\centering
	\caption{Penalty function and corresponding thresholding/proximity operator.}\label{tab:thresh}
		\begin{tabular}{lllccccc}
		\hline
		\textsf{Case} & $P_{\lambda}(\theta; \alpha, w$) & $\Theta_{\lambda}(t)$ \\
		\hline
		\mbox{I}& $  \alpha^2 \lambda^2 \kappa^2 \| \theta\|_0/2  + (1 - \alpha) \lambda \| \theta \|_2^2 / 2 $ & $ \Theta_c^{H}(\frac{t}{1+\lambda(1-\alpha)})$ $\text{with}$  $c=\frac{\alpha\lambda\kappa}{\sqrt{1+\lambda(1-\alpha)}}$ \\
		\mbox{II} & $  \alpha \lambda \| \kappa \circ  \theta   \|_1   + (1 - \alpha) \lambda \| \theta \|_2^2 / 2$ & $ \frac{1}{1+\lambda(1-\alpha)}\Theta_{\alpha\lambda\kappa}^{S}(t)$  \\
		\mbox{III} &  $ \alpha\lambda  \| \kappa \circ w \circ\theta \|_1  + (1 - \alpha) \lambda \| \theta\|_2^2 / 2$ & $ \frac{1}{1+\lambda(1-\alpha)}\Theta_{w\alpha\lambda\kappa}^{S}(t)$  \\
		\hline
	\end{tabular}
\end{table}

In the optimization problem in \eqref{eq:objfun}, the linear constraint $\C\trans \bbeta = \0$ implies
$\bbeta = \bP_{\C}^{\perp} \btheta$ for any $\btheta \in \bbR^p$, where $\bP_{\C}^{\perp}$ is the
orthogonal complement of the projection matrix onto subspace $\C$. We can thus reformulate the optimization 
problem in \eqref{eq:objfun} as
\begin{align}
(\what{\btheta}, \what{\bgamma})  \equiv \argmin_{\btheta, \bgamma} \Big\{ f_{\lambda}(\btheta,\bgamma)\Big\} \quad 
\M{s.t.} \,\, \C\trans \btheta = \0 \,, 
\label{eq:objFUNnvar}
\end{align}
where $f_{\lambda}(\btheta,\bgamma) = \frac{1}{2n} \| \y - \X\bP_{\C}^{\perp}\btheta - \sqrt{n}\bgamma
\|_2^2  +   P_{\lambda_1}^1(\bgamma) + P_{\lambda_2}^2(\btheta)$. 
%
%We remark that in the presence of sole compositional covariates as predictors, the subcompositional
%constraint matrix $\C$ implies that the matrix $\X\bP_{\C}^{\perp}$ carries the \textsf{centered
%log-ratio} (clr) transformed variables of the original compositional covariates $\W$. Hence, sparse
%estimates of $\btheta$ facilitate variable selection in clr-transformed covariates. 
%
We solve the constraint optimization problem in \eqref{eq:objFUNnvar} using an augmented Lagrangian approach. The standard augmented Lagrangian for the problem reads
\begin{align*}
L_{\mu , \lambda}( \btheta, \bgamma,\bs\zeta)  =  f_{\lambda}(\btheta, \bgamma) + \bs\zeta\trans\C\trans\btheta  +   \frac{\mu}{2}\|\C\trans\btheta\|_2^2 
\end{align*}
where  $\bs\zeta \in \mathbb{R}^k$ are the Lagrange multipliers and $\mu>0$ is a regularization parameter. 
By reparameterizing $\bs\eta = \bs\zeta/\mu$ and completing the ``square", the augmented Lagrangian simplifies to $L_{\mu , \lambda}( \btheta, \bgamma, \bs\eta)  =  f_{\lambda}(\btheta, \bgamma)  + \frac{1}{2} \|\C\trans\btheta +\bs\eta  \|_2^2$. 
%\begin{align*}
%L_{\mu , \lambda}( \btheta, \bgamma, \bs\eta)  =  f_{\lambda}(\btheta, \bgamma)  + \frac{1}{2} \|\C\trans\btheta +\bs\eta  \|_2^2 
%\end{align*}

We consider the \textsf{dual descent} approach for solving the associated optimization problem which iterates between 
\begin{itemize}
%\item[for i=1,\ldots,]
\item[\textsf{Primal update:}] $(\btheta^{(i+1)}, \bgamma^{(i+1)}) \equiv \argmin_{\btheta, \bgamma} \Bigg\{ L_{\mu , \lambda}( \btheta, \bgamma, \bs\eta^{(i)}) \Bigg\} $,
\item[\textsf{Dual update:}] $\bs\eta^{(i+1)} =  \bs\eta^{(i)} + \C\trans\btheta^{(i+1)}$,
\end{itemize}
until certain convergence criteria are met. The primal update requires solving an unconstrained optimization problem for fixed Lagrange multipliers $\bs\eta^{(i)}$. 
By grouping all terms in the Lagrangian appropriately, we can rewrite this subproblem in standard form  
\begin{align}
\widetilde{\btheta}^{(i+1)} \equiv \argmin_{\widetilde{\btheta}} \,\, \Bigg\{   \frac{1}{2n} \| \widetilde{\y} -\widetilde{\X} \widetilde{\btheta}   \|_2^2 + P_\lambda(\widetilde{\btheta})\Bigg\} \,,
\label{eq:simObjfun}
\end{align}
where 
$$
\widetilde{\y} = \begin{bmatrix}
\y \\ -\sqrt{n}\bs\eta^{(i)}
\end{bmatrix} , \quad  \widetilde{\btheta} = \begin{bmatrix}
\btheta \\ \bgamma
\end{bmatrix} ,  \quad \M{and} \quad
\widetilde{\X} = \begin{bmatrix}
\X\bP_{\C}^{\perp} & \sqrt{n}\I \\ \sqrt{n}\C\trans & \0
\end{bmatrix}. 
$$
For the penalty functions \textsf{I}-\textsf{III}, this problem formulation is amenable to iterative 
shrinkage/thresholding algorithms (ISTA) (see, e.g., \cite{Daubechies2004,Combettes2011a}) or,
equivalently, to the thresholding-based iterative selection procedure (TISP) \citep{She2009}.
Convergence guarantees, however, depend on the specific properties of the penalty function. ISTA
algorithms comprise a (forward) gradient step and a (backward) proximal/thresholding step. To solve the
primal update at the $(i+1)$th stage for the objective in \eqref{eq:simObjfun}, the $(j+1)$th iteration
in ISTA reads
\begin{align}
\widetilde{\btheta}^{(i,j+1)} = \Theta_{\lambda}\Bigg[  \widetilde{\btheta}^{(i,j)} - \frac{1}{n k_0} 
\widetilde{\X}\trans \big(\widetilde{\y} - \widetilde{\X}\widetilde{\btheta}^{(i,j)}\big)\Bigg] \,,
\label{eq:inLoop}
\end{align}
where $\Theta_\lambda[\cdot]$ is the thresholding operator corresponding to the considered 
penalty function (see Table \ref{tab:thresh}). The operator is applied element-wise to the 
entries of the vectors. The
iterative algorithm is stopped when a prescribed convergence criterion on the consecutive iterates is
reached. To ensure monotone decrease in the objective function, the scaling constant needs to satisfy 
$k_0<\frac{1}{2}\sigma_{\widetilde{\X}}$ where $\sigma_{\widetilde{\X}}$ is largest eigenvalue of 
$\widetilde{\X}\trans\widetilde{\X}$ (see, e.g,
\citet{She2009,Bayram2016a}). Global and local convergence of the iterates can be proven for convex and
non-convex penalties, respectively \citep{Bauschke2011,Bayram2016a,She2009}. 
In order to solve the primal update fast and robustly, 
we provide penalty-dependent initial parameter estimates $\widetilde{\btheta}^{(i,0)}$. 
For the convex penalties \textsf{II-III}, we employ a ``warm start" strategy and set 
$\widetilde{\btheta}^{(i,0)} =\widetilde{\btheta}^{(i-1)}$. When the non-convex penalty \textsf{I} is
used, we set $\widetilde{\btheta}^{(i,0)} = \ddot{\bdelta}$, which is the solution of our robust
initialization procedure, detailed in Section \ref{subsec:init}. This robust solution is also used to
construct weights for penalty \textsf{II}. Following \citet{zou2006}, we set $\w = |\ddot{\bdelta}|^{-\nu}$ with $\nu=1$.
For penalties \textsf{I} and \textsf{III}, the weights are set to $\w = \1_{n+p}$. 

The convergence of the dual descent approach naturally depends on the convergence of the algorithms
solving the subproblems. We refer to \citet{Bertsekas1982}, Proposition 2.1 - 2.3, for 
convergence guarantees regarding the Lagrangian approach. With default parameters $\alpha = 0.95$
and $\nu = 1$, we see fast and robust performance of the algorithm under a wide range of simulation
and application scenarios. We summarize the estimation procedure in Algorithm \ref{alg:rbest}.

\begin{algorithm}[!h]
\caption{Robust regression with compositional covariates} % in the sparse model}
\begin{algorithmic}\label{alg:rbest}
\STATE Given: $\X,\, \y, \, \lambda >0, \, \nu=1, \, \alpha = 0.95$.
\STATE Choose penalty function $\M{P}_{\lambda}(\cdot)$ and  threshold operator $\Theta_{\lambda}(\cdot)$ from Table \ref{tab:thresh}. 
%, e.g., Case \textsf{I, II, or III}.%,  $\{P_s(\cdot;\lambda), P_a(\cdot;\lambda), P_0(\cdot;\lambda)  \}$. % and its induced thresholding function $\{ \Theta_{s}(\cdot ;\lambda), \Theta_{a}( \cdot;\lambda,\w), \Theta_{0}( \cdot ;\lambda)  \}$
%\STATE Thresholding operator: $\Theta_{\lambda}(\cdot)$;  see Table \ref{tab:thresh}. 
%\vspace{0.5cm}_{\lambda}
%\STATE Initialization: $ \widetilde{\btheta}^{(0)} =  \ddot{\bs \delta} $; see Section \ref{subsec:init}.    % and  $\btheta^{(0)}  = \ddot{\bbeta}$
%\IF{($\M{\textsf{Sparse}}\,\, \btheta$)}
\STATE Set i=0, $ \widetilde{\btheta}^{(0)} =  \ddot{\bs \delta} $ (initialization) and  compute the scaling constant $k_0$. % \qquad \qquad \COMMENT{} 
%\STATE \underline{\textsf{outer loop}} $\widetilde{\btheta}^{(0)} =  \ddot{\bs \delta} [\ddot{\bbeta} \,\, \ddot{\bgamma}]$,
\REPEAT 
\STATE Define $\{\widetilde{\X}, \widetilde{y}\}$ using \eqref{eq:simObjfun}, and set  $j=0$; $\widetilde{\btheta}^{(i,0)} = \widetilde{\btheta}^{(i)}$ (\textsf{case II and III});  $\widetilde{\btheta}^{(i,0)} = \ddot{\bs \delta}$ (\textsf{case I}) %   [\ddot{\btheta} \,\, \ddot{\bgamma}]
\REPEAT
    \STATE $
\widetilde{\btheta}^{(i,j+1)} = \Theta_{\lambda}\Bigg[  \frac{\widetilde{\X}\trans \widetilde{\y}}{k_0} +\Bigg( \I - \frac{\widetilde{\X}\trans \widetilde{\X}}{k_0} \Bigg) \widetilde{\btheta}^{(i,j)} \Bigg]
$
\STATE $j\gets j+1$. 
\UNTIL{convergence} 
\RETURN $\widetilde{\btheta}^{(
i+1)}$
 %see \eqref{eq:objfun4}.
\vspace{0.2cm}
\STATE $\eta^{(i+1)} =  \eta^{(i)} + \C\trans {\btheta}^{(i+1)}$ and then $i\gets i+1$.
\UNTIL{convergence}
%\ENDIF
\end{algorithmic}
\end{algorithm}

\subsection{Robust initialization}
\label{subsec:init}
Robust estimation procedures for linear models, including S-estimators \citep{rousseeuw1984robust}, MM-estimators
\citep{Yohai1987}, the $\Theta$-IPOD \citep{she2011outlier}, and the Penalized Elastic-Net S-Estimator (PENSE) 
\citep{CohenFreue2017}, are multi-stage estimators that comprise an initialization stage and one or several
improvement stages. A common theme for the initialization stage is the use of resampling-based approaches in
combination with robust loss functions \citep{Maronna2006,salibian2006fast}. While most methods assume the model
coefficients to be dense, a variant of the $\Theta$-IPOD as well as the PENSE encourage sparse coefficients.
However, the latter methods operate under the standard linear model and are not suited for the compositional
setting. When solving the optimization problem in \eqref{eq:objFUNnvar} with penalty \textsf{I} or \textsf{III},
RobRegCC requires an initial estimate of the coefficients and the mean shift vector. Here, we apply the concept
of principal sensitivity components (PSC) \citep{pena1999fast} to the log-contrast model and propose the
following resampling-based approach to robust initialization. 

In the linear model, PSC analysis has been introduced for ordinary least squares \citep{pena1999fast} 
and extended to robust ridge regression \citep{maronna2011robust} and robust sparse regression
\citep{CohenFreue2017}. PSC analysis relies on the idea of leave-one-out sensitivities of the following form.
Given all samples, let $\what{y}_i$ be the estimated prediction of the model under consideration for observation
$i$, and $\what{y}_{i(j)}$ the corresponding prediction value with the $j$th observation removed. The sensitivity
of the $i$th observation is then defined as
\begin{align}
\r_{i} = [\what{y}_i - \what{y}_{i(1)},\ldots,\what{y}_i - \what{y}_{i(n)}]\trans , \qquad i=1,\ldots,n . 
\label{eq:psc}
\end{align}
The sensitivity matrix of all observation is defined as $\R = [\r_{1},\ldots,\r_{n}] \in \mathbb{R}^{n \times n}$. 
To identify potential outliers in the data, PSC analysis proceeds by first computing 
an eigenvalue decomposition of the matrix $\R \R^\trans$. The eigenvectors $\u_i$ of that matrix are called 
the principal sensitivity components (PSCs) of the matrix $\R$ \citep{pena1999fast}. Observations that comprise 
extreme values with respect to the PSCs are deemed outliers and removed from the samples. 

In \textsf{RobRegCC}, we adopt the protocol of PENSE (see \citep{CohenFreue2017} 2.2.) and propose a PSC-based
analysis on the standard sparse log-contrast model \citep{shi2016regression}, resulting in initial coefficient
estimates on potentially outlier-free subsamples. We provide the exact computational protocol in Section \ref{suppl:robust} of the Supplementary Materials. 
The final outcome of the robust initialization procedure is the estimate $\ddot{\bdelta} = [\ddot{\bbeta}\trans \,
\ddot{\bgamma}\trans]\trans$ which serves as initial starting 
point of the non-convex optimization procedure underlying \textsf{RobRegCC} with penalty $P^H$ and forms the basis
for weight construction in the adaptive Elastic Net penalty $P^A$, respectively.    

\subsection{Robust cross-validation model selection}
\label{subsec:mdlSel}
An essential part of the \textsf{RobRegCC} workflow is data-driven tuning of the regularization parameter
$\lambda$. Due to the lack of model selection criteria for the log-contrast model with corrupted
observations, we introduce a novel robust cross-validation (R-CV) strategy. 

We consider a $\lambda$-path with log-linearly spaced $\lambda$ values in the interval
$[\lambda_\text{min},\lambda_\text{max}]$. We set the upper bound $\lambda_\text{max} =
\M{max}(|[\y\,\,\, \X\trans\y ]|/\w)$ and $\lambda_\text{min}$ to be the fraction of
$\lambda_\text{max}$ at which the mean shift parameter $\hat \bgamma$ comprises at most $n/2$ non-zeros
(i.e., potential outliers). We split the model selection training data into $k = 10$ folds and perform
cross-validation with a specifically tailored robust test statistics. For a given fold, we denote the $n_{tr}$ training 
data by $\{\y_{tr}, \X_{tr}\}$, the
$n_{te}$ test data by $\{\y_{te}, \X_{te}\}$, and the parameters estimates on the training data at a given $\lambda$ by
$\{\what{{\btheta}}_{\lambda},\what{\bgamma}_{\lambda}\}$. As the standard mean-squared error is not an appropriate error
measure in robust regression, we first compute the robust scale estimate $\hat s_{tr,\lambda}$ on the training data using
\[
\hat s_{tr,\lambda} = \sqrt{\|\hat \bepsilon_{tr,\lambda}\|^2/n_{tr}}\\ 
\quad \text{with} \quad \hat \bepsilon_{tr,\lambda} = \y_{tr} - \X_{tr}\what{\btheta}_{\lambda} - \what{\bgamma}_{\lambda} \,.
\]
The non-robust test sample residual is $\hat \bepsilon_{te,\lambda} = \y_{te} - \X_{te}\what{\btheta}_{\lambda}$. 
To account for unknown outliers in a test sample, we calculate the scaled test error
\[
\hat \r_{te,\lambda} = \hat \bepsilon_{te,\lambda}/\hat s_{tr,\lambda}
\]
and then derive the robust scale estimate for the test sample using the median absolute deviation (MAD) \citep{rousseeuw2011robust}:
\[
\hat{s}_{te,\lambda} = \text{MAD}(\hat \r_{te,\lambda})\,. 
\]
This scale estimate enables the removal of outliers in the test sample. Let 
$r_{te,j,\lambda}$ be the residual of the $j$th test sample. 
The set of outlier in the test sample $O_{te}$ is 
\[
 O_{te} = \{j \in \{1,\ldots, n_{te}\} \, | \quad |r_{te,j,\lambda} | > 2 \hat{s}_{te,\lambda} \}. 
\]
After removing outliers $O_{te}$ from the test sample, we denote the ``clean" 
test data by $\{\y^c_{te}, \X^c_{te}\}$ and the ``clean" residual by $\hat \bepsilon^c_{te,\lambda} = \y^c_{te} -
\X^c_{te}\what{\btheta}_{\lambda}$. We calculate the standard deviation $\hat \sigma^c_{te,\lambda}$ of the residual 
and introduce the following test statistics for robust cross-validation:
\begin{equation}
\hat \epsilon^c_{te,\lambda} = |\hat \sigma^c_{te,\lambda} -1|
\label{eq:teststats}
\end{equation}
Let $\hat \epsilon^c_{te,i,\lambda}$ be the robust test statistics for fold $i$. We select the tuning parameter 
$\lambda_\text{r-cv}$ that minimizes the average k-fold robust cross-validation error $\bar \epsilon^c_{te,\lambda} = 
\frac{1}{k} \sum_{i=1}^{k} \hat \epsilon^c_{te,i,\lambda}$. 

Figure \ref{fig:modelfitdiag} illustrates the typical behavior of R-CV model selection over the $\lambda$-path for
simulated data (see Section \ref{sec:siml} for details). 

\begin{figure}[htp]
\begin{center}
\includegraphics[width=1\textwidth, angle=0]{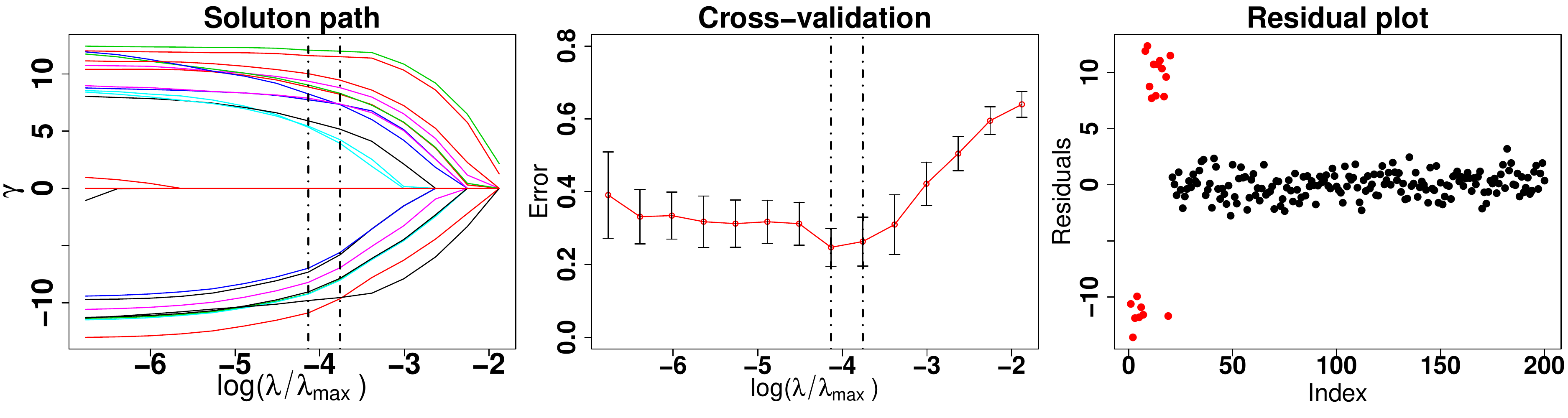}
\caption{\textsf{R-CV model selection}: Left panel: Solution path of the mean shift parameter estimates $\hat \bgamma$;
middle panel: R-CV error $\bar \epsilon^c_{te,\lambda}$ (and standard error over k=10 folds) across the $\lambda$-path
using \textsf{RobRegCC} with adaptive Elastic-net penalty. Vertical dashed line corresponds to the minimum R-CV error
(left) and the one standard error (1SE) rule; right panel: Residuals $\hat \bepsilon_{te,\lambda_\text{r-cv}}$ with
outliers (red)  identified using \textsf{RobRegCC} at minimum R-CV error.}\label{fig:modelfitdiag}
\end{center}
\end{figure}

\begin{remark}
\label{remark:ote}
We highlight that the robust test statistics in \eqref{eq:teststats} is also a useful measure for judging the 
performance of out-of-sample prediction, i.e., testing predictive power of an estimator on a hold-out (or validation)
set after model selection. We will use this measure in simulation and real data analysis in Sections~\ref{sec:siml} and \ref{sec:application}. 
\end{remark}

\section{Non-asymptotic Analysis}\label{sec:nasyanal}
We observe that the number of unknown parameters in the robust log-contrast regression model  
increases linearly with the sample size $n$.
Hence, a finite sample analysis is desirable to understand the effect of the number of samples $n$, 
number of predictors $p$, and linear constraints $k$ on the model prediction error. 
For simplicity, we perform the analysis of the robust model  \eqref{eq:cp-Rslr2} with 
only compositional covariates $\Z$, i.e, % Consequently, we consider the model   
\begin{align}
\y =  \Z\bbeta^*+ \bgamma^* + \bepsilon, \qquad \M{s.t.} \qquad \C\trans \bbeta^*= \0, \label{eq:trumdl}
\end{align}
where $\bbeta^* \in \mathbb{R}^p$ is the true coefficient, $\bgamma^* \in \mathbb{R}^n$ is the true mean shift,
and $\bepsilon$ is the IID sub-Gaussian error with mean zero and variance $\sigma^2$. $\T^* = \bs\mcJ(\bgamma^*)$
and  $\S^* = \bs\mcJ(\bbeta^*)$ denote the support index sets of $\bgamma^*$ and $\bbeta^*$ such that $|\T^*| = t^*$
and $|\S^*| = s^*$, respectively.

From the general RobRegCC model formulation \eqref{eq:objfun21}, the optimization problem for the reduced model \eqref{eq:trumdl} is given by 
\begin{align}
\{\what{\bgamma}_{\lambda_1},\what{\bbeta}_{\lambda_2}\} \equiv \argmin_{\{\bgamma,\bbeta\}} \Bigg\{ \frac{1}{2} \| \y
- \Z \bbeta - \bgamma \|_2^2 & + P_{\lambda_1}^1(\bgamma) + P_{\lambda_2}^2(\bbeta) \Bigg\}   
 \quad \M{s.t.} \quad \C\trans \bbeta = \0 \,. \label{eq:objfun21x}  
\end{align}
In order to focus on the core issue, we consider dropping the quadratic component of the penalty functions (equivalent
to setting $\alpha = 1$ as defined in Table \ref{tab:thresh}) with separate tuning parameters $\{\lambda_1, \lambda_2\}$
for penalizing $\{\bgamma, \bbeta\}$, respectively. Hereafter, for the ease of notation, we drop the subscript and
denote the optimal solution of \eqref{eq:objfun21x} by  $\{\what{\bbeta}, \what{\gamma}\}$. 

We define the model prediction error as $\M{M}(\what{\bbeta} - \bbeta^*, \what{\bgamma} - \bgamma^*)$ where $\M{M}(\a,\b) = \| \Z\a + \b \|_2^2 $. Our analysis upper-bounds the prediction error in terms of the model bias and
variance. %Theorem \ref{sec:non-asy:th1} provides the finite sample oracle inequality of the prediction error regardless of the type of penalty. 
For the unrestricted predictor matrix $\Z$, Theorem \ref{sec:non-asy:th1} provides a \textsf{slow rate bound} on the
prediction error regardless of the type of sparsity-inducing penalty functions. Consequently, remark following the
theorem states the oracle bound in case of $\ell^0$ norm penalty (\textsf{case I}). Theorem \ref{sec:non-asy:th2} 
provides the result to attain the required oracle bound in case of $\ell^1$ penalty under a compatibility condition on
$\Z$, also referred as \textsf{fast rate bound}. Moreover, under some
additional regularity assumptions, the finite sample analysis of the prediction error can be extended i) to perform the
asymptotic analysis; ii) to obtain the estimation error bound in various norms; and iii) to establish selection
consistency of the parameter estimates \citep{lounici2011oracle}. 

The proofs of our theorems rely on and extend prior work, in particular
\citet{Shea,She2017b,she2017selective}. 
\begin{theorem}\label{sec:non-asy:th1}
Consider the tuning parameter $\lambda_1 = A\lambda_a$  and $\lambda_2 = A\lambda_b$ with $\lambda_a =
\sigma(\log(en))^{1/2}$, $\lambda_b = \sigma(\log(ep))^{1/2}$, and $A = \sqrt{ab}A_1$  for a sufficiently large
$A_1$ satisfying $a \geq 2b > 0$.   In terms of the optimal solution $\{\what{\bbeta}, \what{\bgamma}\}$ of the
optimization problem \eqref{eq:objfun21x}, we have 
\begin{align*}
M(\what{\bbeta}-\bbeta^*,  \what{\bgamma} - \bgamma^*) \,\, \lesssim \,\, M(\bbeta-\bbeta^*,  \bgamma -
\bgamma^*)   &  + P_{\lambda_1}^1(\bgamma) + P_{\lambda_2}^2(\bbeta) +  (3-k)\sigma^2, % + 2a\sigma^2 L (p-k)
\end{align*}
for any $\{{\bbeta}, {\bgamma}\}$. 
Here $\lesssim$ means the inequality holds up to a multiplicative constant. 
\end{theorem}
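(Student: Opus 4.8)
The plan is to exploit the fact that $\{\what{\bbeta},\what{\bgamma}\}$ is a minimizer of the objective in \eqref{eq:objfun21x}, so that comparing its objective value to that of any feasible competitor $\{\bbeta,\bgamma\}$ (both satisfy $\C\trans\bbeta=\0$) yields the ``basic inequality''
\[
\tfrac12\|\y-\Z\what{\bbeta}-\what{\bgamma}\|_2^2 + P_{\lambda_1}^1(\what{\bgamma}) + P_{\lambda_2}^2(\what{\bbeta})
\;\le\;
\tfrac12\|\y-\Z\bbeta-\bgamma\|_2^2 + P_{\lambda_1}^1(\bgamma) + P_{\lambda_2}^2(\bbeta).
\]
Substituting the model $\y=\Z\bbeta^*+\bgamma^*+\bepsilon$ and expanding both squared norms, the quadratic terms reorganize into $M(\what{\bbeta}-\bbeta^*,\what{\bgamma}-\bgamma^*)$ on the left and $M(\bbeta-\bbeta^*,\bgamma-\bgamma^*)$ on the right, and what remains is a cross term that is linear in $\bepsilon$, namely an inner product of $\bepsilon$ with $(\Z(\what{\bbeta}-\bbeta)+(\what{\bgamma}-\bgamma))$ (up to the standard factor of $2$). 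The whole game is then to control this stochastic cross term by the penalties plus a variance-type remainder, without invoking any restricted-eigenvalue or compatibility condition — this is what makes the bound a ``slow rate.''

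First I would split the cross term as $\langle\bepsilon,\Z(\what{\bbeta}-\bbeta)\rangle + \langle\bepsilon,\what{\bgamma}-\bgamma\rangle$. For the mean-shift part, since $\what{\bgamma}-\bgamma$ lives in $\bbR^n$, I would bound $\langle\bepsilon,\what{\bgamma}-\bgamma\rangle \le \|\bepsilon\|_\infty\,\|\what{\bgamma}-\bgamma\|_1$ and control $\|\bepsilon\|_\infty$ by $\lesssim\sigma\sqrt{\log(en)}=\lambda_a$ on a high-probability event, using a standard maximal inequality for $n$ sub-Gaussian variables. For the coefficient part, after writing $\bbeta=\bP_{\C}^\perp\btheta$ one effectively works in the $p$-dimensional (or $(p-k)$-dimensional) image space; I would bound $\langle\bepsilon,\Z(\what{\bbeta}-\bbeta)\rangle \le \|\Z\trans\bepsilon\|_\infty\,\|\what{\bbeta}-\bbeta\|_1$ (exploiting the normalization of columns of $\Z$ to $\sqrt n$) and control $\|\Z\trans\bepsilon\|_\infty\lesssim\sigma\sqrt{n}\sqrt{\log(ep)}$, i.e. of order $\sqrt n\,\lambda_b$, again by a union bound over $p$ coordinates. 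The choice $\lambda_1=A\lambda_a$, $\lambda_2=A\lambda_b$ with $A$ large is exactly what lets these $\ell_1$-times-noise terms be absorbed. One then invokes the penalty decomposition $P_\lambda(\bs\delta)=P_{\lambda_1}^1(\bgamma)+P_{\lambda_2}^2(\bbeta)$ together with a triangle-type bound relating $\|\what{\bgamma}-\bgamma\|_1$, $\|\what{\bgamma}-\bgamma^*\|_1$ and the $\ell_1$ penalty on $\what{\bgamma}$, so that part of the cross term is dominated by $P_{\lambda_1}^1(\what{\bgamma})$ (moved to the left and cancelled against the penalty already there) and the rest by $P_{\lambda_1}^1(\bgamma)$ on the right; the same for $\bbeta$. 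The residual $(3-k)\sigma^2$ arises from the leftover quadratic-in-$\bepsilon$ pieces and the degrees-of-freedom bookkeeping associated with the $k$ linear constraints reducing the effective dimension of $\btheta$.

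A subtlety is that this argument must be made to work uniformly for \emph{every} penalty type I–III (with $\alpha=1$), i.e. for $\ell_0$, weighted $\ell_1$, and plain $\ell_1$; the cleanest route, following \citet{Shea,She2017b,she2017selective}, is to keep the penalty abstract and use only the inequality $P_{\lambda}^1(\bu)-P_{\lambda}^1(\bv)\le (\text{slope})\cdot\|\bu-\bv\|_1$-type bounds together with subadditivity, so the noise terms dominate regardless of the explicit form; for the $\ell_0$ case one instead bounds $\langle\bepsilon,\cdot\rangle$ on a fixed support of size $t^*+s^*$ using a $\chi^2$-type tail and a union over supports, which is where the $\log(en)$ and $\log(ep)$ factors re-enter. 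The main obstacle I anticipate is \emph{not} the sub-Gaussian concentration (routine) but the careful handling of the constraint $\C\trans\bbeta=\0$ throughout: one must verify that restricting to the constrained subspace $\bP_{\C}^\perp\bbR^p$ does not inflate the $\ell_1$–$\ell_\infty$ duality constants, and that the variance remainder genuinely gains the factor $(3-k)$ rather than $3$ — i.e. that the $k$ constraints are actually ``used up'' in the prediction-error accounting. Making that bookkeeping rigorous, and threading it through all three penalties simultaneously, is the part that will require the most care; the rest is assembling the pieces and optimizing the constant $A_1$ under the stated $a\ge 2b>0$ condition.
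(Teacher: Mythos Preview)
Your basic-inequality setup is correct and matches the paper. The divergence is in how you propose to control the stochastic cross term $2\langle\bepsilon,\Z\bs\Delta^{(\beta)}+\bs\Delta^{(\gamma)}\rangle$, and there the proposal has a genuine gap.

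You plan to use H\"older duality, i.e.\ $\langle\bepsilon,\bs\Delta^{(\gamma)}\rangle\le\|\bepsilon\|_\infty\|\bs\Delta^{(\gamma)}\|_1$ and $\langle\bepsilon,\Z\bs\Delta^{(\beta)}\rangle\le\|\Z\trans\bepsilon\|_\infty\|\bs\Delta^{(\beta)}\|_1$, and then absorb $\|\cdot\|_1$ into the penalty via triangle inequalities. That argument is fine when $P^1,P^2$ are $\ell_1$-type, but the theorem is stated for \emph{all} three penalties simultaneously, including $\ell_0$. For the $\ell_0$ penalty $P_{\lambda_1}^1(\bgamma)=\tfrac{\lambda_1^2}{2}\|\bgamma\|_0$ there is simply no way to dominate $\lambda_a\|\bs\Delta^{(\gamma)}\|_1$ by $P_{\lambda_1}^1(\bgamma)-P_{\lambda_1}^1(\what{\bgamma})$; the $\ell_1$ norm of the difference and the $\ell_0$ norms of the endpoints are incomparable. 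You notice this and gesture at a separate $\chi^2$-plus-union-over-supports argument for $\ell_0$, but that separate argument is precisely what needs to be the \emph{unified} device. The paper does not use H\"older at all. Instead it introduces the hard-threshold surrogate penalty $P_\lambda^h$ (which is subadditive and satisfies $P_\lambda^h\le P_\lambda$ for every penalty in cases I--III) and proves a single empirical-process lemma bounding $\sup\{2\langle\bepsilon,\Z\bbeta+\bgamma\rangle-\tfrac1a\|\Z\bbeta+\bgamma\|_2^2-\tfrac1bP_{\lambda_1}^h(\bgamma)-\tfrac1bP_{\lambda_2}^h(\bbeta)\}$ via covering of the constrained support classes. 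The $P^h$ lower bound is the missing idea that makes one argument cover all penalties; without it your proof splits into cases and the $\ell_0$ case is not actually written.

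Second, your explanation of the $(3-k)\sigma^2$ term is incorrect. After expanding both squared norms in the basic inequality the $\|\bepsilon\|_2^2$ terms cancel exactly, so there are no ``leftover quadratic-in-$\bepsilon$ pieces.'' In the paper the constant $(3-k)$ comes out of the entropy count in the empirical-process lemma: the union bound runs over supports $\S$ of $\bbeta$ subject to $\C\trans\bbeta=\0$, and the combinatorics $s-k+\log\binom{p}{k}+\log\binom{p-k}{s-k}\le s\log(ep)+(2-k)$ is where the $k$ constraints enter. The H\"older route never sees this subspace structure (it bounds $\|\Z\trans\bepsilon\|_\infty$ over all $p$ coordinates regardless of $\C$), so it cannot produce the $-k$ improvement and your paragraph about ``degrees-of-freedom bookkeeping'' would not go through as stated.
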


% \begin{corollary}\label{corollary:1}
% It then immediately follows from the result in Theorem \ref{sec:non-asy:th1} that
% \begin{align*}
% M(\what{\bbeta}-\bbeta^*,  \what{\bgamma} - \bgamma^*) \,\, \lesssim \,\, \inf_{(\bbeta \, , \, \bgamma)} \,\,\, M(\bbeta-\bbeta^*,  \bgamma - \bgamma^*)  &  + P_{\lambda_1}^1(\bgamma) + P_{\lambda_2}^2(\bbeta)  + \\ & (3-k)\sigma^2
% % + \sigma^2+ 2a\sigma^2 L (p-k)  + P(\bgamma;\lambda).
% \end{align*}
% \end{corollary}
\begin{remark}\label{remark:1}
Consider $P_{\lambda_1}^1(\bgamma) \lesssim \lambda_1^2$ and $P_{\lambda_2}^2(\bbeta) \lesssim \lambda_2^2$. In
case of $\bbeta = \bbeta*$ and $\bgamma = \bgamma*$,  it follows from Theorem \ref{sec:non-asy:th1} that 
\begin{align}
M(\what{\bbeta}-\bbeta^*,  \what{\bgamma} - \bgamma^*) \,\, &  \lesssim \,\, (3-k)\sigma^2 + P_{\lambda_1}^1(\bgamma^*) + P_{\lambda_2}^2(\bbeta^*) 
 \lesssim \,\, (3-k)\sigma^2 + \frac{\lambda_1^2}{2}  t^* + \frac{\lambda_2^2}{2}  s^*. \label{theres:oracle}
%\\& \lesssim \,\,  \sigma^2 (p-k)  +  \sigma^2 |\mcJ^*| (1+ \log{n})
\end{align}
The oracle bound suggests that, with moderate number of outlier, the dependence of variance on $t^*$ allows the 
parameter estimate to reduce model bias. 
\end{remark}

\begin{corollary}\label{corollary:2}
Let us assume $0 \log 0 = 0$. Following the proofs of Theorem \ref{sec:non-asy:th1} and  Lemma \ref{lemma:lm1}
in the SM, we have 
\begin{align*}
M(\what{\bbeta}-\bbeta^*,  \what{\bgamma} - \bgamma^*) \, \lesssim \, \inf_{(\bbeta  , \bgamma ;\, t \leq \vartheta )} \, M(\bbeta & -\bbeta^*,  \bgamma - \bgamma^*)   + \sigma^2 + P_{\lambda_2}^2(\bbeta) + 2a L \sigma^2 \Big(  \vartheta  + \vartheta  \log \frac{en}{\vartheta } +2 -k \Big),
\end{align*}
where $t$ is cardinality of $\bs\mcJ(\bgamma)$.
\end{corollary}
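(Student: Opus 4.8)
The plan is to obtain a version of Theorem \ref{sec:non-asy:th1} in which the $\ell^0$-type penalty term $P^1_{\lambda_1}(\bgamma)$ has been replaced, in the upper bound, by an explicit function of the outlier cardinality $t = |\bs\mcJ(\bgamma)|$, and then to optimize the resulting bound over all competitors $(\bbeta,\bgamma)$ whose outlier count does not exceed a prescribed budget $\vartheta$. The starting point is the conclusion of Theorem \ref{sec:non-asy:th1}, which for any competitor $(\bbeta,\bgamma)$ gives
\[
M(\what{\bbeta}-\bbeta^*, \what{\bgamma}-\bgamma^*) \lesssim M(\bbeta-\bbeta^*, \bgamma-\bgamma^*) + P^1_{\lambda_1}(\bgamma) + P^2_{\lambda_2}(\bbeta) + (3-k)\sigma^2 .
\]
With $\alpha=1$ the hard-ridge penalty reduces to $P^1_{\lambda_1}(\bgamma) = \tfrac{1}{2}\lambda_1^2 \sum_i \kappa_i^2 \|\gamma_i\|_0 = \tfrac{1}{2}\lambda_1^2 k_1^2\, t$, since every coordinate of $\bgamma$ carries the common factor $\kappa_i = k_1 = \sqrt{\log(en)}$. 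Substituting $\lambda_1 = A\lambda_a = A\sigma\sqrt{\log(en)}$ with $A = \sqrt{ab}\,A_1$ and $a\ge 2b>0$ turns this term into a quantity of order $a\,\sigma^2\, t\,\log(en)$ up to the absolute constant $b A_1^2/2$, which I will absorb into a constant $L$. So the first step is simply this bookkeeping: rewrite $P^1_{\lambda_1}(\bgamma) \lesssim a L\sigma^2\, t\,\log(en)$.

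The second step is the refinement hinted at by the reference to Lemma \ref{lemma:lm1} in the Supplementary Materials. A naive bound of the form $t\log(en)$ is not the sharp oracle rate; the correct behavior is $t\log(en/t)$, which is what appears in the statement. I would obtain this by re-running the relevant step of the proof of Theorem \ref{sec:non-asy:th1} — the step that controls the stochastic term involving $\bepsilon^\top(\text{projection onto a }t\text{-sparse coordinate subspace})$ — using a union bound over the $\binom{n}{t}$ possible support sets of size (at most) $t$ rather than over all $n$ coordinates. The entropy estimate $\log\binom{n}{t}\le t\log(en/t)$ then yields a penalty surrogate of the form $2aL\sigma^2\big(\vartheta + \vartheta\log(en/\vartheta)\big)$ once one restricts to $t\le\vartheta$; the convention $0\log 0 = 0$ is exactly what makes the $t=0$ (no-outlier) endpoint well defined and lets the bound degrade gracefully. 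The additive constant $2-k$ versus $3-k$ is an artifact of folding one unit of $\sigma^2$ into the explicit $\sigma^2$ term written separately in the corollary's statement; I would track the constants carefully to land on the stated form.

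The final step is to take the infimum over competitors. Since the right-hand side of the refined inequality, apart from the term $M(\bbeta-\bbeta^*,\bgamma-\bgamma^*) + P^2_{\lambda_2}(\bbeta)$, depends on $\bgamma$ only through the monotone-increasing-in-$t$ surrogate $2aL\sigma^2(\vartheta+\vartheta\log(en/\vartheta)+2-k)$, and since the inequality holds simultaneously for every competitor, I may replace the $\bgamma$-dependent bias term by its infimum over the constraint set $\{t\le\vartheta\}$ while keeping the surrogate fixed at its value at the budget $\vartheta$. Combining gives precisely
\[
M(\what{\bbeta}-\bbeta^*,\what{\bgamma}-\bgamma^*) \lesssim \inf_{(\bbeta,\bgamma;\, t\le\vartheta)} M(\bbeta-\bbeta^*,\bgamma-\bgamma^*) + \sigma^2 + P^2_{\lambda_2}(\bbeta) + 2aL\sigma^2\Big(\vartheta + \vartheta\log\tfrac{en}{\vartheta} + 2 - k\Big).
\]
The main obstacle is the second step: the peeling/union-bound argument that upgrades the crude $t\log(en)$ control of the Gaussian-width term into the entropy-sharp $t\log(en/t)$ rate, uniformly over all outlier configurations of size at most $\vartheta$, while simultaneously keeping the dependence on $\bbeta$ (through $\S^*$ and the constraint $\C^\top\bbeta = \0$) cleanly separated. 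Everything else is constant-chasing on top of Theorem \ref{sec:non-asy:th1} and its supporting lemma.
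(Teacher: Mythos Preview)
Your proposal is essentially correct and matches the paper's intended derivation: the paper gives no separate proof of the corollary, and the line you identify—stopping in the proof of Lemma~\ref{lemma:lm1} at the intermediate inequality $t+2\log\binom{n}{t}\le t+2t\log(en/t)$ (just before this is coarsened to $\tfrac{1}{b}P^{h}_{\lambda_1}$), and then carrying that form through the basic inequality of Theorem~\ref{sec:non-asy:th1}—is exactly what ``following the proofs'' means here. One small clarification: you cannot literally start from the \emph{conclusion} of Theorem~\ref{sec:non-asy:th1} and then ``upgrade'' $t\log(en)$ to $t\log(en/t)$ after the fact; rather, you re-enter the proof at \eqref{eq:lemma1:main1} and never perform the lossy step, splitting $|\bs\mcJ(\bs\Delta^{(\gamma)})|\le t+\hat t$ and absorbing the $\hat t$-part into $-P^{1}_{\lambda_1}(\what\bgamma)$ just as the $\bbeta$-part is handled.
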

\begin{remark}\label{remark:2}
Let us assume that $\y$ is obtained after corrupting  $\vartheta$ outcomes in the true generating model $\y^* =
\Z\bbeta^* + \bepsilon$.
We define the breakdown point of the robust model as  $\epsilon^*(\what{\bbeta},\what{\bgamma}) =
\min\{\vartheta/n\,; \sup_{ |\bs\mcJ(\bgamma^*)| \leq {\vartheta}} \mathbb{E}[M(\what{\bbeta}-\bbeta^*, 
\what{\bgamma} - \bgamma^*) ] = \infty\}$.  From the  Corollary \ref{corollary:2}, it follows that the finite
sample breakdown point of the robust model is given by  $\epsilon^* \geq (\vartheta+1)/n$.
\end{remark}
%
%So far our non-asymptotic analysis have not placed any restriction on feature matrix $\Z$ providing \emph{slow rate bound} on prediction error. Now, we proceed with analysis under a condition similar to restricted eigenvalue assumption on feature matrix $\Z$. 
%
The parameter estimate obtained by solving the optimization problem \eqref{eq:objfun21x} with $\ell^1$ penalty attains the required oracle bound under the following compatibility conditions: %on the central log-ratio transformed predictor matrix $\Z$:
\begin{itemize}
\item[\textbf{C1.}] $
(1+\nu) |\bgamma_{\T}^'|_1 \leq |\bgamma_{\T^c}^'|_1 + \kappa_1 t^{1/2} \|\bP_{\bs\zeta}^{\perp}\bgamma^' \|_2
$
\item[\textbf{C2.}] $
(1+\nu) |\bbeta_{\S}^'|_1 \leq |\bbeta_{\S^c}^'|_1 + \kappa_2 s^{1/2} \|\bP_{\bs\zeta} ( \bs\zeta\bbeta^{''}+ \bgamma^') \|_2
$
\end{itemize}
for any suitable dimension $\bgamma^'$, $\bbeta^'$ $\bbeta^{''}$, and the projection matrix $\bP_{\bs\zeta}$ 
mapping the column space of $\bs\zeta \subseteq\Z$. Here, parameters $\kappa_1$, $\kappa_2$ and $\nu$ are
positive compatibility constants  . % $\bP_r \subset \bP_{\Z}$.  

\begin{theorem}\label{sec:non-asy:th2}
%\emph{Fast rate bound:}
Consider the model matrix $\Z$ in \eqref{eq:trumdl} satisfies the compatibility condition $\{\textbf{C1}, \textbf{C2}\}$,  and the tuning
parameter $\lambda_1 = A\lambda_a$  and $\lambda_2 = A\lambda_b$ with $\lambda_a =
\sigma(\log(en))^{1/2}$, $\lambda_b = \sigma(\log(ep))^{1/2}$, and $A = \sqrt{ab}A_1$ for a sufficiently large
$A_1$ satisfying $a \geq 2b > 0$. In terms of the optimal solution $\{\what{\bbeta}, \what{\bgamma}\}$ of the
optimization problem \eqref{eq:objfun21x} with $\ell_1$ penalty  i.e., $P_{\lambda_1}^1(\bgamma) =
\lambda_1|\bgamma|_1$ and $P_{\lambda_2}^2(\bbeta) = \lambda_2|\bbeta|_1$, we have 
%Let $\{\what{\bbeta}, \what{\bgamma}\}$ minimizes/solves problem \eqref{eq:objfun56} for $\ell^1$ penalty, i.e., $P(\bgamma;\lambda_1) = \lambda_1|\bgamma|_1$ and $P(\bbeta;\lambda_2) = \lambda_2|\bbeta|_1$.
%Let tuning parameter $\lambda = A\lambda_0$ be such that  there exist constant $L$, $A_1$, $C'$, $c$ using which we define $ \lambda_0 = \sigma(\log(en))^{1/2}, \, A = \sqrt{ab}A_1 \M{ and } a \geq 2b > 0$.   Given $\lambda$, let $\{\what{\bbeta}, \what{\bgamma}\}$ minimizes/solves problem \eqref{eq:objfun4} for following penalty cases:
%\begin{itemize}
%\item[(a)] 
%$\ell^1$ penalty, i.e., $P(\bgamma;\lambda) = \lambda|\bgamma|_1$. Then,
%Using the parameter defined in in Theorem \ref{sec:non-asy:th1}, for the case when penalty is of $\ell^1$ type, i.e., $P(\bgamma;\lambda) = \lambda|\bgamma|_1$, 
\begin{align*}
 M(\what{\bbeta}-\bbeta^*,  \what{\bgamma} - \bgamma^*) \lesssim    M(\bbeta-\bbeta^*,  \bgamma - \bgamma^*) & + a\lambda_1^2 (1-\theta)^2 \kappa_1^2 t +     a\lambda_2^2 (1-\theta)^2 \kappa_2^2 s  + (3-k)\sigma^2
\end{align*}
where $\theta = \nu/ (1 + \nu)$, $t = |\bs\mcJ(\bgamma)|$ and $s = |\bs\mcJ(\bbeta)|$. 

\end{theorem}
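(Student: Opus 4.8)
The plan is to derive Theorem~\ref{sec:non-asy:th2} as a refinement of Theorem~\ref{sec:non-asy:th1} by exploiting the $\ell_1$ structure of the penalties together with the compatibility conditions \textbf{C1}--\textbf{C2}. The starting point is the basic inequality coming from optimality of $\{\what{\bbeta},\what{\bgamma}\}$ in \eqref{eq:objfun21x}: since the pair is a minimizer, for any feasible $\{\bbeta,\bgamma\}$ one has $\tfrac12\|\y-\Z\what{\bbeta}-\what{\bgamma}\|_2^2 + \lambda_1|\what{\bgamma}|_1 + \lambda_2|\what{\bbeta}|_1 \le \tfrac12\|\y-\Z\bbeta-\bgamma\|_2^2 + \lambda_1|\bgamma|_1 + \lambda_2|\bbeta|_1$. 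Expanding both sides using $\y = \Z\bbeta^* + \bgamma^* + \bepsilon$ (and, on the right, subtracting $\y$ the same way) and writing $\Delta_\beta = \what{\bbeta}-\bbeta$, $\Delta_\gamma = \what{\bgamma}-\bgamma$, one obtains a bound of the form $M(\what{\bbeta}-\bbeta^*,\what{\bgamma}-\bgamma^*) \le M(\bbeta-\bbeta^*,\bgamma-\bgamma^*) + 2\langle \bepsilon, \Z\Delta_\beta + \Delta_\gamma\rangle + 2\lambda_1(|\bgamma|_1 - |\what{\bgamma}|_1) + 2\lambda_2(|\bbeta|_1 - |\what{\bbeta}|_1)$, exactly the decomposition already used (in its slow-rate form) for Theorem~\ref{sec:non-asy:th1}.

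The next step is to control the stochastic term $\langle\bepsilon, \Z\Delta_\beta + \Delta_\gamma\rangle$. Here I would reuse the sub-Gaussian maximal-inequality / concentration arguments that underlie Theorem~\ref{sec:non-asy:th1} (and the cited work \citet{Shea,She2017b,she2017selective}): on a high-probability event, $|\langle\bepsilon,\Delta_\gamma\rangle| \lesssim \lambda_a\sqrt{t^*}\,\|\Delta_\gamma\|_2$ or $\lesssim \lambda_a|\Delta_\gamma|_1$ after accounting for the projection structure, and similarly $|\langle\bepsilon,\Z\Delta_\beta\rangle| \lesssim \lambda_b|\Delta_\beta|_1$, with the scaling constants $\kappa_1,\kappa_2$ absorbing the column normalization of $\Z$. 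Combined with the choice $\lambda_1 = A\lambda_a$, $\lambda_2 = A\lambda_b$ for large $A_1$, the noise term is dominated by a fraction of $\lambda_1(|\Delta_\gamma|_1)$ plus $\lambda_2(|\Delta_\beta|_1)$. Plugging this back and using the triangle-inequality splits $|\bgamma|_1 - |\what{\bgamma}|_1 \le |\Delta_{\gamma,\T}|_1 - |\Delta_{\gamma,\T^c}|_1$ (and analogously for $\bbeta$ on $\S$) yields, after collecting terms, an inequality in which the off-support pieces $|\Delta_{\gamma,\T^c}|_1$, $|\Delta_{\beta,\S^c}|_1$ appear with a positive coefficient on the good side, producing precisely the cone-type constraint that activates \textbf{C1}--\textbf{C2}.

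Once in the cone, the compatibility conditions give $|\Delta_{\gamma,\T}|_1 \lesssim \kappa_1 t^{1/2}\|\bP_{\bs\zeta}^\perp\Delta_\gamma\|_2$ and $|\Delta_{\beta,\S}|_1 \lesssim \kappa_2 s^{1/2}\|\bP_{\bs\zeta}(\bs\zeta\Delta_\beta + \Delta_\gamma)\|_2$, which bound the relevant $\ell_1$ quantities by $\sqrt{t},\sqrt{s}$ times an $\ell_2$ norm that is itself dominated by $\sqrt{M(\what{\bbeta}-\bbeta^*,\what{\bgamma}-\bgamma^*)}$ (using that $M(\a,\b) = \|\Z\a+\b\|_2^2$ decomposes along the projection onto the column space of $\bs\zeta$ and its complement). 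This turns the master inequality into a quadratic inequality of the form $M \le M_{\mathrm{bias}} + c\big(a\lambda_1^2\kappa_1^2 t + a\lambda_2^2\kappa_2^2 s\big)^{1/2}\sqrt{M} + (3-k)\sigma^2$; solving this quadratic (the standard $x\le b + c\sqrt{x} \Rightarrow x \lesssim b + c^2$ step) delivers the claimed bound, with the factor $(1-\theta)^2 = 1/(1+\nu)^2$ emerging from the slack $\nu$ in the cone condition that was used to absorb the noise term. The main obstacle I expect is the bookkeeping around the projection $\bP_{\bs\zeta}$: the mean-shift vector $\Delta_\gamma$ lives in $\bbR^n$ and interacts with the column space of $\Z$, so one must carefully split $M$ into the part ``explained'' by $\bs\zeta$ and the orthogonal part, and verify that conditions \textbf{C1}--\textbf{C2} are stated with exactly the right projections to close the loop — in particular that the term $\bs\zeta\bbeta'' + \bgamma'$ appearing in \textbf{C2} matches the residual direction that arises when bounding $\langle\bepsilon,\Z\Delta_\beta\rangle$ after the mean-shift contribution has been peeled off. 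Getting the constants $a\ge 2b$ to propagate correctly through these two coupled cone inequalities, rather than a single one as in the classical lasso analysis, is the delicate part; the rest is the routine basic-inequality-plus-quadratic-trick argument.
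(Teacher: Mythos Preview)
Your overall strategy—basic inequality, control the stochastic term, split the $\ell_1$ penalties on/off support, invoke the compatibility conditions, then close with a quadratic-in-$\sqrt{M}$ step—matches the paper's. The one substantive difference is how the noise is handled. You propose a direct $\ell_\infty$--$\ell_1$ bound ($\langle\bepsilon,\bs\Delta^{(\gamma)}\rangle \lesssim \lambda_a|\bs\Delta^{(\gamma)}|_1$, and analogously for $\Z\bs\Delta^{(\beta)}$). The paper instead reuses Lemma~\ref{lemma:lm1} verbatim from the proof of Theorem~\ref{sec:non-asy:th1}: the stochastic term is bounded by $\tfrac1a\|\Z\bs\Delta^{(\beta)}+\bs\Delta^{(\gamma)}\|_2^2 + \tfrac1b P_{\lambda_1}^{h}(\bs\Delta^{(\gamma)}) + \tfrac1b P_{\lambda_2}^{h}(\bs\Delta^{(\beta)}) + R + 2aL\sigma^2(2-k)$, and only \emph{then} is the hard-threshold penalty relaxed via $P_{\lambda}^{h}(\cdot)\le\lambda|\cdot|_1$. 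Routing through Lemma~\ref{lemma:lm1} is what makes the constraint-dimension term $(3-k)\sigma^2$ and the constants $a\ge 2b$ carry over verbatim; your direct bound would also close the argument but with different constants and without that $k$-dependence appearing naturally.

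One minor correction: conditions \textbf{C1}--\textbf{C2} are stated as compatibility inequalities assumed to hold for \emph{all} $\bgamma',\bbeta',\bbeta''$, so there is no cone to ``enter'' or ``activate.'' The paper simply applies them as upper bounds on the quantities $(1+\nu)|\bs\Delta_{\T}^{(\gamma)}|_1 - |\bs\Delta_{\T^c}^{(\gamma)}|_1$ and $(1+\nu)|\bs\Delta_{\S}^{(\beta)}|_1 - |\bs\Delta_{\S^c}^{(\beta)}|_1$ that fall out of the penalty algebra, obtaining terms of the form $\lambda_j(1-\theta)\kappa_j\sqrt{\cdot}\,\|\Z\bs\Delta^{(\beta)}+\bs\Delta^{(\gamma)}\|_2$; it then finishes with $xy\le \tfrac{2}{a}y^2+\tfrac{a}{4}x^2$ rather than solving a quadratic—an equivalent last step.
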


\begin{remark}
Compared to the oracle bound in Equation \eqref{theres:oracle}, the variance term of the prediction error
bound in Theorem \ref{sec:non-asy:th2} differs only by a constant multiplying factor.
\end{remark}

\section{Simulation benchmarks}
\label{sec:siml}
The overall purpose of the following simulation study is to evaluate \textsf{RobRegCC}'s ability to
simultaneously detect outliers and to perform sparse covariate selection when the underlying generative
model is sparse. We follow the original simulation setup for the standard log-contrast model, put forward
in \citet{shi2016regression}, and extend it by introducing different types of outliers in the response. 
We remark that the synthetic simulation setup does not reflect all aspects of high-throughput sequencing
count data. Our simulations will be complemented by real gut microbiome data analysis in Section
\ref{sec:application}.

\subsection{Benchmark setup}
\label{subsec:setup}
Following the simulation setup in \citet{shi2016regression}, 
we generate count data $\W  = [w_{ij}] \in \bbR^{n \times p}$ by simulating $n$ instances of a multivariate
random variable $\w \sim \M{Lognormal}(\bs\mu,\bs\Sigma)$ with mean $\bs\mu = \{
\log(p/2) \, \1_5,  \,\0_{p-5} \} \in \bbR^p$ and covariance matrix $\bs\Sigma  \in \bbR^{ p \times p}$ such
that $\Sigma_{ij} = 0.5^{|i-j|}$. 
We perform total-sum normalization of the count data $\W$ and apply a log-transformation on the resulting
compositions, thus arriving at covariates  ${\Z} = [{z}_{ij}]_{n \times p}$ where $z_{ij} = \log
\frac{w_{ij}}{\sum_{j=1}^p w_{ij}}$. For simplicity, we include a single non-compositional covariate in the
form of an intercept $\bN = \1_n$. Using the generative model in \eqref{eq:cp-slr}, we define $\X = [\bN \,\,
\Z]$ and set $\bbeta^* = \{\beta_0, 1, - 0.8, 0.4, 0, 0, - 0.6, 0, 0, 0, 0, -1.5, 0, 1.2, 0, 0, 0.3,
\0_{p-16}\}$ with $\beta_0 = 0.5$ \citep{shi2016regression}.  
To model sub-compositional coherence, we consider the subcomposition constraint matrix $\C$ with $k=4$
subgroups of the form
\begin{align}
\bar{\C}\trans = \begin{bmatrix}
    \1_{p_1}\trans & \0 & \dots  & \0 \\
    \0  & \1_{p_2}\trans &  \dots  & \0 \\
    \vdots & \vdots &  \ddots & \vdots \\
    \0 & \0 &  \dots  & \1_{p_4}\trans 
\end{bmatrix}_{4 \times 23} \M{and} \qquad 
\C\trans = \begin{bmatrix}\0_{4 \times 1} & \bar{\C}\trans & \0_{4 \times \{p-23\}} \end{bmatrix}_{4 \times (p+1)}
\label{eq:defCmat2}
\end{align}
with $\s = [0,10, 16, 20, 23]$, $p_i = \s_{i+1}-\s_{i}, i=1,\ldots,8$, 
and index sets $\mathbb{A}_{i} = \{\s_{i}+1,\ldots,\s_{i+1}\}$. 

We first generate the outlier-free response $\y$ with error standard deviation $\sigma =
\|\widetilde{\Z}\bbeta\|/(\sqrt{n} \times \M{SNR})$ where the signal-to-noise ratio (SNR) is set to
$\M{SNR}=3$. For fixed $n = 200$, we examine both the low- and high-dimensional scenario using $p = \{100, 300\}$.

To evaluate the ability of \textsf{RobRegCC} to detect outliers in the response, we considered the following
scenarios for outlier generation. We used mean shift vectors $\bgamma$ 
with $\bO = \{ 20\%, 15\%, 10\%, 5\%\} n$  outliers. 
\textsf{Moderate} outliers were generated by adding a shift of 6$\sigma$ to the 
true response $\y$, and \textsf{large} outliers by adding a shift of 8$\sigma$ to the true response $\y$. 
We also considered the challenging setting where half of the $\bO$ outliers are leveraged. 
Leveraged instances in ${\Z}$ (denoted by L in the simulation scenarios) 
were obtained by modifying the entries in the count data $\W$. The first $\bO/2$ instances of $\W$ 
are replaced by leveraged observations. A leveraged observation comprises a covariate (taxon) in 
each subgroup that is inflated to a large value while the remaining taxa in the subgroup are 
deflated to small values. For each subgroup of the $k$ groups, we first identified the 
corresponding column subset matrix of $\W$, then arranged its first column in descending order after 
adding the constant $4$, and then appended the remaining columns in ascending order. 
The first $\bO/2$ instances of the rearranged matrix were the leveraged observations. We replicated 
each experimental setting $R=100$ times.

For outlier identification, we measured performance in terms of the number of  
false positive (\textsf{FP}) (``swapping") and false negative (\textsf{FN}) (``masking") outliers. 
For the standard \textsf{RobRegCC} workflow, the number of \textsf{FP} and \textsf{FN} are derived 
by comparing the true set of outliers to the support set $\bs\mcJ(\hat{\bgamma}_{\lambda_{\text{r-cv}}})$
after R-CV model selection. We denote these estimates by $\M{FP}_1$ and $\M{FN}$. 
We also provide a two-stage estimator, where we refit the standard log-contrast model
on the identified inliers, compute the standard deviation of the residuals, and redefine all samples as
inliers if their residuals are within the range of three standard deviations. The number of false positives
for the two-stage estimator is denoted by $\M{FP}_2$. Total mis-identification performance is measured
using the Hamming distance, HM = FN+$\M{FP}_1$. We determine the quality of 
\textsf{RobRegCC}'s estimated sparse regression coefficients by refitting a standard log-contrast model 
on the support of $\bs\mcJ(\hat{\bbeta}_\lambda)$ at $\lambda=\lambda_\text{r-cv}$ 
using the inlier data only. The resulting refit $\hat \bbeta_\text{rf}$ estimates are compared 
to the oracle $\bbeta^*$ via the scaled estimation error $\M{Er}({\bbeta}) = 100\|\bbeta^* - \hat \bbeta_\text{rf}\|_2/p$.

\subsection{Simulation Results}
We summarize \textsf{RobRegCC}'s performance in the setting with \textsf{large} outliers (shift $s = 8\sigma$) 
in Table \ref{tab:sequal8}. Similar results for the \textsf{moderate} outlier (shift $s = 6\sigma$) scenario are
available in Table \ref{tab:sequal6} of the Supplementary Material. For comparison, we also consider the 
standard log-contrast model without mean shift (denoted by NR).

\begin{table}[H]

\caption{\label{tab:sequal8}Comparison of the non-robust [NR] model and the robust model, i.e.,  RobRegCC with the hard-ridge [H], the Elastic Net [E], and the adaptive Elastic Net [A] penalty function, in the simulation setting with high outliers (n = 200, s = 8) using the outlier identification measures false negative (FN) and false positive (FP), and the estimation error measure Er($\bbeta$). Here $\M{FP}_1$ and $\M{FP}_2$ refers to the pre and post false positive measures.\\}
\centering
\resizebox{\linewidth}{!}{
\fontsize{7.5}{9.5}\selectfont
\begin{tabu} to \linewidth {>{\raggedleft}X>{\raggedleft}X>{\raggedleft}X>{\raggedright}X>{\raggedright}X>{\raggedright}X>{\bfseries}l>{\raggedright}X>{\raggedright}X>{\raggedright}X>{\bfseries}l>{\raggedright}X>{\raggedright}X>{\raggedright}X>{\bfseries}l>{\bfseries}l}
\toprule
\multicolumn{3}{c}{\textbf{ }} & \multicolumn{4}{c}{\textbf{[A]}} & \multicolumn{4}{c}{\textbf{[H]}} & \multicolumn{4}{c}{\textbf{[E]}} & \multicolumn{1}{c}{\textbf{[NR]}} \\
\cmidrule(l{3pt}r{3pt}){4-7} \cmidrule(l{3pt}r{3pt}){8-11} \cmidrule(l{3pt}r{3pt}){12-15} \cmidrule(l{3pt}r{3pt}){16-16}
\multicolumn{3}{c}{ } & \multicolumn{3}{c}{$\bgamma$} & \multicolumn{1}{c}{$\bbeta$} & \multicolumn{3}{c}{$\bgamma$} & \multicolumn{1}{c}{$\bbeta$} & \multicolumn{3}{c}{$\bgamma$} & \multicolumn{1}{c}{$\bbeta$} & \multicolumn{1}{c}{$\bbeta$} \\
\cmidrule(l{3pt}r{3pt}){4-6} \cmidrule(l{3pt}r{3pt}){7-7} \cmidrule(l{3pt}r{3pt}){8-10} \cmidrule(l{3pt}r{3pt}){11-11} \cmidrule(l{3pt}r{3pt}){12-14} \cmidrule(l{3pt}r{3pt}){15-15} \cmidrule(l{3pt}r{3pt}){16-16}
\rowcolor{gray!6}  L & p & O & FN & $\M{FP}_1$ & $\M{FP}_2$ & Er & FN & $\M{FP}_1$ & $\M{FP}_2$ & Er & FN & $\M{FP}_1$ & $\M{FP}_2$ & Er & Er\\
\midrule
0 & 100 & 0 & 0.00 & 3.25 & 1.47 & 1.11 & 0.00 & 5.25 & 1.82 & 1.12 & 0.00 & 9.63 & 2.35 & 1.13 & 1.04\\
\rowcolor{gray!6}  0 & 100 & 10 & 0.00 & 1.68 & 1.04 & 1.11 & 0.00 & 0.82 & 0.70 & 1.12 & 0.00 & 13.82 & 3.78 & 1.17 & 1.84\\
0 & 100 & 20 & 0.00 & 0.98 & 0.75 & 1.09 & 0.27 & 0.00 & 0.42 & 1.09 & 0.00 & 14.15 & 3.26 & 1.16 & 2.30\\
\rowcolor{gray!6}  0 & 100 & 30 & 0.00 & 0.40 & 0.54 & 1.14 & 0.42 & 0.00 & 0.35 & 1.12 & 0.00 & 16.86 & 3.47 & 1.19 & 2.65\\
0 & 100 & 40 & 0.00 & 0.00 & 0.40 & 1.16 & 1.33 & 0.00 & 0.00 & 1.18 & 0.00 & 17.62 & 3.60 & 1.23 & 3.32\\
\hline
\addlinespace
\rowcolor{gray!6}  0 & 300 & 0 & 0.00 & 4.44 & 1.78 & 0.49 & 0.00 & 7.12 & 1.88 & 0.47 & 0.00 & 9.39 & 2.26 & 0.49 & 0.47\\
0 & 300 & 10 & 0.00 & 2.28 & 1.29 & 0.50 & 0.00 & 2.11 & 1.07 & 0.50 & 0.00 & 9.47 & 2.45 & 0.51 & 0.89\\
\rowcolor{gray!6}  0 & 300 & 20 & 0.00 & 1.00 & 0.99 & 0.50 & 0.00 & 0.41 & 0.53 & 0.50 & 0.00 & 10.70 & 2.64 & 0.52 & 1.17\\
0 & 300 & 30 & 0.00 & 0.37 & 0.46 & 0.55 & 0.35 & 0.00 & 0.36 & 0.55 & 0.00 & 11.67 & 2.72 & 0.57 & 1.39\\
\rowcolor{gray!6}  0 & 300 & 40 & 0.00 & 0.45 & 0.55 & 0.55 & 1.10 & 0.00 & 0.23 & 0.55 & 0.00 & 11.88 & 2.27 & 0.60 & 1.39\\
\hline
\addlinespace
1 & 100 & 10 & 0.00 & 1.37 & 1.03 & 1.06 & 0.00 & 0.86 & 0.60 & 1.07 & 0.34 & 11.01 & 2.52 & 1.16 & 2.57\\
\rowcolor{gray!6}  1 & 100 & 20 & 0.00 & 0.42 & 0.53 & 1.11 & 0.00 & 0.00 & 0.45 & 1.13 & 2.94 & 11.74 & 2.59 & 1.47 & 2.37\\
1 & 100 & 30 & 0.00 & 0.40 & 0.46 & 1.17 & 1.03 & 0.00 & 0.30 & 1.18 & 6.84 & 10.05 & 2.47 & 1.89 & 2.62\\
\rowcolor{gray!6}  1 & 100 & 40 & 6.25 & 0.13 & 0.39 & 1.76 & 7.82 & 0.00 & 0.00 & 1.61 & 11.47 & 9.98 & 2.23 & 2.32 & 2.86\\
\hline
1 & 300 & 10 & 0.00 & 1.74 & 1.14 & 0.54 & 0.00 & 1.54 & 1.12 & 0.53 & 0.43 & 9.60 & 2.48 & 0.64 & 1.01\\
\addlinespace
\rowcolor{gray!6}  1 & 300 & 20 & 0.00 & 0.99 & 0.47 & 0.55 & 0.00 & 0.00 & 0.30 & 0.56 & 6.45 & 6.21 & 1.55 & 1.14 & 1.15\\
1 & 300 & 30 & 0.00 & 0.84 & 0.60 & 0.55 & 0.41 & 0.00 & 0.36 & 0.52 & 11.26 & 6.20 & 1.76 & 1.16 & 1.12\\
\rowcolor{gray!6}  1 & 300 & 40 & 1.44 & 0.38 & 0.35 & 0.68 & 0.99 & 0.00 & 0.28 & 0.61 & 16.64 & 7.93 & 1.63 & 1.03 & 1.05\\
\bottomrule
\end{tabu}}
\end{table}
We observed that \textsf{RobRegCC} with hard-ridge(H) and adaptive penalty(A) consistently outperformed the other
methods both in terms of outliers identification and regression coefficient estimation. 
This highlights the importance of our novel PSC-based initialization routine on subsequent estimation.
\textsf{RobRegCC} with the Elastic Net penalty (E) performs well in the absence of leveraged (L = 0) outliers
but drastically deteriorates when the outliers are leveraged ($L = 1$). All estimators were more prone to
swapping (higher $\M{FP}$ values) than masking effects. This was also reflected in 
the slightly reduced performance of \textsf{RobRegCC} compared to the standard log-contrast model in the
absence of outliers ($\M{O}=0$). Here, the NR approach achieved the best performance in scaled estimation
error for the regression coefficients ($\M{Er} = 1.04$ for NR compared to $\M{Er} = 1.11$, $\M{Er} = 1.12$,
and $\M{Er} = 1.13$, for penalties $P^A$, $P^H$, and $P^E$, respectively). This is due the fact that the
standard log-contrast model could take all sample into account whereas the other estimators suffered from small
swapping effects.

\begin{figure}[h]
\begin{center}
\includegraphics[width=1\textwidth, angle=0]{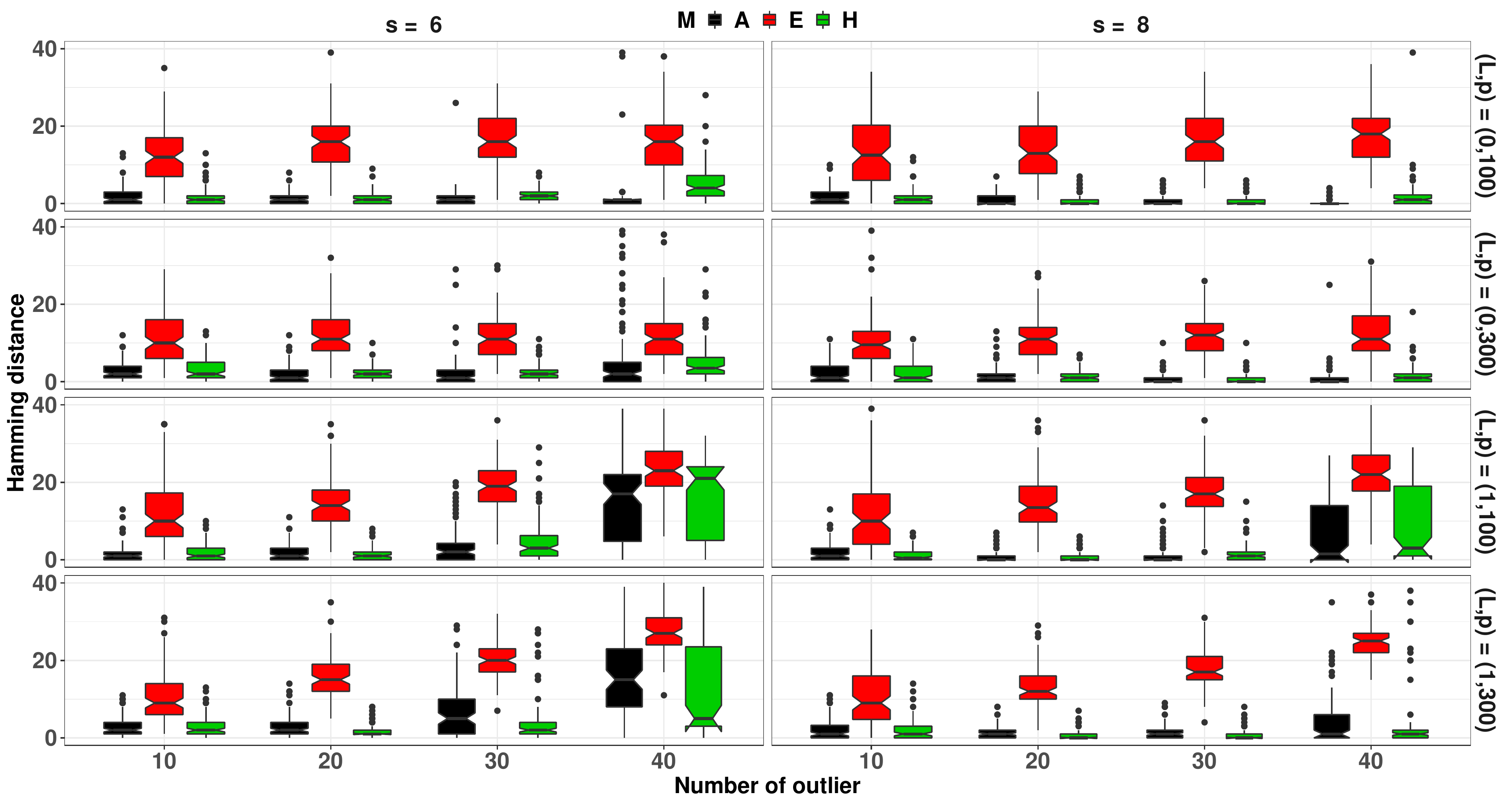}
\caption{\textsf{RobRegCC}'s outlier detection performance in terms of Hamming distance (\textsf{HM}) on simulated data ($p = \{100, 300\}$, $n=200$) with outlier observations shifted by $s = 6\sigma$ (left column) and $s = 8\sigma$ (right column) and number of outliers $\M{O} = \{10,20,30,40\}$. $\M{L} = \{0,1\}$ indicates the absence or presence of leveraged outliers.}
\label{fig:spout}
\end{center}
\end{figure}

While the overall performance of the data-driven R-CV model selection scheme was encouraging, we consistently observed 
slight over-selection of potential outliers in \textsf{RobRegCC} ($F_N \approx 0$ for most settings), in particular with 
the $P^E$ penalty. This behavior was alleviated by the heuristic two-stage estimator whose number of false positives 
$\M{FP}_2$ was consistently lower than $\M{FP}_1$ for \textsf{RobRegCC} with $P^E$. The two-stage estimator thus offers a 
computationally efficient robust estimation alternative when no leveraged outliers are present.  

Figure \ref{fig:spout} summarizes \textsf{RobRegCC}'s overall outlier detection performance across all simulation
scenarios using the Hamming distance. We again observed excellent performance of \textsf{RobRegCC} with hard-ridge(H) and
adaptive penalty(A). The performance decreased only in the setting with a high number of leveraged outliers ($> 
15\%$). The performance of \textsf{RobRegCC} with Elastic Net penalty (E) showed the expected sub-optimal performance
across all scenarios.  

\section{Robust regression on gut microbiome data}
\label{sec:application}
We next applied the \textsf{RobRegCC} workflow to learn robust and predictive models of soluble CD14 
(sCD14) measurements, an immune marker related to chronic inflammation and monocyte activation, from gut microbiome
samples of HIV patients. The data set comprises $n = 151$ observations of sCD14 measurements and aggregated 16S rRNA 
amplicon data across $p = 60$ bacterial genera. In \citet{rivera2018balances}, the data set has been used to highlight the
performance of the balance selection scheme (\texttt{selbal}), a greedy step-wise log-contrast modeling method. 
We provide three comparative analyses on this dataset, showcasing the flexibility of \textsf{RobRegCC}.

\subsection{Comparison of \textsf{RobRegCC} with standard log-contrast approaches}
We modeled the sCD14 measurements as continuous response $\y \in \mathbb{R}^{n}$, considered
the clr transform of the relative genera abundances as compositional covariates $\Z\in \mathbb{R}^{n \times p}$, and
used an intercept $\bN = \1_n$ as non-compositional covariate. 

To facilitate comparison with \texttt{selbal}, we first considered \textsf{RobRegCC} with the standard zero-sum
constraint $\C = \1_p\trans$, analyzed model performance in terms of overall $R^2$, and compared the set of sparse
predictors. For \texttt{selbal}, \citet{rivera2018balances} report a log-contrast model with four genera:
[g]Subdoligranulum and [f]Lachnospiraceae\_[g]\_Incertae\_Sedis (in the numerator) and  
[f]Lachnospiraceae\_[g]\_unclassified and [g]Collinsella (in the denominator). The \texttt{selbal} log-contrast model
fit on all data achieves an $R^2 = 0.281$. \textsf{RobRegCC} identified nine outliers with the $P^A$ and $P^E$
penalties, and five outliers with the $P^H$ penalty, respectively (see Figures \ref{fig:hivmodelfitdiagada1}
--\ref{fig:hivmodelfitdiaghard1} in the Supplementary material). \emph{RobRegCC}
infers slightly less sparse models with eight to ten predictors. After removal of the outliers, \emph{RobRegCC}'s
models achieved considerable higher $R^2$'s, ranging from $0.53$ ($\M{E}$), to $0.57$ ($\M{H}$), and $0.63$
($\M{A}$), respectively. Figure \ref{fig:hivallrobvar} reports the identified set of genera in the respective models. 

\begin{figure}[t]
\begin{center}
\includegraphics[page=1,width=0.9\textwidth]{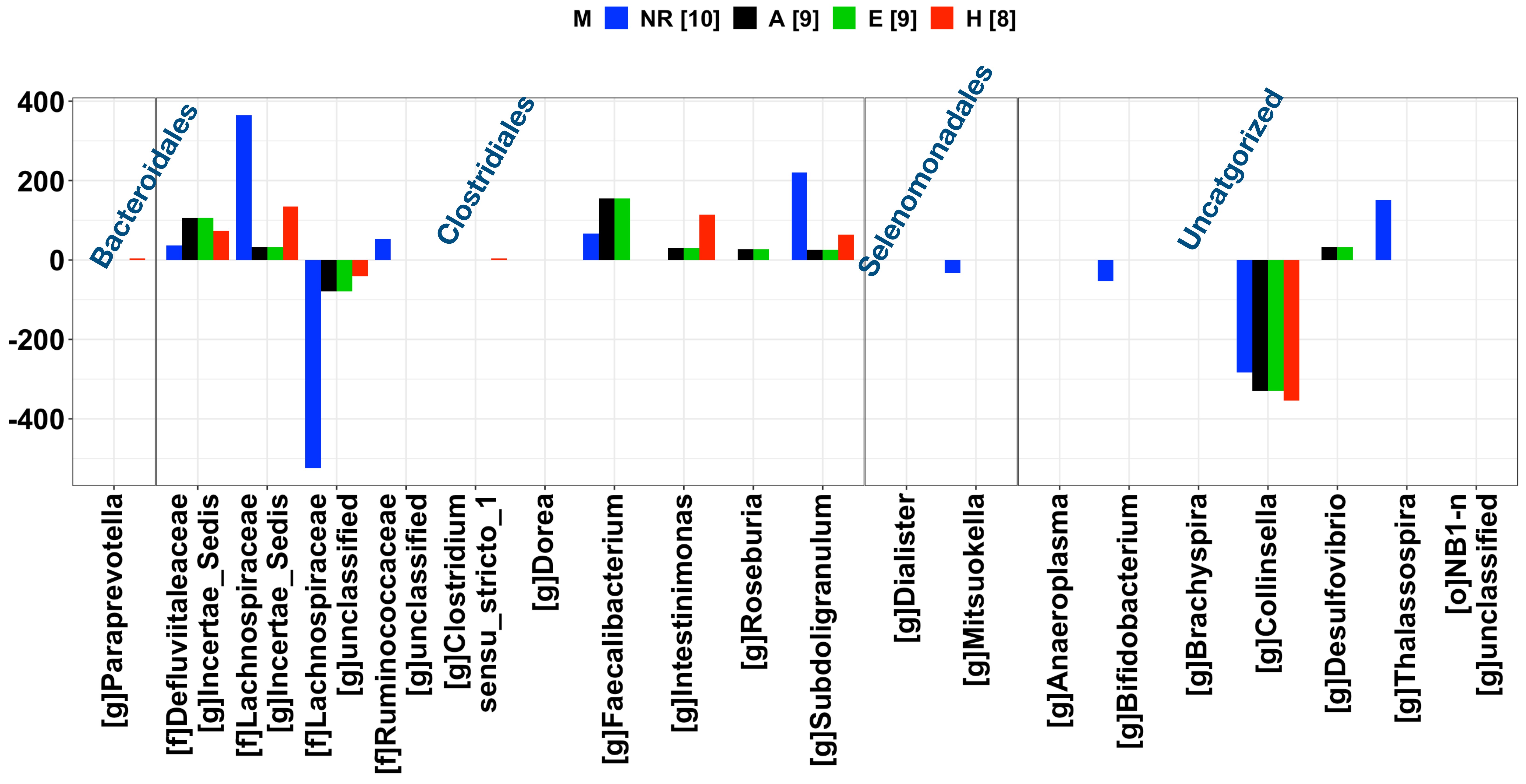}
\caption{\textsf{RobRegCC} log-contrast predictors for sCD14 HIV dataset with zero-sum constraint $\C = \1_p$, the
three penalty functions (H,E,A), and without mean shift (denoted ``non-robust" (NR)) (number of non-zero coefficients
of the different models are shown in parenthesis). Predictor labels correspond to genera names at the finest
taxonomic resolution available (class(c), family(f), genus(g) and order(o)).
}\label{fig:hivallrobvar}
\end{center}
\end{figure}

Consistent with the \texttt{selbal} findings, the robust models included the four genera [g]Subdoligranulum,
[f]Lachnospiraceae\_[g]\_Incertae\_Sedis, [f]Lachnospiraceae\_[g]\_unclassified, and [g]Collinsella.
In addition, all \textsf{RobRegCC} models also identified the genera [f]Defluviitaleacea\_[g]IncertaeSedis and
[g]Intestinimonas to be positively associated with sCD14. The \textsf{RobRegCC} models with the (adaptive) 
Elastic Net penalties identified the genus [g]Faecalibacterium to be positively associated with sCD14 as well.
The standard non-robust [NR] log-contrast model identified several genera, not present in the robust models,
including [g]Bifidobacterium, [g]Mitsukella, and [g]Thalassospira.  

To give a fair evaluation of the out-of-sample predictive performance of the \textsf{RobRegCC} models,
we randomly split the data 100 times into two sets with $90\%$ ($n_\text{tro} = 136$) samples \{$\y_{tro},\,\,
\X_{tro}$\} for training and $10\%$ ($n_\text{teo} = 15$) samples \{$\y_{teo},\,\, \X_{teo}$\} for out-of-sample 
prediction. We used the robust test statistic, introduced in \eqref{eq:teststats}, to measure robust out-of-sample
prediction error. Table \ref{tab:methodcomp} reports the mean and standard deviation (in parenthesis) of the
the robust error $\epsilon_{teo,\lambda_{cv}}$, average sample size $n_{teo,r}$ after outlier removal in the test
data, and the percentage of outliers $\M{O}_{tro}$\% identified in the training phase for both robust and non-robust
models. The comparison also includes the \texttt{selbal} model with the four genera \citep{rivera2018balances} as
predictors, denoted by $\M{NR}_0$. We observed that the robust approaches showed superior estimation performance (i.e.,
lower test error) and identified roughly 5-6\% of the samples as outliers. 
\begin{table}[h]
\caption{Predictive modeling of sCD14 data with $\C = \1_p$: Mean and standard deviation (in parenthesis) of the
out-of-sample scaled test error  $\epsilon_{teo,\lambda_{cv}}$, the ``clean" sample size ($n_{teo,r}$) on the 
test data, and the percentage of outliers $\M{O}_{tro}$\% identified in the training data.}
\label{tab:methodcomp}
\centering
\begin{tabular}{|llllll|}
  \hline
 & A & E & H & NR & $\M{NR}_0$ \\ 
  \hline
$\epsilon_{teo,\lambda_{cv}}$  & 0.28 (0.20) & 0.29 (0.22) & 0.29 (0.20) & 0.36 (0.20) & 0.34 (0.18) \\ 
$n_{teo,r}$ & 13.18 (1.52) & 13.68 (0.98) & 13.22 (1.46) & 13.41 (1.13) & 13.64 (1.01) \\ 
$\M{O}_{tro}$\%  & 4.84 (1.76) & 6.09 (2.05) & 4.78 (1.39) & 0.00 (0.00) & 0.00 (0.00) \\ 
   \hline
\end{tabular}
\end{table}

\subsection{Robust regression with subcompositional coherence}
The process of measuring relative microbial species abundances introduces biases 
at multiple experimental stages \citep{pollock2018madness}, including taxonomy-dependent biases 
due to some microbes being more resistant to cell lysis or variable specificities of the primer sets. 
Such taxonomic biases can be mitigated by enforcing sub-compositional coherence with respect to taxonomic grouping
\citep{shi2016regression}. Here, we extended the analysis from before with the subcompositional coherence 
imposed at the order level, resulting in a constraint matrix $\C$ (see Equation
\ref{eq:defCmathiv} in the Supplementary Material) with $k=6$ subcompositions. The taxa with known order
information were grouped into five subcompositions. Uncategorized taxa formed the sixth subcomposition.

\begin{figure}[t]
\begin{center}
\includegraphics[page=2,width=0.9\textwidth]{hiv_data_fit.pdf}
\caption{\textsf{RobRegCC} log-contrast predictors on the sCD14 HIV dataset with subcompositional constraints on order level, the
three penalty functions (H,E,A), and without mean shift (denoted ``non-robust" (NR)) (number of non-zero coefficients
of the different models are shown in parenthesis). Predictor labels correspond to genera names at the finest
taxonomic resolution available (class(c), family(f), genus(g) and order(o)).
}\label{fig:hivallrobvar_subcomp}
\end{center}
\end{figure}

\textsf{RobRegCC} identified nine outliers with the $P^A$ and the $P^H$ penalties and ten outliers with $P^E$,
respectively. The subcompositional constraint induced slightly denser models with ten to twelve predictors while 
simultaneously maintaining superior out-of-sample prediction performance (see Table \ref{tab:methodcompo}) and model $R^2$'s (see Figure \ref{fig:hivmodelfitdiagada} --\ref{fig:hiv_fit__r3} in the Supplementary Material), when compared to the \texttt{selbal} or the non-robust model. 

\begin{table}[h]
\caption{Predictive modeling of sCD14 data with subcompositional constraint: 
Mean and standard deviation (in parenthesis) of the out-of-sample scaled test error 
$\epsilon_{teo,\lambda_{cv}}$, the ``clean" sample size ($n_{teo,r}$) on the 
test data, and the percentage of outliers $\M{O}_{tro}$\% identified in the training data.}
\label{tab:methodcompo}
\centering
\begin{tabular}{|llllll|}
  \hline
 & A & E & H & NR & $\M{NR}_0$ \\ 
  \hline
$\epsilon_{teo,\lambda_{cv}}$ & 0.29 (0.19) & 0.26 (0.19) & 0.29 (0.20) & 0.34 (0.18) & 0.34 (0.19) \\ 
$n_{teo,r}$ & 13.66 (0.99) & 13.61 (0.96) & 13.61 (0.99) & 13.66 (1.04) & 13.71 (0.96) \\ 
$\M{O}_{tro}$\%  & 4.88 (1.58) & 6.28 (1.91) & 4.78 (1.75) & 0.00 (0.00) & 0.00 (0.00) \\ 
   \hline
\end{tabular}
\end{table}

Figure \ref{fig:hivallrobvar_subcomp} reports the selected microbial species that were associated with the sCD14 inflammation marker. With the subcompositional coherence at order level, [f]Defluviitaleacea\_[g]IncertaeSedis,
[f]Lachnospiraceae\_[g]IncertaeSedis, and [g]Faecalibacterium were associated with sCD14. The models also include the genera [g]Desilfovibrio and [g]Thalassospira and discard the genus [f]Lachnospiraceae\_[g]IncertaeSedis, when compared to the previous analysis. The robust models only selected predictors 
in the Clostridiales and the Uncategorized subcomposition.

\subsection{Robustness to data mislabeling}
A common source of error in analyzing microbial datasets, in particular those coming from public resources 
such as NCBI's SRA (\url{https://www.ncbi.nlm.nih.gov/sra}), stems from insufficient documentation of 
the correspondence between data files comprising raw read data and their associated experimental meta-information.
Missing or mislabeled meta-information is not uncommon and hinders large-scale meta- or re-analysis of many public data.
To show that \textsf{RobRegCC} can deal with potential data mislabeling, we emulated such a scenario on the 
sCD14 dataset by generating $\M{O} = 10$ mislabeled observations. We actively interchanged the $O/2$ largest and $O/2$ 
smallest entries in the response $\y$ while keeping the corresponding rows in $\Z$ unchanged (see Figure 
\ref{fig:mislabeling}(a)). We observed that non-robust regression on the mislabeled data resulted in a significant drop in 
model fit ($R^2 = 0.065$, see Figure \ref{fig:hiv_fit__r2c} in the Supplementary Material). \textsf{RobRegCC}'s 
performance was not affected by the corrupted observations ($R^2 = 0.69$ ($\M{H}$), $0.55$ ($\M{E}$), and $0.55$
($\M{A}$), see Figure \ref{fig:hiv_fit__r2c} in the Supplementary Material). We further compared the similarity 
between \textsf{RobRegCC}'s predictors on the original data $\what{\bbeta}_o$ and the predictors $\what{\bbeta}_m$ on 
the mislabeled data by measuring the relative error $\M{err}_{rel} = 100\|\what{\bbeta}_o -
\what{\bbeta}_m\|/\|\what{\bbeta}_o\|$ and the support mismatch via the Hamming distance $\M{HM}{(\hat \bbeta_m})$. Figures
\ref{fig:mislabeling}(b),(c) summarize the error estimates. The \textsf{RobRegCC} model with adaptive Elastic Net penalty outperformed all other methods in both error measures, shared five predictors with the
regression model on the original data, and correctly identified seven mislabeled data points as outliers (Figure \ref{fig:mislabeling}(d)). 

\begin{figure}[t]
\centering
\subfloat[Mislabeling experiment]{\includegraphics[page = 1, width=0.32\textwidth]{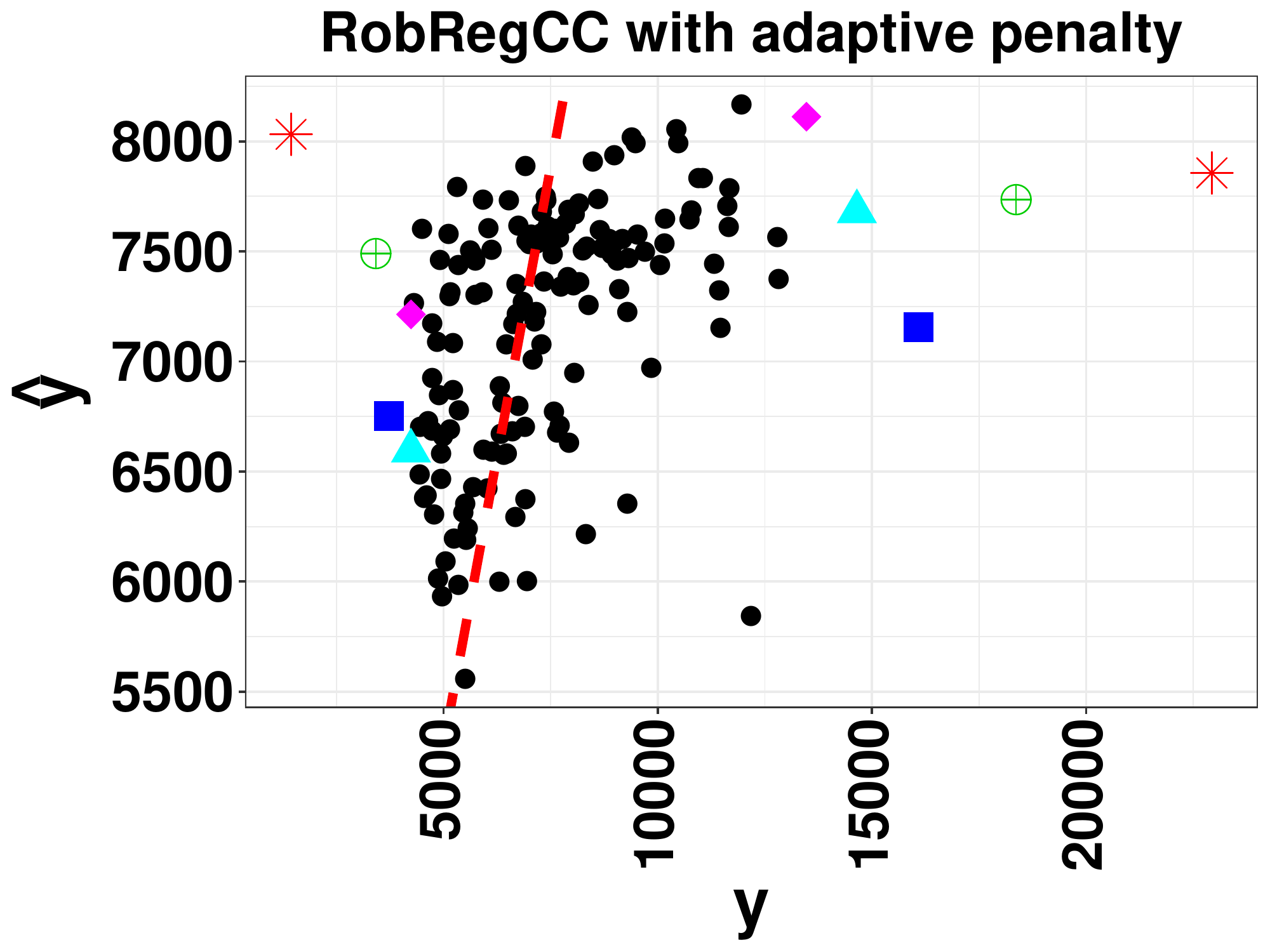}} 
\hfill
\subfloat[$\M{err}_{rel}$]{\includegraphics[page = 1, width=0.16\textwidth]{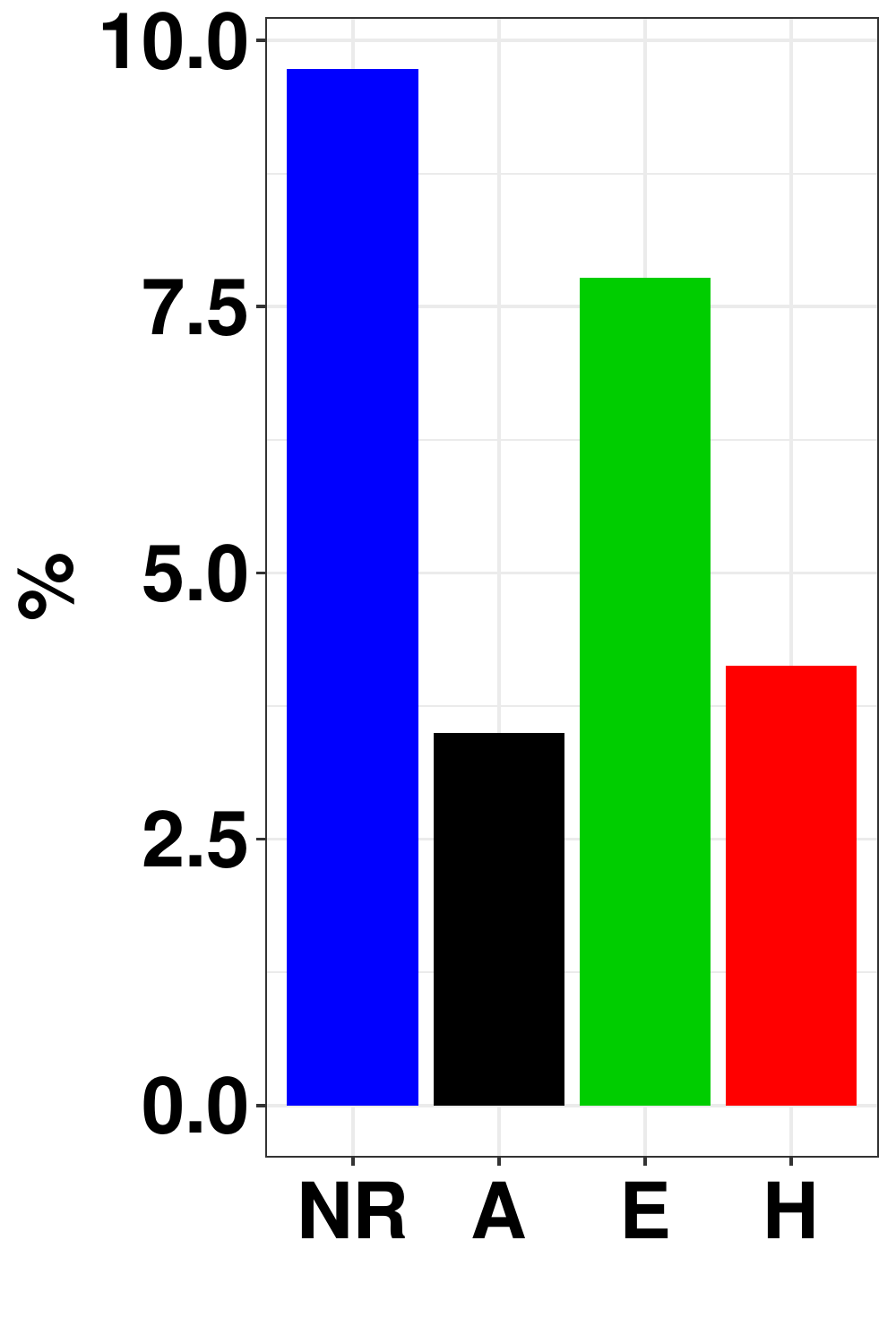}}
\hfill
\subfloat[$\M{HM}{(\hat \bbeta_m}) $]{\includegraphics[page = 2, width=0.16\textwidth]{appl_hiv_delbeta.pdf}}
\hfill
\subfloat[Residual]{\includegraphics[page = 1, width=0.32\textwidth]{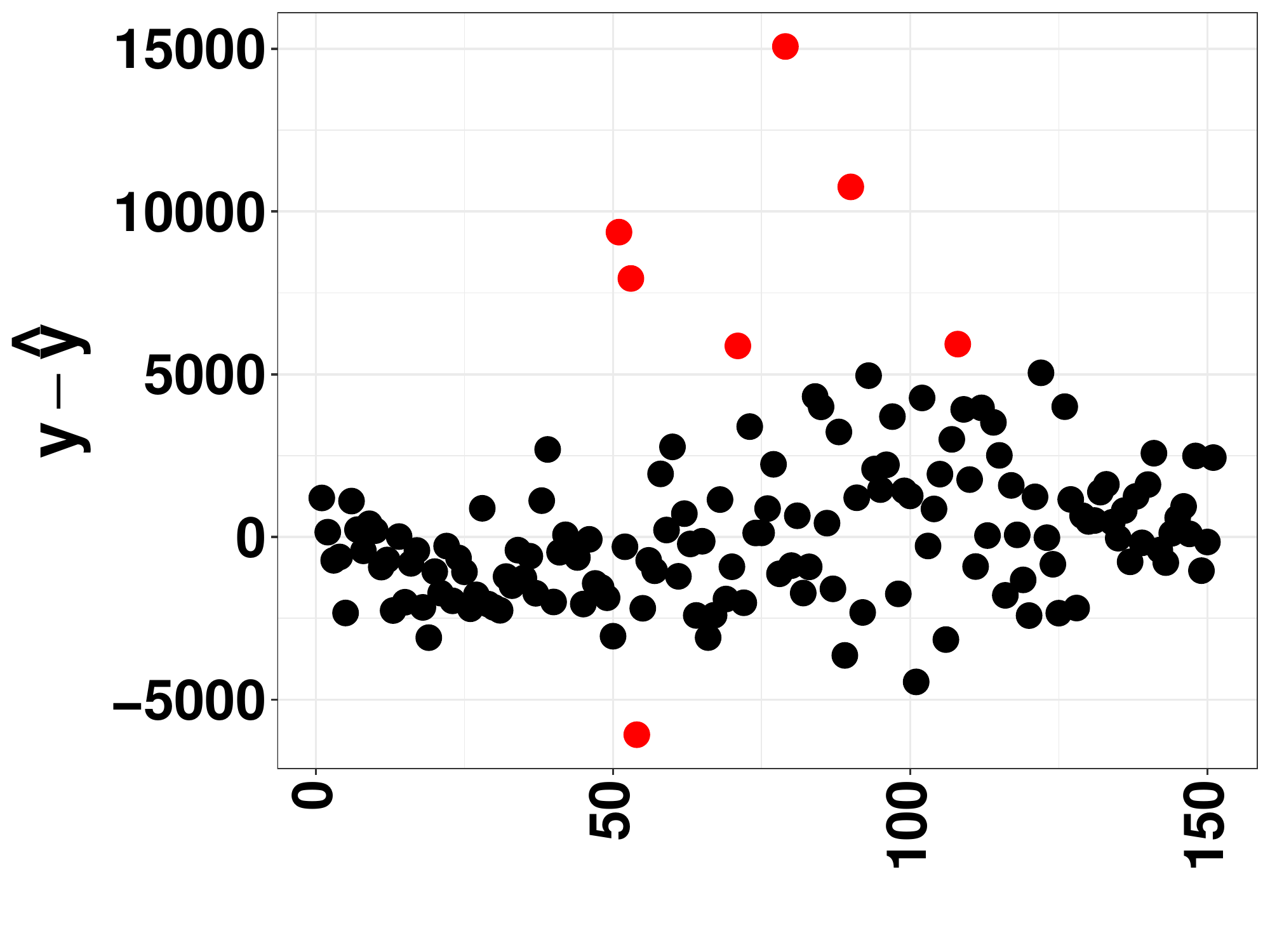}
}
\caption{Comparison of the robust and  non-robust approaches on mislabeled data:
(a) Color coded observations of the sCD14 measurements ($\y$) on the two side of the red dotted line are actively interchanged; (b) relative error $\M{err}_{rel} = 100\|\what{\bbeta}_o - \what{\bbeta}_m\|/\|\what{\bbeta}_o\|$ (see main text for explanation);(c) support mismatch $\M{HM}{(\bbeta_m})$ (see main text for explanation);(d) estimated residual plot with outliers identified in red dots using \textsf{RobRegCC} with adaptive Elastic Net penalty.}
\label{fig:mislabeling}
\end{figure}

\section{Discussion and conclusion}
In this contribution, we have presented \textsf{RobRegCC}, a robust log-contrast regression framework that
allows simultaneous outlier and sparse model coefficient identification for regression problems with
compositional and non-compositional covariates. The approach combines the idea of mean shift 
estimation in linear regression with robust initialization and penalization for linear log-contrast 
regression \citep{aitchison1984log}. We have tackled the resulting over-specified model parameter
estimation problem via regularization with suitable sparsity-inducing penalty functions,  
including the hard-ridge, the Elastic Net, and a novel adaptive Elastic Net penalty. 
While the estimation approach with the Elastic Net penalty lacks the ability to handle masking and 
swapping effect \citep{she2011outlier}, the adaptive Elastic Net and the hard-ridge penalties 
alleviate this problem but require initial robust estimates of the parameters 
to construct appropriate Elastic Nets weights or good initial parameter estimates, respectively.  
For the robust initialization step, we have used the concept of principal 
sensitivity component analysis \citep{pena1999fast}. \textsf{RobRegCC} also includes (i) a general 
Lagrangian-based optimization procedure to solve the underlying optimization
problem with any of the available penalty functions and (ii) a novel robust prediction error measure and
cross-validation scheme that may be of independent interest.
We have shown, on simulated and real compositional microbiome data, the validity and generality of our approach 
and have developed novel theoretical results that give prediction error bounds for the \textsf{RobRegCC}
estimators in the finite sample setting. In practice, we recommend using \textsf{RobRegCC} with the adaptive
Elastic Net penalty, since this estimator showed superior prediction and consistency performance on 
the majority of the experimental scenarios. 

Future computational efforts will include exploring and implementing other computationally 
efficient optimization strategies, including recent path-based algorithms for log-contrast regression
\citep{Gaines2018}. On the theoretical side, we will analyze the variable selection properties of the
RobRegCC model estimators. A natural extension of our modeling framework is robust logistic regression
when responses are given as class indicators rather than continuous variables. In summary, we believe that 
our \textsf{RobRegCC} framework provides a useful tool for statisticians and computational biologists 
that want to robustly solve regression problems with compositional covariates.

%The Supplementary Material comprise Table \ref{tab:sequal6} related to the simulation setting 
%with moderate outliers in Section 1.1, model fit diagnostics from the HIV data analysis in Section 1.2, 
%details about the robust initialization procedure in Section 1.3, and non-asymptotic analysis 
%proofs of the statements in Section 1.4.

\bibliographystyle{biorefs}
\bibliography{library}

\section{Supplementary Material}
\label{sec:supp}
\setcounter{section}{1}
\section*{Supplementary material}\label{sec:appen}
\beginsupplement

\subsection{Simulation: Model performance in the simulation setting with moderate outliers}

%tab:sequal6
\begin{table}[H]

\caption{\label{tab:sequal6}Comparison of the non-robust [NR] model and the robust model, i.e.,  RobRegCC with the hard-ridge [H], the Elastic Net [E], and the adaptive Elastic Net [A] penalty function, in the simulation setting with moderate outliers (n = 200, s = 6) using the outlier identification measures false negative (FN) and false positive (FP), and the estimation error measure Er($\bbeta$). Here $\M{FP}_1$ and $\M{FP}_2$ refers to the pre and post false positive measures.\\}
\centering
\resizebox{\linewidth}{!}{
\fontsize{7.5}{9.5}\selectfont
\begin{tabu} to \linewidth {>{\raggedleft}X>{\raggedleft}X>{\raggedleft}X>{\raggedright}X>{\raggedright}X>{\raggedright}X>{\bfseries}l>{\raggedright}X>{\raggedright}X>{\raggedright}X>{\bfseries}l>{\raggedright}X>{\raggedright}X>{\raggedright}X>{\bfseries}l>{\bfseries}l}
\toprule
\multicolumn{3}{c}{\textbf{ }} & \multicolumn{4}{c}{\textbf{[A]}} & \multicolumn{4}{c}{\textbf{[H]}} & \multicolumn{4}{c}{\textbf{[E]}} & \multicolumn{1}{c}{\textbf{[NR]}} \\
\cmidrule(l{3pt}r{3pt}){4-7} \cmidrule(l{3pt}r{3pt}){8-11} \cmidrule(l{3pt}r{3pt}){12-15} \cmidrule(l{3pt}r{3pt}){16-16}
\multicolumn{3}{c}{ } & \multicolumn{3}{c}{$\bgamma$} & \multicolumn{1}{c}{$\bbeta$} & \multicolumn{3}{c}{$\bgamma$} & \multicolumn{1}{c}{$\bbeta$} & \multicolumn{3}{c}{$\bgamma$} & \multicolumn{1}{c}{$\bbeta$} & \multicolumn{1}{c}{$\bbeta$} \\
\cmidrule(l{3pt}r{3pt}){4-6} \cmidrule(l{3pt}r{3pt}){7-7} \cmidrule(l{3pt}r{3pt}){8-10} \cmidrule(l{3pt}r{3pt}){11-11} \cmidrule(l{3pt}r{3pt}){12-14} \cmidrule(l{3pt}r{3pt}){15-15} \cmidrule(l{3pt}r{3pt}){16-16}
\rowcolor{gray!6}  L & p & O & FN & $\M{FP}_1$ & $\M{FP}_2$ & Er & FN & $\M{FP}_1$ & $\M{FP}_2$ & Er & FN & $\M{FP}_1$ & $\M{FP}_2$ & Er & Er\\
\midrule
0 & 100 & 10 & 0.00 & 1.78 & 1.13 & 1.11 & 0.30 & 0.74 & 0.62 & 1.09 & 0.00 & 12.13 & 3.07 & 1.17 & 1.51\\
\rowcolor{gray!6}  0 & 100 & 20 & 0.00 & 1.10 & 0.67 & 1.14 & 0.46 & 0.30 & 0.47 & 1.13 & 0.00 & 15.54 & 3.48 & 1.19 & 1.91\\
0 & 100 & 30 & 0.00 & 1.00 & 0.57 & 1.07 & 1.63 & 0.00 & 0.32 & 1.11 & 0.00 & 16.92 & 4.12 & 1.09 & 2.29\\
\rowcolor{gray!6}  0 & 100 & 40 & 0.36 & 0.38 & 0.33 & 1.16 & 4.33 & 0.00 & 0.00 & 1.32 & 0.00 & 14.80 & 2.69 & 1.19 & 2.59\\
\hline
0 & 300 & 10 & 0.00 & 2.53 & 1.18 & 0.51 & 0.00 & 2.72 & 1.18 & 0.52 & 0.00 & 11.15 & 2.68 & 0.54 & 0.81\\
\addlinespace
\rowcolor{gray!6}  0 & 300 & 20 & 0.00 & 1.41 & 0.93 & 0.55 & 0.41 & 1.10 & 0.59 & 0.54 & 0.00 & 11.73 & 2.71 & 0.56 & 1.03\\
0 & 300 & 30 & 0.28 & 0.98 & 0.41 & 0.62 & 1.68 & 0.35 & 0.32 & 0.59 & 0.00 & 10.57 & 1.91 & 0.59 & 1.19\\
\rowcolor{gray!6}  0 & 300 & 40 & 1.29 & 0.25 & 0.27 & 0.73 & 3.61 & 0.00 & 0.00 & 0.69 & 0.17 & 9.88 & 1.83 & 0.65 & 1.19\\
\hline
1 & 100 & 10 & 0.00 & 1.17 & 0.96 & 1.10 & 0.00 & 0.90 & 0.62 & 1.10 & 0.91 & 11.35 & 1.97 & 1.21 & 1.77\\
\rowcolor{gray!6}  1 & 100 & 20 & 0.26 & 0.45 & 0.49 & 1.00 & 0.99 & 0.24 & 0.37 & 1.02 & 3.81 & 10.31 & 2.54 & 1.74 & 1.93\\
\addlinespace
1 & 100 & 30 & 1.23 & 0.43 & 0.47 & 1.31 & 3.66 & 0.00 & 0.00 & 1.29 & 8.89 & 10.13 & 2.37 & 1.80 & 2.12\\
\rowcolor{gray!6}  1 & 100 & 40 & 14.35 & 0.47 & 0.55 & 1.73 & 14.92 & 0.36 & 0.43 & 1.82 & 13.48 & 9.87 & 1.96 & 1.90 & 2.49\\
\hline
1 & 300 & 10 & 0.00 & 2.36 & 0.97 & 0.52 & 0.00 & 2.29 & 1.01 & 0.52 & 1.77 & 7.94 & 2.41 & 0.59 & 0.87\\
\rowcolor{gray!6}  1 & 300 & 20 & 0.00 & 1.14 & 0.54 & 0.52 & 0.58 & 0.40 & 0.39 & 0.49 & 6.28 & 8.99 & 2.44 & 0.96 & 0.93\\
1 & 300 & 30 & 5.35 & 0.36 & 0.36 & 0.73 & 1.85 & 0.00 & 0.42 & 0.61 & 12.82 & 6.83 & 1.89 & 0.96 & 0.91\\
\addlinespace
\rowcolor{gray!6}  1 & 300 & 40 & 14.66 & 0.00 & 0.36 & 0.86 & 11.92 & 0.00 & 0.00 & 0.77 & 18.51 & 6.93 & 2.18 & 0.94 & 0.94\\
\bottomrule
\end{tabu}}
\end{table}

%\subsection{Application: Residual analysis on the soil ph data}
%
%\begin{figure}[H]
%\begin{center}
%\includegraphics[width=1\textwidth, angle=0]{appl_sol_resid.pdf}
%\caption{Resudial plots showing  absence of outliers in modeling the soil pH using log transform of the relative abundance of the microbial species as composittional covariats with the three variants of RobRegCC, i.e., $P^A$, $P^E$ and $P^H$ penalty function. }\label{fig:residsoilph}
%\end{center}
%\end{figure}
%
%
%
%
%\begin{figure}[H]
%\begin{center}
%\includegraphics[width=0.8\textwidth, angle=0]{selected_soilph.pdf}
%\caption{Regression coefficients for soil pH prediction using microbial abundance data. Stacked bar plot of the 
%compositional microbial predictors identified 
%by the robust (\textsf{case I, II and III}), i.e., (H,E,A)) and non-robust (NR) procedures. The microbial
%features are grouped by  the labels correspond to the finest taxonomic resolution available (class(c), family(f), genus(g) and order(o)).
%}\label{fig:residsoilph}
%\end{center}
%\end{figure}

\newpage
% ---------------------------------- 
\subsection{HIV data analysis}

\subsubsection{Robust HIV data analysis with $\C = \1_p$}

Figure \ref{fig:hivmodelfitdiagada1} -- \ref{fig:hivmodelfitdiaghard1} shows the model fit diagnostic with \textsf{RobRegCC} in analyzing the HIV microbial abundance data to explore its  association with the immune  inflammation marker CD14. 

\begin{figure}[H]
\begin{center}
\includegraphics[page=1, width=0.8\textwidth, angle=0]{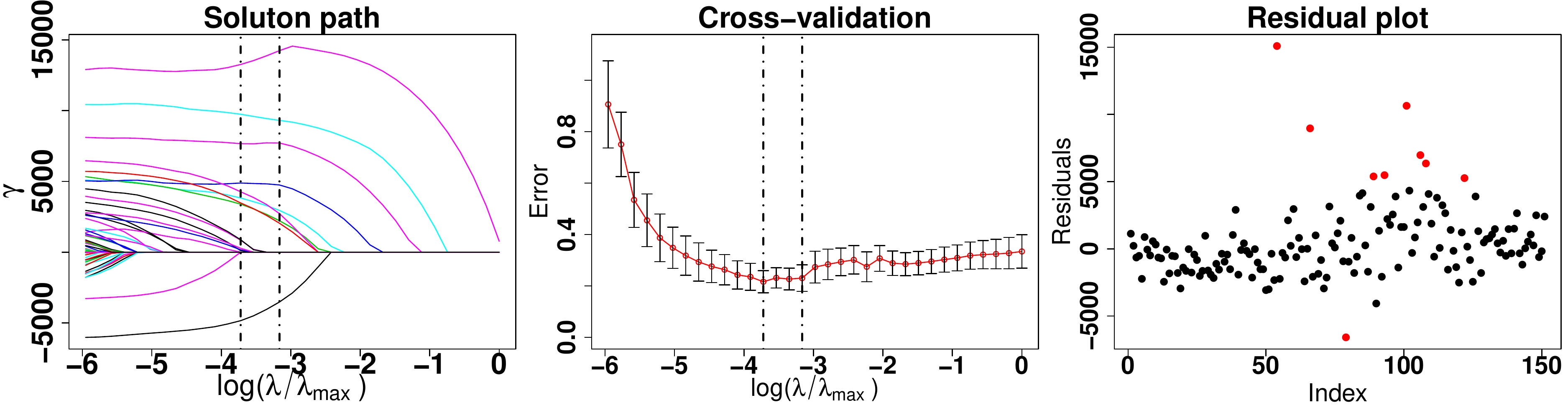}
\caption{\textsf{Robust HIV data analysis with adaptive elastic net penalty}: For $\C = \1_p$,  left and middle plot shows the solution path, representing mean shift parameter  $\bgamma$ estimate, and cross-validation error with varying tuning parameter $\lambda$ using \textbf{RobRegCC} model with adaptive Elastic-net penalty, respectively. Vertical dashed lines are corresponding to the minimum  and one standard error rule test error.  Indentified outliers (red) are depicted using the residual plot (right).}\label{fig:hivmodelfitdiagada1}
\end{center}
\end{figure}

\begin{figure}[H]
\begin{center}
\includegraphics[page=2, width=0.8\textwidth, angle=0]{appl_hiv1.pdf}
\caption{\textsf{Robust HIV data analysis with elastic net penalty}: For $\C = \1_p$,  left  and middle plot show solution path, representing mean shift parameter  $\bgamma$ estimate, and cross-validation error with varying tuning parameter $\lambda$ using \textbf{RobRegCC} model with Elastic-net penalty, respectively. Vertical dashed lines are corresponding to minimum  and one standard error rule test error.  Indentified outliers (red) are depicted using the residual plot (right).}\label{fig:hivmodelfitdiagsoft1}
\end{center}
\end{figure}

\begin{figure}[H]
\begin{center}
\includegraphics[page=3, width=0.8\textwidth, angle=0]{appl_hiv1.pdf}
\caption{\textsf{Robust HIV data analysis with hard ridge penalty}: For $\C = \1_p$,  left  and middle plot show solution path, representing mean shift parameter  $\bgamma$ estimate, and cross-validation error with varying tuning parameter $\lambda$ using \textbf{RobRegCC} model with hard ridge penalty, respectively. Vertical dashed lines are corresponding to minimum  and one standard error rule test error.  Indentified outliers (red) are depicted using the residual plot (right).}\label{fig:hivmodelfitdiaghard1}
\end{center}
\end{figure}

\begin{figure}[H]
\centering
\subfloat[A]{%
\includegraphics[page = 1, width=0.32\textwidth]{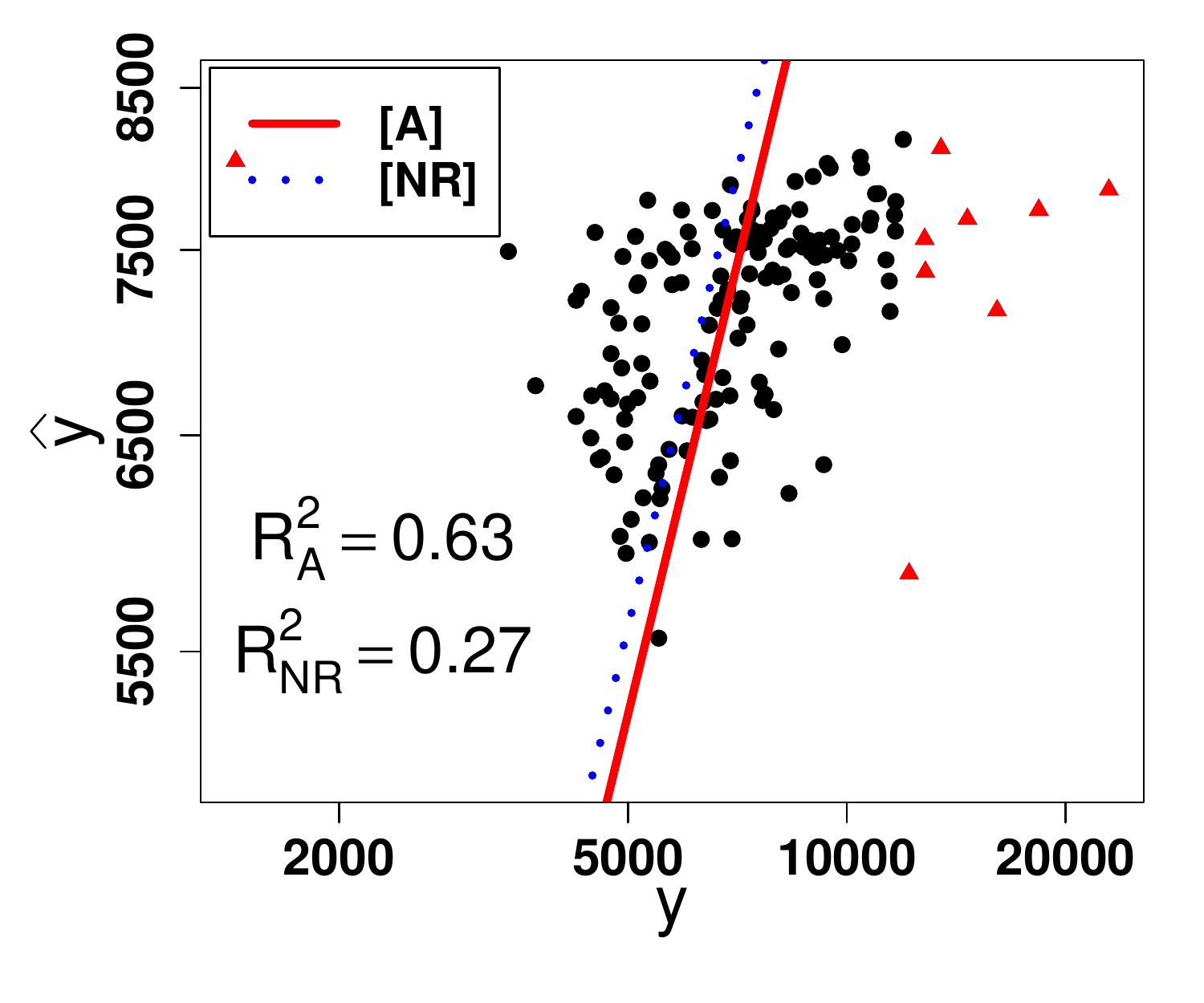}}
\hfill
\subfloat[E]{%
\includegraphics[page = 2, width=0.32\textwidth]{hiv_selbal_fit.pdf}}
\hfill
\subfloat[H]{%
\includegraphics[page = 3, width=0.32\textwidth]{hiv_selbal_fit.pdf}}
     \caption{Application - Robust HIV data analysis  for $\C = \1_p$: Comparison of the robust and  non-robust approach in terms of the model fit statistics $R^2$ and outliers identified (red). In the robust procedure using RobRegCC, $R^2 = 1 - \|\bepsilon_{\lambda_{cv}}\|^2/ \|\y - \bar{y}\|^2$ where   $\bepsilon_{\lambda_{cv}} = \y - \X\what{\btheta}_{\lambda_{cv}} - \what{\bgamma}_{\lambda_{cv}}$ and $\bar{y}$ mean of $\y$.}
     \label{fig:hiv_fit__r2}
   \end{figure}

\subsubsection{Robust HIV data analysis with $\C = \1_p$ after corrupting the responses $\y$}
We corrupt $O = 10$ observations (see main manuscript for the procedure) in the response $\y$, denoting soluble CD14 marker. 
Figure \ref{fig:hivmodelfitdiagada1c} -- \ref{fig:hivmodelfitdiaghard1c} shows the model fit diagnostic with \textsf{RobRegCC} in analyzing the HIV microbial abundance data to explore its  association with the immune  inflammation marker CD14. 

\begin{figure}[H]
\begin{center}
\includegraphics[page=1, width=0.8\textwidth, angle=0]{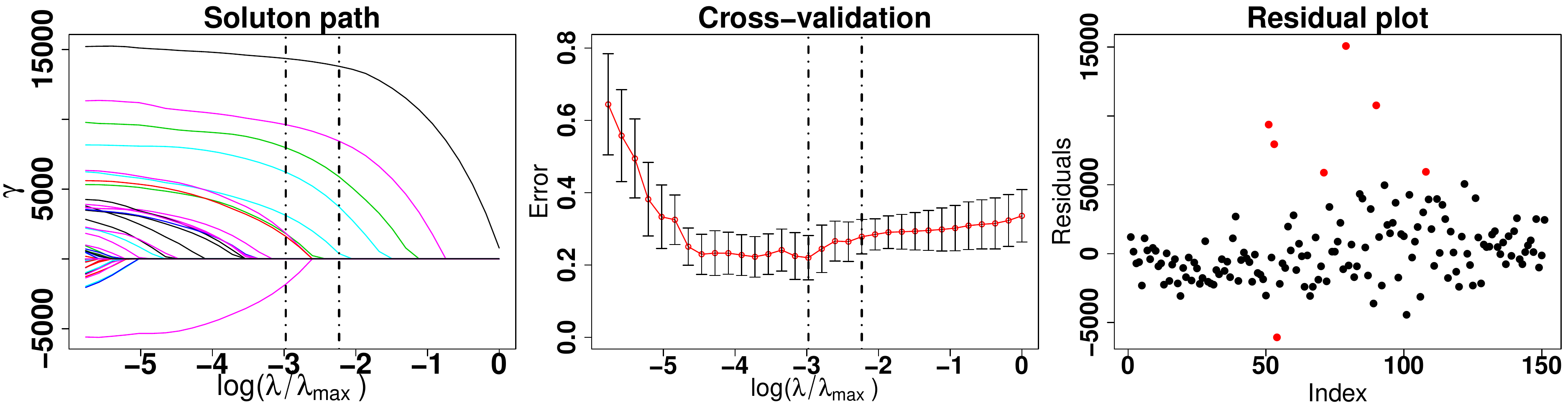}
\caption{\textsf{Robust HIV data analysis with adaptive elastic net penalty with corrupted response $\y$}: For $\C = \1_p$,  left and middle plot shows the solution path, representing mean shift parameter  $\bgamma$ estimate, and cross-validation error with varying tuning parameter $\lambda$ using \textbf{RobRegCC} model with adaptive Elastic-net penalty, respectively. Vertical dashed lines are corresponding to the minimum  and one standard error rule test error.  Indentified outliers (red) are depicted using the residual plot (right).}\label{fig:hivmodelfitdiagada1c}
\end{center}
\end{figure}

\begin{figure}[H]
\begin{center}
\includegraphics[page=2, width=0.8\textwidth, angle=0]{appl_hiv_c.pdf}
\caption{\textsf{Robust HIV data analysis with elastic net penalty  with corrupted response $\y$}: For $\C = \1_p$,  left  and middle plot show solution path, representing mean shift parameter  $\bgamma$ estimate, and cross-validation error with varying tuning parameter $\lambda$ using \textbf{RobRegCC} model with Elastic-net penalty, respectively. Vertical dashed lines are corresponding to minimum  and one standard error rule test error.  Indentified outliers (red) are depicted using the residual plot (right).}\label{fig:hivmodelfitdiagsoft1c}
\end{center}
\end{figure}

\begin{figure}[H]
\begin{center}
\includegraphics[page=3, width=0.8\textwidth, angle=0]{appl_hiv_c.pdf}
\caption{\textsf{Robust HIV data analysis with hard ridge penalty  with corrupted response $\y$}: For $\C = \1_p$,  left  and middle plot show solution path, representing mean shift parameter  $\bgamma$ estimate, and cross-validation error with varying tuning parameter $\lambda$ using \textbf{RobRegCC} model with hard ridge penalty, respectively. Vertical dashed lines are corresponding to minimum  and one standard error rule test error.  Indentified outliers (red) are depicted using the residual plot (right).}\label{fig:hivmodelfitdiaghard1c}
\end{center}
\end{figure}

\begin{figure}[H]
\centering
\subfloat[A]{%
\includegraphics[page = 1, width=0.32\textwidth]{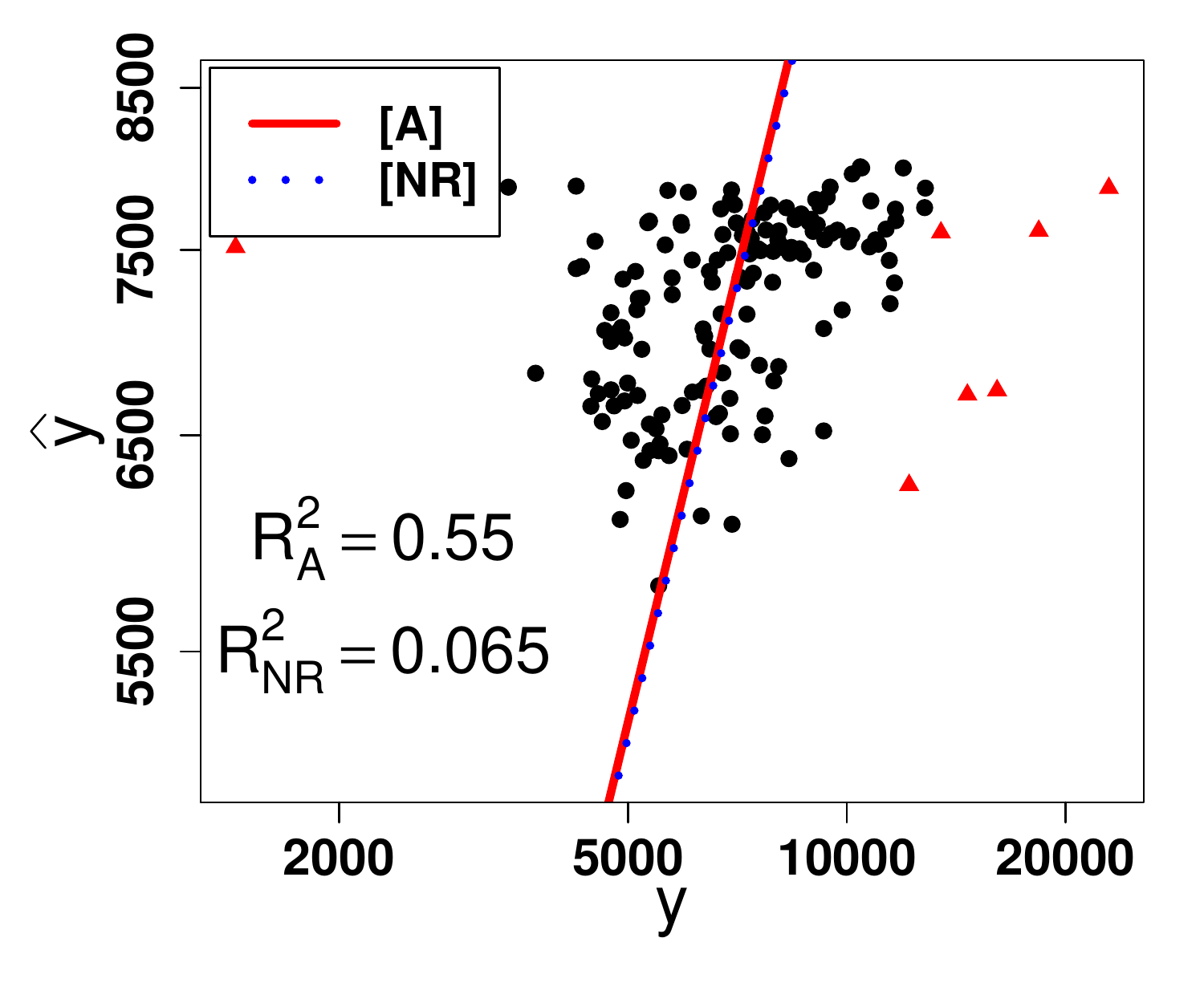}}
\hfill
\subfloat[E]{%
\includegraphics[page = 2, width=0.32\textwidth]{hiv_selbal_fit_c.pdf}}
\hfill
\subfloat[H]{%
\includegraphics[page = 3, width=0.32\textwidth]{hiv_selbal_fit_c.pdf}}
     \caption{Application - Robust HIV data analysis  for $\C = \1_p$: Comparison of the robust and  non-robust approach in terms of the model fit statistics $R^2$ and outliers identified (red). In the robust procedure using RobRegCC, $R^2 = 1 - \|\bepsilon_{\lambda_{cv}}\|^2/ \|\y - \bar{y}\|^2$ where   $\bepsilon_{\lambda_{cv}} = \y - \X\what{\btheta}_{\lambda_{cv}} - \what{\bgamma}_{\lambda_{cv}}$ and $\bar{y}$ mean of $\y$.}
     \label{fig:hiv_fit__r2c}
   \end{figure}

\subsubsection{Robust HIV data analysis with the phylum level subcomposition $\C$}

Figure \ref{fig:hivmodelfitdiagada} -- \ref{fig:hivmodelfitdiaghard} shows the model fit diagnostic with \textsf{RobRegCC} in analyzing the HIV microbial abundance data to explore its  association with the immune inflammation marker CD14. 

The subcomposition matrix for the robust analysis: 
\begin{align}
\C\trans = \begin{bmatrix}
    \1_{p_1}\trans & \0 & \dots  & \0 \\
    \0  & \1_{p_2}\trans &  \dots  & \0 \\
    \vdots & \vdots &  \ddots & \vdots \\
    \0 & \0 &  \dots  & \1_{p_4}\trans 
\end{bmatrix}_{6 \times 60} 
\label{eq:defCmathiv}
\end{align}
with $\s = [0,13, 35,39,45,47,60]$, $p_i = \s_{i+1}-\s_{i}, i=1,\ldots,8$, 
and index sets $\mathbb{A}_{i} = \{\s_{i}+1,\ldots,\s_{i+1}\}$.

\begin{figure}[H]
\begin{center}
\includegraphics[page=1,width=0.8\textwidth, angle=0]{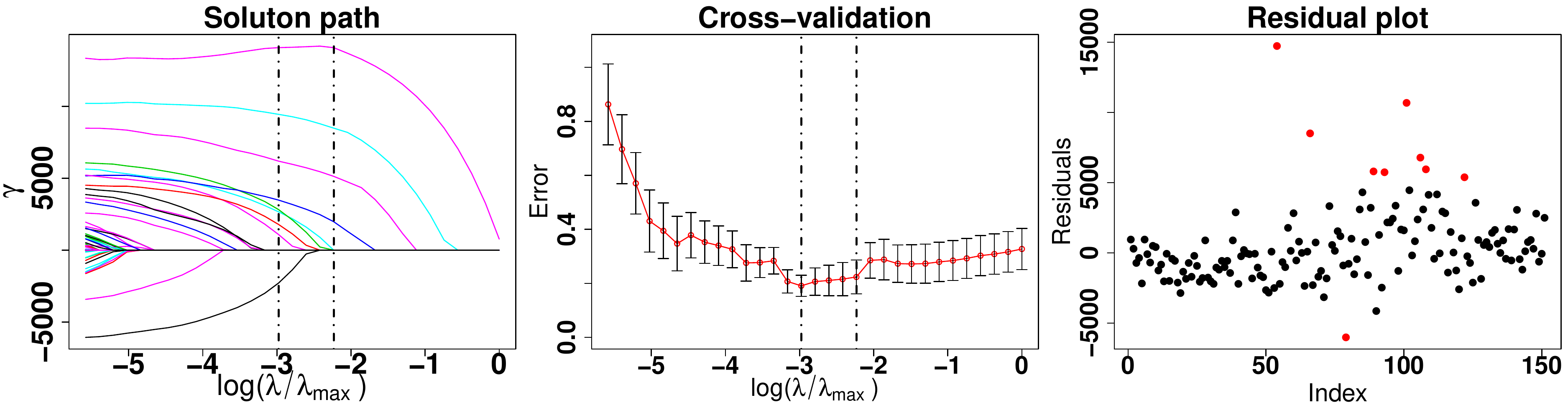}
\caption{\textsf{Robust HIV data analysis with adaptive elastic net penalty}: For the subcomposition matrix $\C$ due to order in the taxonomy,  left and middle plot shows the solution path, representing mean shift parameter  $\bgamma$ estimate, and cross-validation error with varying tuning parameter $\lambda$ using \textbf{RobRegCC} model with adaptive Elastic-net penalty, respectively. Vertical dashed lines are corresponding to the minimum  and one standard error rule test error.  Indentified outliers (red) are depicted using the residual plot (right).}\label{fig:hivmodelfitdiagada}
\end{center}
\end{figure}

\begin{figure}[H]
\begin{center}
\includegraphics[page=2,width=0.8\textwidth, angle=0]{appl_hivk.pdf}
\caption{\textsf{Robust HIV data analysis with elastic net penalty}:  For the subcomposition matrix $\C $ due to order in the taxonomy,  left  and middle plot show solution path, representing mean shift parameter  $\bgamma$ estimate, and cross-validation error with varying tuning parameter $\lambda$ using \textbf{RobRegCC} model with Elastic-net penalty, respectively. Vertical dashed lines are corresponding to minimum  and one standard error rule test error.  Indentified outliers (red) are depicted using the residual plot (right).}\label{fig:hivmodelfitdiagsoft}
\end{center}
\end{figure}

\begin{figure}[H]
\begin{center}
\includegraphics[page=3,width=0.8\textwidth, angle=0]{appl_hivk.pdf}
\caption{\textsf{Robust HIV data analysis with hard ridge penalty}:  For the subcomposition matrix $\C $ due to order in the taxonomy  left  and middle plot show solution path, representing mean shift parameter  $\bgamma$ estimate, and cross-validation error with varying tuning parameter $\lambda$ using \textbf{RobRegCC} model with hard ridge penalty, respectively. Vertical dashed lines are corresponding to minimum  and one standard error rule test error.  Indentified outliers (red) are depicted using the residual plot (right).}\label{fig:hivmodelfitdiaghard}
\end{center}
\end{figure}

\begin{figure}[H]
\centering
\subfloat[A]{%
\includegraphics[page = 1, width=0.32\textwidth]{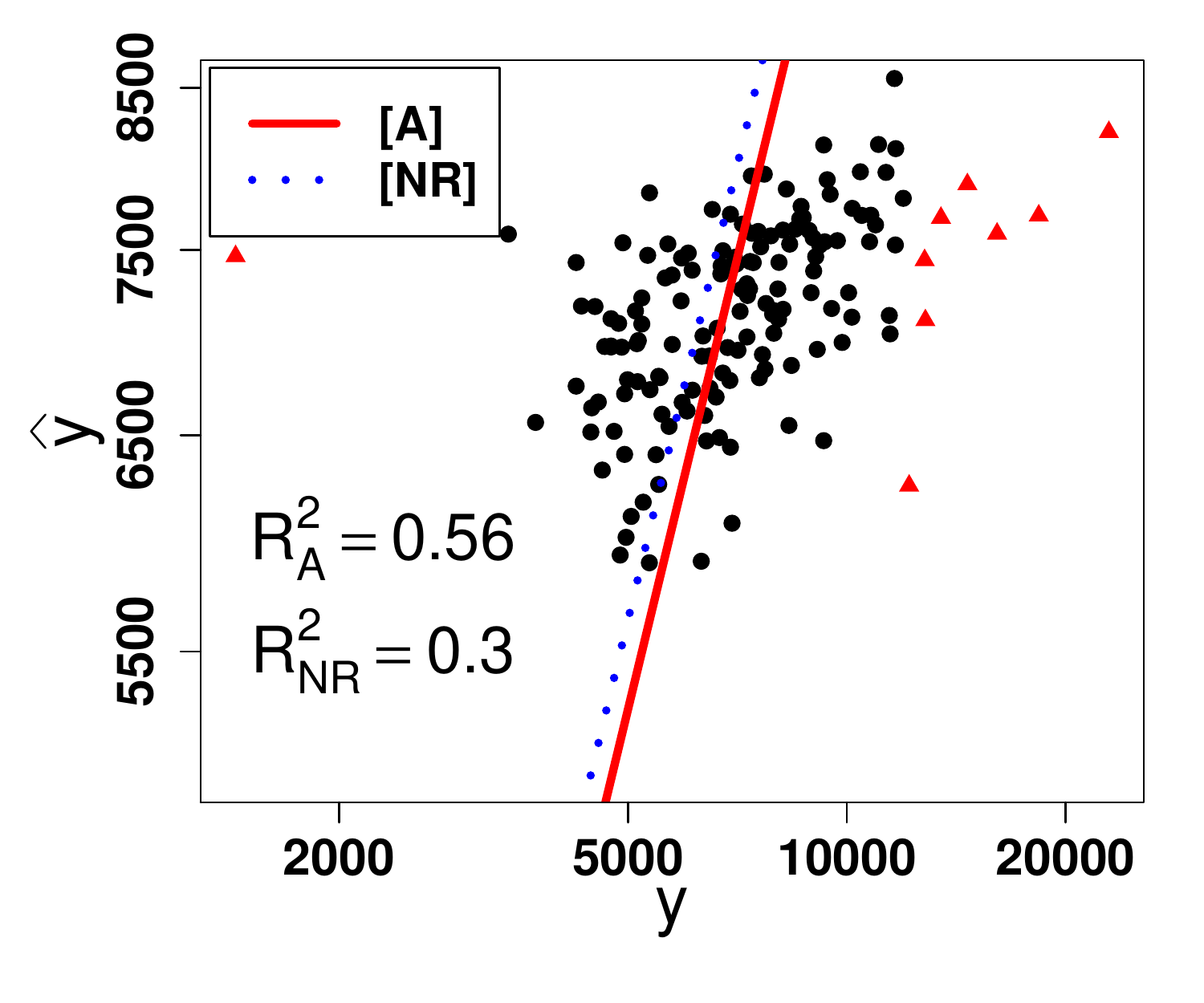}}
\hfill
\subfloat[E]{%
\includegraphics[page = 2, width=0.32\textwidth]{hiv_selbal_fit1.pdf}}
\hfill
\subfloat[H]{%
\includegraphics[page = 3, width=0.32\textwidth]{hiv_selbal_fit1.pdf}}
     \caption{Application - Robust HIV data analysis for the subcomposition $\C$ due to phylum in the taxonomy: Comparison of the robust and  non-robust approach in terms of the model fit statistics $R^2$ and outliers identified (red). In the robust procedure using RobRegCC, $R^2 = 1 - \|\bepsilon_{\lambda_{cv}}\|^2/ \|\y - \bar{y}\|^2$ where   $\bepsilon_{\lambda_{cv}} = \y - \X\what{\btheta}_{\lambda_{cv}} - \what{\bgamma}_{\lambda_{cv}}$ and $\bar{y}$ mean of $\y$.}
\label{fig:hiv_fit__r3}
\end{figure}

\subsection{Details about the robust initialization}
\label{suppl:robust} 
Here, we discuss the  principal sensitive component (PSC) based analysis for the sparse log-contrast model (S-LCM) \citep{shi2016regression}. Computing the sensitivity $\R$ \eqref{eq:psc} for the least square estimator is trivial  \citep{pena1999fast} as it avoids the separate model fitting to obtain $\hat{y}_i$'s and   $\hat{y}_{i(j)}$'s. Interestingly, the log-contrast model (LCM) follows the linear model. But the same is not true for  the S-LCM, hence, computing $\R$  is nontrivial. To overcome the challenge, we identify the support of the S-LCM coefficient estimate, and  compute $\R$ for the subsequent LCM (non-sparse). Please refer Algorithm \ref{alg:initrobregcc} for the formula of calculating $\R$. 

%For the linear model with non-sparse coefficient, computing the  sensitivity $\R$ do not require solving the least square problem separately to obtain $\hat{y}_i$'s and   $\hat{y}_{i(j)}$'s. 

%For the model, computing the sensitivity $\R$ \eqref{eq:psc} in high dimension is challenging. 
% Due to the
%computational complexity, calculating sensitivity $\R$ \eqref{eq:psc} in high dimension is challenging. 

For the analysis, consider setting the parameter $\tau \in (0,0.5)$ and obtain $m=n\tau$.  Suppose $[\u_1, \ldots,\u_{q}]$ denote the principal components of $\R$. \citet{pena1999fast} characterized the extreme observations in terms of the  value of entries in a  principal component of $\R$. 
%\cite{pena1999fast} noted that an extreme values in the principal component of $\R$ suggest possibility of an outliers the sample.  
Following PENSE, for each $\u_i$,  we generate three candidate subsamples by removing $m$ observations corresponding to the: I) largest $\u_i$; II) smallest $\u_i$; III) largest $|\u_i|$. Including the one with all observations, the protocol results in total $3*q+1$ candidate samples.  %For each of these candidate sample,  to obtain the {S-LCM} parameter estimates. 
For each candidate sample, we estimate the  coefficient of S-LCM using the default procedure specified in   \citet{shi2016regression}. 
% For each  principal component $\u_i$,  \citet{pena1999fast} generated  
% Following \citet{shi2016regression}, obtain \textsf{sLCM} parameter estimate on each of these candidate  samples.
Now, using the coefficient estimate, we evaluate the candidate samples in terms of the M-estimator of scale  \citep{rousseeuw2011robust}  of the residuals obtained on full sample.  Suppose the chosen candidate sample attains the minimum scale value $s_1$.  A potentially "clean subsample"  is then obtained after discarding observations with the residuals magnitude (on the full data)  greater than some $C's_1$. See  \citet{CohenFreue2017} for the choice of $C'$.

% Estimates are evaluated based on the M-scales of  the residuals.  Select the candidate estimate with minimum scale value, say $s_1$. 

%For the linear model with compositional covariates, \citet{shi2016regression} proposed a procedure to obtain the  \textsf{sparse} estimate of unknown parameter in the \textsf{log-contrast model} \eqref{eq:cp-slr} \textsf{(sLCM)}. 

%Motivated by \citet{CohenFreue2017}, we propose  a PSC based analysis for the \textsf{sLCM} estimator.  Due to the
%computational complexity, calculating sensitivity $\R$ \eqref{eq:psc} in high dimension is challenging. 

%We suggest
%computing $\R$ on the support identified by \textsf{sLCM} estimator. Using $\R$, number of principal components
%along the directions of maximum change is given by $q = \M{rank}(\R) \leq p-k$. 

%\cite{pena1999fast} noted that an extreme values in the principal component of $\R$ suggest possibility of an outliers the sample.  
% in the principal components outputs possible outlier in the sample. 

The analysis may not detect the low-leveraged outliers. To solve,  \cite{pena1999fast} suggested to iterate the process several times or until convergence. Final S-LCM coefficient estimate  on the clean subsample,  and the residuals on full data, are used as initial estimator of the RobRegCC model, i.e., $\ddot{\bdelta} = [\ddot{\bbeta}\trans \, \ddot{\bgamma}\trans]\trans$. We have summarized the initialization procedure in the Algorithm \ref{alg:initrobregcc}.

\begin{algorithm}[htp]
\small
\caption{Initialization via PSC analysis}
\begin{algorithmic}\label{alg:initrobregcc}
\STATE Given $\y,\Z$, $\h = [h_1, \ldots, h_n] =  \M{diag}(\bP_{\Z})$, $\tau \in (0,0.5)$, $C_1=2$. 
\STATE Denote sorted $\h$ by $[h_{(1)}, \ldots, h_{(n)}]$.
\STATE Choose method: $\bM = \{\M{S-LCM,\,\, LCM}\}$
\STATE Define index set $\mathcal{A}^{[0]} = \{ i \,\,: h_i \leq h_{(n\alpha_1)} \}$, scale $s_0 = 1e^8$ with  $\alpha \in (0.5,1)$. 
\REPEAT
\STATE $\{\mathcal{A}^{[i+1]}, s_{i+1}\} \gets  \M{PSC-Analysis}(\y, \Z, \mathcal{A}^{[i]}, \tau, C_1)$
\UNTIL convergence, $|s_{i+1} - s_{i}| \leq 10^{-4}$.
\RETURN Index set $\mathcal{A}$.
%\STATE Clean sub-sample index estimate $\what{\mathbb{A}}$.
    \vspace{0.2cm}
\STATE    \textbf{Solution:} \\ $\ddot{\bbeta} \gets \bM( \y_{\mathcal{A}}, \Z_{\mathcal{A}} )$; For more details of S-LCM see  \citet{shi2016regression}. \\
  $\ddot{\bgamma} = \y -\Z\ddot{\bbeta}$
\RETURN $\ddot{\bdelta} = [\ddot{\bbeta}\trans \, \ddot{\bgamma}\trans]\trans$
  
\vspace{0.5cm}
\STATE \underline{$\M{PSC-Analysis}(\y, \Z, \mathcal{A}, \tau, C_1)$ } 
    \vspace{0.1cm}
\STATE $\bar{\bbeta} \gets \bM( \y_{\mathcal{A}}, \Z_{\mathcal{A}} )$, \quad $\bar{\bepsilon} \gets \y_{\mathcal{A}} - \Z_{\mathcal{A}}\bar{\bbeta} $, \quad  $\mathcal{B} = \{i: \bar{\beta}_i \neq 0\}$, \quad $n_1 = |\mathcal{A}|$, $m = n_1 \tau$.
    \vspace{0.2cm}
\STATE Denote subset matrix ${\Z_{\mathcal{A} \mathcal{B}}}$, projection matrix  $\H = \bP_{\Z_{\mathcal{A} \mathcal{B}}}$
\STATE Define $\W$ as $\M{diag}(\W) = \bar{\bepsilon}/(1 - \M{diag}(\H))$. \STATE Compute $\R = \H \W^2 \H$.
\STATE $\U = [\u_1, \ldots,\u_{q}] \gets \M{Principal components of } \R$.
\STATE \textsc{Solution :} $\ddot{\bbeta} = \bar{\bbeta}$, $\ddot{\bgamma} = \bar{\bepsilon}$ and scale $\ddot{s} = \M{M-estimator}(\bar{\bepsilon})$. 
 \FOR{$i \in \{1,\ldots,q\}$}
 \STATE $\{\mathcal{A}_1, \mathcal{A}_2, \mathcal{A}_3\} \gets $ index set discarding $m$ \textsf{ largest} $\u_i$, \textsf{ smallest} $\u_i$, and \textsf{ largest} $|\u_i|$.
  \FOR{$j \in \{1,2,3\}$}
           \STATE  $\bar{\bbeta} \gets \bM( \y_{\mathcal{A}_j}, \Z_{\mathcal{A}_j} )$, $\bar{\bepsilon} \gets \y_{\mathcal{A}_j} - \Z_{\mathcal{A}_j}\bar{\bbeta} $
           \IF{( $\M{M-estimator}(\bar{\bepsilon}) < \ddot{s}$ )}  
           \STATE \textsc{Solution :} $\ddot{\bbeta} = \bar{\bbeta}$, $\ddot{\bgamma} = \bar{\bepsilon}$ and scale $\ddot{s} = \M{M-estimator}(\bar{\bepsilon})$. 
           \ENDIF
    \ENDFOR    
%        \STATE Index set $\mathbb{A}  \gets \M{Sample}(p,n)$
%        %\STATE $\{\y^i , \Z^i \}  \gets $ Random sample of size $p$ from $\{\y , \Z \} $.% with response and predictor given by $\{\y^i , \Z^i \}$.
%        %\STATE $\bar{\bbeta}^{[i]} \gets (\Z^i\trans\Z^i)^{-}\Z^i\trans \y^i$. % Solve linear constrained least square to obtain estimate $\bbeta^{[i]}$. 
%        \STATE $\bar{\bbeta}^{[i]} \gets (\Z_{\mathbb{A}}\trans\Z_{\mathbb{A}})^{-}\Z_{\mathbb{A}}\trans \y_{\mathbb{A}}$.
%        \vspace{0.2cm}
%        \STATE \underline{\textit{Improvisation step: }} 
%        \STATE Set $\ddot{\btheta}^{(0)} \gets \bar{\bbeta}^{[i]}$
%        \FOR{$j \in \{1,\ldots,k\}$}
%        \STATE $\ddot{\btheta}^{(j)} \gets \M{\textbf{I-step}}(\ddot{\btheta}^{(j-1)})$
%         %\STATE Perform $k$ step of \textbf{I-step} \ref{step:1} with $\bbeta^{(0)} = \bbeta^{[i]}$.
%         \ENDFOR
%         \STATE Set improvised estimator $\bbeta^{[i]} \gets \ddot{\btheta}^{(k)}$.
%         \STATE  Get scale estimate $s^{[i]} = S_m(\bepsilon(\bbeta^{[i]}) )$.
    \ENDFOR
  \RETURN  $\{|\y- \Z\ddot{\bbeta}| < C_1\ddot{s}, \,\, \ddot{s}\}$.
%    \STATE $\mathbb{L} \gets $ Index set of best $t$ scale estimate $s^{[i]}$, e.g. $t  = 10$.
%    \vspace{0.3cm}
%    \STATE \underline{\textit{Final estimation:}}
%        \FOR{$j \in \mathbb{L}$}
%                    \STATE Set $\ddot{\btheta}^{(0)} \gets \bbeta^{[j]}$ and $i = 1$. 
%                    \REPEAT
%    \STATE $\ddot{\btheta}^{(i)} \gets \M{\textbf{I-step}}(\ddot{\btheta}^{(i-1)})$ %\textbf{I-step}($\bbeta^{(0)}$) \ref{step:1}
%    \UNTIL convergence
%    \STATE Assign final estimate to $\ddot{\btheta}_f^{[j]}$.
%    \STATE Get scale estimate $\ddot{s}_f^{[i]} = S_m(\bepsilon(\ddot{\btheta}_f^{[i]}) )$. % for $i=1,\ldots,t$. 
%   \ENDFOR
\end{algorithmic}
\end{algorithm}

\subsection{Non-asymptotic analysis proofs}
\begin{proof}[Proof of Theorem \ref{sec:non-asy:th1}] 
We consider the following optimization problem for the non-asymptotic analysis:
\begin{align}
\{\what{\bgamma}_{\lambda_1},\what{\bbeta}_{\lambda_2}\} \equiv \argmin_{\{\bgamma,\bbeta\}} \,\,   f_{\lambda_1,\lambda_2}(\bbeta, \bgamma; \Z, \y)    \quad \M{s.t.} \quad \C\trans \bbeta = \0 \, , \notag 
\end{align}
where $f_{\lambda_1,\lambda_2}(\bbeta, \bgamma; \Z, \y) = \frac{1}{2} \| \y
- \Z \bbeta - \bgamma \|_2^2  + P_{\lambda_1}^1(\bgamma) + P_{\lambda_2}^2(\bbeta)$. For the ease of presentation, we drop the  subscript $\{\lambda_1,\lambda_2\}$ from $f_{\lambda_1,\lambda_2}(\cdot)$ and $\{\what{\bgamma}_{\lambda_1},\what{\bbeta}_{\lambda_2}\}$. Then, for the optimal solution $\{\what{\bbeta}, \what{\bgamma} \}$, we have 
$$
f(\what{\bbeta}_{\lambda}, \what{\bgamma}_{\lambda}; \Z, \y) \leq f(\bbeta, \bgamma; \Z, \y),
$$
where $\{\bbeta, \bgamma\} \in \mathbb{R}^{p + n}$ such that  $\C\trans \bbeta = \0$. On simplification, we write  the basic inequality in terms of the true model parameters $\{\bbeta^*,\bgamma^*\}$, specified in equation \eqref{eq:trumdl}, as 
\begin{align}
M(\what{\bbeta}_{\lambda} - \bbeta^*, \what{\bgamma}_{\lambda} - \bgamma^*) & \leq  M(\bbeta - \bbeta^*, \bgamma - \bgamma^*)  + 2 \langle \bepsilon , \Z \bs\Delta^{(\beta)} +\bs \Delta^{(\gamma)} \rangle \notag \\ 
& + P_{\lambda_1}^1(\bgamma) - P_{\lambda_1}^1(\what{\bgamma}) + P_{\lambda_2}^2(\bbeta) - P_{\lambda_2}^2(\what{\bbeta} ), \label{eq:basicinequality}
\end{align}
where $\bs{\Delta}^{(\beta)}  = \what{\bbeta} - \bbeta$ and $\bs\Delta^{(\gamma)} = \what{\bgamma} - \bgamma$. To  simplify further,  we use following lemma to bound the  stochastic term $\langle \bepsilon , \Z \bs\Delta^{(\beta)} + \bs\Delta^{(\gamma)} \rangle$.
%%Before we simplify \textit{basic inequality} further, we state Lemma bounding stochastic error part. 
%\begin{lemma}\label{lemma:lm1}
%Given $\X \in \bbR^{n \times p}$ with $\M{rank}(\X) = p-k$ and $1 \leq J \leq n$. Define set $\Gamma_{\mcJ} = \big\{ (\bbeta,\bgamma) \in \bbR^p \otimes \bbR^n ; |\mcJ(\bgamma)| = J \big\}$ where set $\mcJ(\bgamma)$ is the index of non-zero $\bgamma$ with cardinality $J$. There exist constant $L$, $A_1$, $C'$, $c$ such that on defining $\lambda = A\lambda_0, \, \lambda_0 = \sigma(\log(en))^{1/2}, \, A = \sqrt{ab}A_1 \M{ and } a \geq 2b > 0$, we have 
%$$
%\sup_{(\bbeta,\bgamma) \in \Gamma_{\J}} \Big\{  2 \langle \bepsilon , \X\bbeta + \bgamma \rangle - \frac{1}{a}\| \X\bbeta + \bgamma \|_2^2 - \frac{1}{b} P_{h}(\bgamma ; \lambda)  -2aL \sigma^2 (p-k) \Big\} \geq a \sigma^2 t
%$$
%with probability at most $C' \exp(-c t)$. 
%%Here in set $\Gamma_{\J}$ number of non-zero $\bgamma$ is J.
%\end{lemma}
\begin{lemma}\label{lemma:lm1}
Define hard threshold penalty $P_{\lambda}^{h}(u ) = (-u^2/2 + \lambda|u|)1_{|u|<\lambda} +  (u^2/2 )1_{|u|\geq\lambda}$.  Consider $\U \in \bbR^{n \times p}$,  $\C \in \bbR^{p \times k}$ and 
%index set $\S$ with $s = |\S|$ such that $ 1<s <p$, and index set $\T$ with $t = |\T|$ such that $ 1<t <n$. 
define $\Gamma_{T,S} = \big\{ (\bgamma , \bbeta) \in \bbR^n \otimes \bbR^p ; \, \T =  \bs\mcJ(\bgamma), \,\, \S =  \bs\mcJ(\bbeta), \,\, \C\trans\bbeta = \0 \big\}$ where operator $\bs\mcJ(\cdot)$ denote support index set with $s = |\S|$ such that $ 1<s <p$ and $t = |\T|$ such that $ 1<t <n$. Suppose tuning parameter $\lambda_1 = A\lambda_a$  and $\lambda_2 = A\lambda_b$ with $\lambda_a = \sigma\sqrt{\log{en}}$, $\lambda_b = \sigma\sqrt{\log{ep}}$, and $A = \sqrt{ab}A_1$  for a sufficiently large  $A_1$ satisfying $a \geq 2b > 0$. 
There exist constant $L$, $C'$, $c$ and parameter $v$, 
we have 
\begin{align*}
\sup_{(\bbeta,\bgamma) \in \Gamma_{T,S}} \Bigg\{  2 \langle \bepsilon , \U\bbeta + \bgamma \rangle -  \frac{1}{a}\| \U\bbeta + \bgamma \|_2^2 -  \frac{P_{\lambda_1}^{h}(\bgamma )}{b} &  -  \frac{P_{\lambda_2}^{h}(\bbeta)}{b}  - \\&   2a L \sigma^2 (2-k) 
\Bigg\} \geq a \sigma^2 v
\end{align*}
with probability at most $C' \exp(-c v)$. 
%Here in set $\Gamma_{\J}$ number of non-zero $\bgamma$ is J.
\end{lemma}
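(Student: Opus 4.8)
\emph{Proof strategy.} The quantity inside the supremum depends on $(\bbeta,\bgamma)$ only through the fitted vector $\bmu:=\U\bbeta+\bgamma$ (the first two terms) and through the supports and magnitudes of $(\bbeta,\bgamma)$ (the two penalty terms); the plan is the standard ``fixed-support reduction followed by a union bound paid for by the penalty'' argument of \citet{Shea,She2017b,she2017selective}, adapted to the linear constraint $\C\trans\bbeta=\0$. First I would peel over supports: write the feasible set as $\bigcup_{\T,\S}\Gamma_{\T,\S}$ over index sets $\T$ with $1<|\T|<n$ and $\S$ with $1<|\S|<p$. On $\Gamma_{\T,\S}$ the fitted vector ranges over the subspace $V_{\T,\S}=\{\U\b+\g:\ \supp(\b)\subseteq\S,\ \C\trans\b=\0,\ \supp(\g)\subseteq\T\}$, whose dimension satisfies $d_{\T,\S}\le|\T|+|\S|-\rank(\C_{\S,\cdot})$; the correction $\rank(\C_{\S,\cdot})\le k$ is the origin of the $(2-k)$ offset in the statement. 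It then suffices to control, for each fixed $(\T,\S)$, the piece $Z_{\T,\S}:=\sup_{(\bbeta,\bgamma)\in\Gamma_{\T,\S}}\{2\langle\bepsilon,\bmu\rangle-\tfrac1a\|\bmu\|_2^2-\tfrac1b P_{\lambda_1}^h(\bgamma)-\tfrac1b P_{\lambda_2}^h(\bbeta)\}$ and then union-bound over all $(\T,\S)$.

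Since the penalties are nonnegative, $Z_{\T,\S}\le\sup_{\bmu\in V_{\T,\S}}\{2\langle\bepsilon,\bmu\rangle-\tfrac1a\|\bmu\|_2^2\}=a\,\|\bP_{V_{\T,\S}}\bepsilon\|_2^2$ by completing the square, where $\bP_{V_{\T,\S}}$ is the orthogonal projection onto $V_{\T,\S}$. Because the coordinates of $\bepsilon$ are independent, mean-zero and sub-Gaussian with scale $\sigma$, the projected squared norm is sub-exponential (a Hanson--Wright / $\chi^2$-type inequality): there are absolute constants $L_0,c_0$ with $\bbP\big(\|\bP_{V_{\T,\S}}\bepsilon\|_2^2\ge\sigma^2(L_0 d_{\T,\S}+u)\big)\le\exp(-c_0u)$ for all $u\ge0$.

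There are at most $\binom{n}{|\T|}\binom{p}{|\S|}\le\exp\!\big(|\T|\log\tfrac{en}{|\T|}+|\S|\log\tfrac{ep}{|\S|}\big)$ support pairs of given sizes, so I would set the deviation level $u_{\T,\S}=\tfrac{2}{c_0}\big(|\T|\log\tfrac{en}{|\T|}+|\S|\log\tfrac{ep}{|\S|}\big)+v$ and sum the tails: the binomial prefactors cancel, the residual sums over sizes converge, and the total failure probability is $\le C'\exp(-cv)$. On the complementary event, for every $(\bbeta,\bgamma)$ one has $2\langle\bepsilon,\bmu\rangle-\tfrac1a\|\bmu\|_2^2\le a\sigma^2 L_0 d_{\T,\S}+\tfrac{2a\sigma^2}{c_0}\big(|\T|\log(en)+|\S|\log(ep)\big)+a\sigma^2 v$, and the remaining task is to bury the first two terms in $\tfrac1b P_{\lambda_1}^h(\bgamma)+\tfrac1b P_{\lambda_2}^h(\bbeta)+2aL\sigma^2(2-k)$. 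This is where the calibration $\lambda_a^2=\sigma^2\log(en)$, $\lambda_b^2=\sigma^2\log(ep)$, $\lambda_1=A\lambda_a$, $\lambda_2=A\lambda_b$, $A=\sqrt{ab}A_1$ with $A_1$ large is used, together with the elementary bounds $P_\lambda^h(u)\ge\tfrac12\lambda^2$ for $|u|\ge\lambda$ and $P_\lambda^h(u)\ge\tfrac12 u^2$ for $|u|<\lambda$: split $\T$ and $\S$ into their ``large'' coordinates, which $\tfrac1b P^h$ charges at rate $\tfrac{\lambda_1^2}{2b}=\tfrac{a}{2}A_1^2\sigma^2\log(en)\gg a\sigma^2\log(en)$ (resp.\ $\tfrac{\lambda_2^2}{2b}$), and their ``small'' coordinates, whose contribution to $\bmu$ is norm-capped and is charged instead to $\tfrac1a\|\bmu\|_2^2$ and to the $\tfrac12 u^2$ part of $P^h$. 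Taking $A_1$ large enough makes the absorption go through and fixes the constants $L,c,C'$.

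The delicate point is this last step, and specifically the ``small'' support coordinates: for a fixed support the penalty $P_{\lambda_1}^h(\bgamma)$ can be made arbitrarily small by shrinking those coordinates, while the union bound still charges $\log\binom{n}{|\T|}$ for that support, so the combinatorial cost cannot be paid by the penalty alone and must be routed through the quadratic term $\tfrac1a\|\bmu\|_2^2$. Because $\bmu=\U\bbeta+\bgamma$ couples the $\bbeta$- and $\bgamma$-blocks inside one quadratic, this routing must be done on the joint subspace $V_{\T,\S}$ rather than blockwise; combining this with the accounting of $\rank(\C_{\S,\cdot})\le k$ in $d_{\T,\S}$ (hence the $2-k$ term) and keeping all constants free of $n,p,k$ is the bulk of the work.
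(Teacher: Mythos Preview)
Your peeling-plus-union-bound skeleton is the right one, and you correctly diagnose the crux: the penalty $P^{h}_{\lambda_1}(\bgamma)$ vanishes as the nonzero coordinates of $\bgamma$ shrink, while the union-bound price $\log\binom{n}{|\T|}$ does not. But your proposed resolution does not close the gap. Once you pass to $Z_{\T,\S}\le a\|\bP_{V_{\T,\S}}\bepsilon\|_2^2$ by completing the square, the quadratic $\tfrac1a\|\bmu\|_2^2$ has been consumed; you cannot ``route the small coordinates through it'' afterwards. Concretely, your target inequality becomes
\[
a\sigma^2 L_0\, d_{\T,\S}+\tfrac{2a\sigma^2}{c_0}\bigl(|\T|\log(en)+|\S|\log(ep)\bigr)\ \le\ \tfrac1b P^{h}_{\lambda_1}(\bgamma)+\tfrac1b P^{h}_{\lambda_2}(\bbeta)+2aL\sigma^2(2-k)
\]
for \emph{every} $(\bbeta,\bgamma)\in\Gamma_{\T,\S}$; sending the nonzero coordinates to zero makes the right side tend to $2aL\sigma^2(2-k)$ while the left side stays of order $|\T|\log(en)+|\S|\log(ep)$, so the inequality is false. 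The ``$\tfrac12 u^2$'' branch of $P^h$ and a reserved fraction of $\tfrac1a\|\bmu\|_2^2$ together are still only $O(\|\bgamma\|_2^2+\|\bbeta\|_2^2)$, which again goes to zero and cannot absorb a fixed combinatorial cost.

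The paper avoids this by a different device you are missing: a reduction from $P^h$ to an $\ell_0$-type penalty via a property of the \emph{maximizer}. One shows (their Lemma~\ref{lemma:lm2}, from \citet{she2012iterative}) that any global maximizer $(\bbeta^0,\bgamma^0)$ of $L_h$ has every nonzero coordinate of magnitude at least the corresponding $\lambda$; at such a point $P^h$ coincides with the $\ell_0$-scaled penalty $P^0$, so $\sup L_h=\sup L_0$ and it suffices to bound the event $\mathcal A_0$. On $\Gamma_{\T,\S}$ the penalty $P^0$ depends only on $(|\T|,|\S|)$, and now the combinatorial price is paid exactly by $\tfrac1b P^0$ once $\lambda_1^2\asymp ab\sigma^2\log(en)$ and $\lambda_2^2\asymp ab\sigma^2\log(ep)$ with $A_1$ large. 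The paper then does not complete the square on $V_{\T,\S}$ directly; it splits $\U\bbeta+\bgamma$ orthogonally along the $\bgamma$-support coordinates into $\a_1,\a_2$, applies a normalized-sup tail bound (their Lemma~\ref{lemma:3}) to each piece, and uses the AM--GM/Cauchy--Schwarz identity $2\langle\bepsilon,\a\rangle-\tfrac1a\|\a\|^2-2aLp'\le 2a\bigl(\langle\bepsilon,\a/\|\a\|\rangle-\sqrt{Lp'}\bigr)_+^2$ before the union bound. The counting $\log\binom{p}{k}+\log\binom{p-k}{s-k}$ is where the constraint $\C\trans\bbeta=\0$ produces the $(2-k)$ offset, rather than via $\rank(\C_{\S,\cdot})$ in a dimension count as you suggest. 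If you want to salvage your route, you would need to supply the $P^h\!\to\!P^0$ reduction (or an equivalent ``no small nonzero coordinates at the sup'' argument) before any square-completion.
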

Using the  lemma, we bound the stochastic component  of the \textsf{basic inequality} \eqref{eq:basicinequality} as 
\begin{align*}
2 \langle \bepsilon , \Z \bs\Delta^{(\beta)} + \bs\Delta^{(\gamma)} \rangle  \leq \frac{1}{a} \|\Z\bs\Delta^{(\beta)} +  \bs\Delta^{(\gamma)}\|_2^2 +  \frac{1}{b} P_{\lambda_1}^{h}(\bs\Delta^{(\gamma)} )  &+  \frac{1}{b} P_{\lambda_2}^{h}(\bs\Delta^{(\beta)}) + \\ & R + 2a L \sigma^2 (2-k), 
\end{align*}
where $\frac{1}{a} \|\Z\bs\Delta^{(\beta)} + \bs\Delta^{(\gamma)}\|_2^2 \leq \frac{2M(\bbeta-\bbeta^*,  \bgamma - \bgamma^*)}{a}  + \frac{2 M(\what{\bbeta}-\bbeta^*,  \what{\bgamma} - \bgamma^*) }{a}$ and 
\begin{align*}
R =  \sup_{ \substack{1 \leq s \leq p \\ 1 \leq t \leq n} } \,\,\,  \sup_{(\bbeta,\bgamma) \in \Gamma_{T,S}}  \Bigg\{ 2 \langle \bepsilon , \Z \bs\Delta^{(\beta)} + \bs\Delta^{(\gamma)} \rangle - & \frac{1}{a} \| \Z \bs\Delta^{(\beta)} + \bs\Delta^{(\gamma)}  \|_2^2 - \\ &
\frac{1}{b} P_{\lambda_1}^{h}(\bs\Delta^{(\gamma)} )  -  \frac{1}{b} P_{\lambda_2}^{h}(\bs\Delta^{(\beta)}) \Bigg\}
\end{align*}
with expectation $\mathbb{E}(R) \leq a c \sigma^2 $. Using the result, we write the \textsf{basic inequality}  \eqref{eq:basicinequality} as
\begin{align}
\Big( 1- \frac{1}{a} \Big) &M(\what{\bbeta}  -\bbeta^*,   \what{\bgamma} - \bgamma^*) \leq  \Big( 1 + \frac{1}{a} \Big) M(\bbeta-\bbeta^*,  \bgamma - \bgamma^*)   + \label{eq:lemma1:main1}  \\
 &  \frac{1}{b} P_{\lambda_1}^{h}(\bs\Delta^{(\gamma)} )  +  \frac{1}{b} P_{\lambda_2}^{h}(\bs\Delta^{(\beta)}) +  2a L \sigma^2 (2-k) + R +   \notag \\ 
 & P_{\lambda_1}^1(\bgamma) - P_{\lambda_1}^1(\what{\bgamma}) + P_{\lambda_2}^2(\bbeta) - P_{\lambda_2}^2(\what{\bbeta} ). \notag
\end{align}
To simplify further, we use the inequality $ P_{\lambda_1}^{h}(\bs\Delta^{(\gamma)} ) \leq P_{\lambda_1}^{h}(\bgamma ) + P_{\lambda_1}^{h}(\what{\bgamma} )$ and $P_{\lambda_1}^{h}(\bgamma)  \leq   P_{\lambda_1}^1(\bgamma)$, and write 
%\begin{align*}
% P(\bgamma;\lambda_1) &-  P(\what{\bgamma} ;\lambda_1) + \frac{P_h(\bs\Delta^{(\gamma)} ; \lambda_1)}{b} 
%\leq   P(\bgamma;\lambda_1) -  P(\what{\bgamma} ;\lambda_1) + \frac{P_h(\bgamma ; \lambda_1)}{b} +\frac{P_h(\what{\bgamma} ; \lambda_1)}{b} \\
%&\leq \Big( 1+ \frac{1}{b} \Big)  P(\bgamma;\lambda_1)  + \Big(  \frac{1}{b} -1 \Big)   P_h(\what{\bgamma} ; \lambda_1) \qquad    \Big \{ \M{since } P_h(\bgamma ; \lambda_1)  \leq   P(\bgamma ; \lambda_1) \Big \} \\
%&\leq \Big( 1+ \frac{1}{b} \Big)  P(\bgamma;\lambda_1). 
%\end{align*}
\begin{align*}
P_{\lambda_1}^1(\bgamma) - P_{\lambda_1}^1(\what{\bgamma}) + \frac{1}{b} P_{\lambda_1}^{h}(\bs\Delta^{(\gamma)} )
&\leq \Big( 1+ \frac{1}{b} \Big)  P_{\lambda_1}^1(\bgamma)  + \Big(  \frac{1}{b} -1 \Big)  P_{\lambda_1}^1(\what{\bgamma}) \\
&\leq \Big( 1+ \frac{1}{b} \Big) P_{\lambda_1}^1(\bgamma). 
\end{align*}
Similarly, we simplify $ P_{\lambda_2}^2(\bbeta) - P_{\lambda_2}^2(\what{\bbeta} ) +\frac{1}{b} P_{\lambda_2}^{h}(\bs\Delta^{(\beta)})\leq \Big( 1+ \frac{1}{b} \Big) P_{\lambda_2}^2(\bbeta)$. Now, we use the result from \eqref{eq:lemma1:main1} and write the expression for the  prediction error  bound as 
%On integrating the result obtained  in \eqref{eq:lemma1:main1}, we 
\begin{align*}
\Big( 1- \frac{1}{a} \Big) M(\what{\bbeta}-\bbeta^*, & \what{\bgamma} - \bgamma^*) \leq  \Big( 1 + \frac{1}{a} \Big) M(\bbeta-\bbeta^*,  \bgamma - \bgamma^*)  + R + \\ % 2a\sigma^2 L (p-k)  + 
&\Big( 1+ \frac{1}{b} \Big) \{ P_{\lambda_1}^1(\bgamma) + 
P_{\lambda_2}^2(\bbeta) \} + 2a L \sigma^2 (2-k) .
\end{align*}
Thus, the oracle bound on the prediction error is   
\begin{align*}
M(\what{\bbeta}-\bbeta^*,  \what{\bgamma} - \bgamma^*) \, \lesssim \, M(\bbeta-\bbeta^*,  \bgamma - \bgamma^*)   + P_{\lambda_1}^1(\bgamma) + P_{\lambda_2}^2(\bbeta)  + \sigma^2 (3-k),% + 2a\sigma^2 L (p-k)
\end{align*}
where $\lesssim$ means the inequality holds upto a multiplicative constant. 

\end{proof}
% further, let us bound the stochastic error part. 

\begin{proof}[Proof of Theorem \ref{sec:non-asy:th2}:]
We follow the  proof of Theorem \ref{sec:non-asy:th1} to prove the result. The result corresponds to the \textsf{case II} with LASSO penalty, i.e., $P_{\lambda_1}^1(\bgamma) = \lambda_1|\bgamma|_1$  and $P_{\lambda_2}^2(\bbeta) = \lambda_2|\bbeta|_1$. Note that $P_{\lambda_1}^{h}(\bs\Delta^{(\gamma)} ) \leq \lambda_1 |\bs\Delta^{(\gamma)}|_1 $ and  $P_{\lambda_2}^{h}(\bs\Delta^{(\beta)}) \leq \lambda_2 |\bs\Delta^{(\beta)}|_1 $.
%\paragraph*{$\ell^1$ penalty case:}
%In this case, we consider the penalty  of type $\ell^1$ (\emph{soft}), i.e., $P(\bgamma;\lambda) = \lambda|\bgamma|_1$. Given that $P_h(\bs\Delta^{(\gamma)} ; \lambda) \leq \lambda |\bs\Delta^{(\gamma)}|_1 $.  
Now, consider $ \theta = 1/b$ and  simplify
\begin{align*}
P_{\lambda_1}^1(\bgamma) -  P_{\lambda_1}^1(\what{\bgamma}) + \frac{P_{\lambda_1}^{h}(\bs\Delta^{(\gamma)} )}{b} 
&\leq   P_{\lambda_1}^1(\bgamma) -  P_{\lambda_1}^1(\what{\bgamma}) +  \frac{\lambda_1 |\bs\Delta^{(\gamma)}|_1}{b}
\\
&=   \lambda_1  \Big\{ |\bgamma|_1  -  |\what{\bgamma}|_1 + \theta |\bs\Delta^{(\gamma)}|_1  \Big\} \
\\
&\leq   \lambda_1 \Big\{  |\bs\Delta_{\T}^{(\gamma)}|_1   -   |\bs\Delta_{\T^c}^{(\gamma)}|_1 + \theta |\bs\Delta_{\T}^{(\gamma)}|_1   +   \theta|\bs\Delta_{\T^c}^{(\gamma)}|_1  \Big\} 
\\
&\leq   \lambda_1 \Big\{ (1+\theta) |\bs\Delta_{\T}^{(\gamma)}|_1   -  (1-\theta) |\bs\Delta_{\T^c}^{(\gamma)}|_1   \Big\} 
\\
&=  \lambda_1 (1-\theta) \Big\{ (1+\nu) |\bs\Delta_{\T}^{(\gamma)}|_1   -   |\bs\Delta_{\T^c}^{(\gamma)}|_1   \Big\} 
\end{align*}
where $\theta = \nu/ (1 + \nu) $. Similarly, 
\begin{align*}
P_{\lambda_2}^2(\bbeta) -  P_{\lambda_2}^2(\what{\bbeta}) + \frac{P_{\lambda_2}^{h}(\bs\Delta^{(\beta)} )}{b} 
\leq \lambda_2 (1-\theta) \Big\{ (1+\nu) |\bs\Delta_{\S}^{(\beta)}|_1   -   |\bs\Delta_{\S^c}^{(\beta)}|_1   \Big\}.
\end{align*}
On combining the two results, we get 
\begin{align*}
&P_{\lambda_2}^2(\bbeta) -  P_{\lambda_2}^2(\what{\bbeta}) + \frac{P_{\lambda_2}^{h}(\bs\Delta^{(\beta)} )}{b}   +  P_{\lambda_1}^1(\bgamma) -  P_{\lambda_1}^1(\what{\bgamma}) + \frac{P_{\lambda_1}^{h}(\bs\Delta^{(\gamma)} )}{b} \leq \\&  \lambda_2 (1-\theta) \Big\{ (1+\nu) |\bs\Delta_{\S}^{(\beta)}|_1   -   |\bs\Delta_{\S^c}^{(\beta)}|_1   \Big\} + \lambda_1 (1-\theta) \Big\{ (1+\nu) |\bs\Delta_{\T}^{(\gamma)}|_1   -   |\bs\Delta_{\T^c}^{(\gamma)}|_1   \Big\} .
\end{align*}
Under the compatibility condition on RHS, we further simplify
\begin{align*}
\frac{\M{RHS}}{1-\theta} &\leq \lambda_1  \kappa_1 t^{1/2} \| \bP_{\Z\bs\Delta^{(\beta)}}^{\perp} (\Z\bs\Delta^{(\beta)} + \bs\Delta^{(\gamma)}) \|_2 + \lambda_2  \kappa_2 s^{1/2} \| \bP_{\Z\bs\Delta^{(\beta)}} (\Z\bs\Delta^{(\beta)} + \bs\Delta^{(\gamma)}) \|_2 \\ \M{RHS}
&\leq \lambda_1 (1-\theta) \kappa_1 t^{1/2} \| \Z\bs\Delta^{(\beta)} + \bs\Delta^{(\gamma)} \|_2 + \lambda_2 (1-\theta) \kappa_2 s^{1/2} \| \Z\bs\Delta^{(\beta)} + \bs\Delta^{(\gamma)} \|_2 \\
%& \leq 2 \lambda (1-\theta) \kappa |\mcJ|^{1/2} \| \Z\bs\Delta^{(\beta)} + \bs\Delta^{(\gamma)} \|_2 \\
&\leq \frac{2}{a}  \| \Z\bs\Delta^{(\beta)} + \bs\Delta^{(\gamma)} \|_2^2  +  a\lambda_1^2 (1-\theta)^2 \kappa_1^2 t   +  a\lambda_2^2 (1-\theta)^2 \kappa_2^2 s
%&\leq \frac{2}{a} M(\what{\bbeta}-\bbeta^*,  \what{\bgamma} - \bgamma^*) + \frac{2}{a}  M(\bbeta-\bbeta^*,  \bgamma - \bgamma^*) +a\lambda_1^2 (1-\theta)^2 \kappa_1^2 t   +  a\lambda_2^2 (1-\theta)^2 \kappa_2^2s
\end{align*}
Using the upper bound $\| \Z\bs\Delta^{(\beta)} + \bs\Delta^{(\gamma)} \|_2^2 \leq M(\what{\bbeta}-\bbeta^*,  \what{\bgamma} - \bgamma^*) + M(\bbeta-\bbeta^*,  \bgamma - \bgamma^*) $, and 
following the proof of Theorem \ref{sec:non-asy:th1}, we  write 
\begin{align*}
\Big( 1- \frac{2}{a} \Big) M(\what{\bbeta}-\bbeta^*,  \what{\bgamma} - \bgamma^*) \leq & \Big( 1 + \frac{2}{a} \Big) M(\bbeta-\bbeta^*,  \bgamma - \bgamma^*)  + 2aL\sigma^2(2-k) %+ 2a\sigma^2 L (p-k) 
%\\ & +a \lambda^2 (1-\theta)^2 \kappa^2 |\mcJ| 
\\ & + a\lambda_1^2 (1-\theta)^2 \kappa_1^2 t   +  a\lambda_2^2 (1-\theta)^2 \kappa_2^2 s +R.
\end{align*}
Thus,  we can say that
$$
 M(\what{\bbeta}-\bbeta^*,  \what{\bgamma} - \bgamma^*) \lesssim    M(\bbeta-\bbeta^*,  \bgamma - \bgamma^*)  + a(1-\theta)^2 \{ \lambda_1^2  \kappa_1^2 t   +  \lambda_2^2  \kappa_2^2 s \} + (3-k)\sigma^2.
$$
\end{proof}

% --------------------------------------------------
\begin{proof}[Proof of lemma \ref{lemma:lm1}] We follows the approach  from \citet{Shea,She2017b,she2017selective} to prove the result. We write the forms of hard threshold penalty as $P_{ \lambda}^{h}(u) = (-u^2/2 + \lambda|u|)1_{|u|<\lambda} +  (u^2/2 )1_{|u|\geq\lambda}$ and $P_{ \lambda}^{0}(u ) = (u^2/2])1_{u \neq 0}$. 
Define 
\begin{align*}
L_h(\bbeta,\bgamma) &= 2 \langle \bepsilon , \U\bbeta + \bgamma \rangle - \frac{\| \U\bbeta + \bgamma \|_2^2}{a} - \frac{ P_{\lambda_1}^{h}(\bgamma )}{b} - \frac{ P_{\lambda_2}^{h}(\bbeta) }{b} -  2a L \sigma^2 (2-k)\\
L_0(\bbeta,\bgamma) &= 2 \langle \bepsilon , \U\bbeta + \bgamma \rangle - \frac{\| \U\bbeta + \bgamma \|_2^2}{a} - \frac{ P_{\lambda_1}^{0}(\bgamma )}{b} - \frac{ P_{\lambda_2}^{0}(\bbeta) }{b} -  2a L \sigma^2 (2-k).
\end{align*}
The formulation implies $P_{\lambda}^{h}(\bgamma) \leq P_{\lambda}^{0}(\bgamma)$ resulting in  $L_h(\bbeta,\bgamma) \geq L_0(\bbeta,\bgamma)$. 
%\begin{align}
%L_h(\bbeta,\bgamma) \geq L_0(\bbeta,\bgamma). \label{eq:lemma1:lhlo}
% \end{align}
Define set 
\begin{align*}
\mcA_h = \Bigg \{  \sup_{(\bbeta,\bgamma) \in \Gamma_{T,S}} L_h(\bbeta,\bgamma) \geq a\sigma^2 v  \Bigg \},\,\,
\mcA_0 = \Bigg \{  \sup_{(\bbeta,\bgamma) \in \Gamma_{T,S}} L_0(\bbeta,\bgamma) \geq a\sigma^2 v  \Bigg \}. 
\end{align*}
For any $\zeta \in \mcA_0$, the formulation implies $\zeta \in \mcA_H$, hence, $\mcA_0 \subseteq \mcA_H$. To prove $\mcA_0 = \mcA_H$, we aim to prove $\mcA_H \subseteq \mcA_0$. It should be noted that 
$$
\mcA_h \subset  \Big \{  \sup_{(\bbeta,\bgamma)} L_h(\bbeta,\bgamma) \geq a\sigma^2 v  \Big \} . 
$$
The occurrence of $\mcA_h$ implies that 
$L_h(\bbeta^0,\bgamma^0) \geq a\sigma^2 v$
for any $(\bbeta^0,\bgamma^0)$ satisfying 
$$
(\bbeta^0,\bgamma^0) \equiv \argmin_{(\bbeta,\bgamma)} \,\, \frac{ \| \U\bbeta + \bgamma \|_2^2}{a}- 2 \langle \bepsilon, \U\bbeta+\bgamma\rangle + \frac{ P_{\lambda_1}^{h}(\bgamma )}{b} + \frac{ P_{\lambda_2}^{h}(\bbeta) }{b} 
$$

\begin{lemma}[For proof see \citet{she2012iterative}]\label{lemma:lm2}
Suppose $\alpha > 1$. For $\y \in \mathbb{R}^n$, there exist a globally optimal solution $\bgamma^0$ satisfying 
$$
\bgamma^0 \equiv \argmin_{\bgamma} \,\, \frac{1}{2} \| \y - \bgamma\|_2^2 + \alpha  P_{\lambda}^h(\bgamma ),
$$
such that for any $j: 1 \leq j \leq n$ either $\gamma_j^0 = 0$ or $|\gamma_j^0| \geq \lambda \alpha^{1/2} \geq \lambda$. 
\end{lemma}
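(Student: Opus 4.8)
The plan is to reduce everything to a one-dimensional problem. Both the quadratic fidelity term $\tfrac{1}{2}\|\y-\bgamma\|_2^2=\sum_{j}\tfrac{1}{2}(y_j-\gamma_j)^2$ and the penalty $\alpha P_{\lambda}^{h}(\bgamma)=\alpha\sum_{j}P_{\lambda}^{h}(\gamma_j)$ are separable, so the minimizer $\bgamma^0$ is obtained coordinatewise by solving $\min_{t\in\bbR}h(t)$ with $h(t)=\tfrac{1}{2}(y-t)^2+\alpha P_{\lambda}^{h}(t)$, where $y=y_j$ and we use the hard-thresholding penalty in the form $P_{\lambda}^{h}(u)=(\lambda|u|-u^2/2)\mathbf 1\{|u|<\lambda\}+(\lambda^2/2)\mathbf 1\{|u|\ge\lambda\}$, i.e.\ the form for which $P_{\lambda}^{h}\le P_{\lambda}^{0}$ as used in the proof of Lemma~\ref{lemma:lm1}. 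It thus suffices to exhibit a global minimizer $t^0$ of the scalar problem with $t^0=0$ or $|t^0|\ge\lambda\alpha^{1/2}$; since $h$ is invariant under $(y,t)\mapsto(-y,-t)$, I may assume $y\ge0$ and search for $t^0\ge0$ (a negative candidate is strictly beaten by its reflection when $y\ge0$).

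Next I would split $[0,\infty)$ at the kink $t=\lambda$. On $\{t\ge\lambda\}$ the penalty is the constant $\lambda^2/2$, so $h$ is a strictly convex quadratic with unconstrained minimizer $t=y$; hence its minimum over $[\lambda,\infty)$ is attained at $t=\max(y,\lambda)$. On $\{0\le t<\lambda\}$ one has $h(t)=\tfrac{1}{2}y^2+(\alpha\lambda-y)t+\tfrac{1-\alpha}{2}t^2$, which is \emph{concave} because $\alpha>1$; a concave function on an interval attains its minimum at an endpoint, so the only candidates contributed by this piece are $t=0$ and $t=\lambda$, and continuity of $h$ at $t=\lambda$ makes the boundary value unambiguous. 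The whole problem therefore has at most three candidate minimizers, $t\in\{0,\lambda,y\}$, with $t=y$ relevant only when $y\ge\lambda$.

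The threshold then emerges from two elementary value comparisons. First, $h(\lambda)-h(0)=\tfrac{(1+\alpha)\lambda^2}{2}-y\lambda$, which is positive whenever $y<\lambda$ (using $\alpha>1$), so for $y<\lambda$ the minimizer is $t^0=0$. For $y\ge\lambda$, the point $t=y$ is the unconstrained minimizer of the convex piece, so $h(y)\le h(\lambda)$ and the boundary point $\lambda$ is dominated; the contest reduces to $t=0$ versus $t=y$, and $h(y)-h(0)=\tfrac{\alpha\lambda^2}{2}-\tfrac{y^2}{2}$. Hence $t^0=y$ when $y>\lambda\alpha^{1/2}$, $t^0=0$ when $y<\lambda\alpha^{1/2}$, and at $y=\lambda\alpha^{1/2}$ the two tie, in which case I select $t^0=0$. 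In every case the chosen minimizer is either $0$ or has $|t^0|=y>\lambda\alpha^{1/2}\ge\lambda$; reassembling the coordinates yields the asserted $\bgamma^0$.

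The only genuinely non-routine point is the loss of convexity on $\{|t|<\lambda\}$ once $\alpha>1$: without the concavity observation the ``small'' regime is an honest continuum, whereas with it the problem collapses to finitely many candidates, after which the three scalar value comparisons and the tie-breaking are pure bookkeeping. (Equivalently, the statement can be read off the known closed form of the proximity operator of $\alpha P_{\lambda}^{h}$, which is a rescaled hard-threshold map with threshold $\lambda\alpha^{1/2}$; this is precisely what \citet{she2012iterative} establishes.)
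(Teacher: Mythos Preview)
Your argument is correct. The paper does not supply its own proof of this lemma; it merely cites \citet{she2012iterative}. Your coordinatewise reduction, followed by the concavity observation on $\{|t|<\lambda\}$ and the three-value comparison $\{0,\lambda,y\}$, is precisely the standard derivation of the hard-thresholding proximal map at scale $\alpha$, and it establishes the claim cleanly.

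One remark worth making explicit: you (correctly) use the form $P_{\lambda}^{h}(u)=(\lambda|u|-u^2/2)\mathbf 1\{|u|<\lambda\}+(\lambda^2/2)\mathbf 1\{|u|\ge\lambda\}$, which is the standard hard-thresholding penalty and the one for which the inequality $P_{\lambda}^{h}\le P_{\lambda}^{0}$ invoked in Lemma~\ref{lemma:lm1} actually holds. The paper's displayed formula for $P_{\lambda}^{h}$ has $(u^2/2)$ rather than $(\lambda^2/2)$ on $\{|u|\ge\lambda\}$, which appears to be a typo (with that version neither $P_{\lambda}^{h}\le P_{\lambda}^{0}$ nor the threshold $\lambda\alpha^{1/2}$ would be correct). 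Your choice is the right one, and it is consistent with the cited source.
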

Lemma \ref{lemma:lm2} and Lemma 5 from \citet{Shea} indicate that a globally optimal solution $(\bbeta^0,\bgamma^0)$ exist. Also, in a case with $a > 2b>0$, we have 
$P_{0}(\bgamma ; \lambda_1) +  P_{0}(\bbeta ; \lambda_2) =  P_{h}(\bgamma ; \lambda_1) +  P_{h}(\bbeta ; \lambda_2)$,  and thus $ L_0(\bbeta^0,\bgamma^0)   = L_h(\bbeta^0,\bgamma^0)$. Moreover, 
$$
\sup_{(\bbeta,\bgamma) \in \Gamma_{T,S}} L_0(\bbeta,\bgamma) \geq L_0(\bbeta^0,\bgamma^0) = L_h(\bbeta^0,\bgamma^0) \geq a \sigma^2 v ,
$$
suggests $\mcA_h \subseteq \mcA_0$. Hence, $\mcA_h = \mcA_0$. It is then sufficient to prove that 
$$
\bP(\mcA_h) = \bP(\mcA_0) \leq C' \exp(-cv).
$$

Now, let $\I_n$ be the identity matrix of size $n \times n$, and $\I_{\S}$ be the sub-matrix  corresponding to columns in the index set $\S$. $\bP_{\I_{\S}}$ denote the projection matrix for  the  sub-matrix $\I_{\S}$.  Then, $\bP_{\I_{S}} +\bP_{\I_{S}}\trans= \I_n$. In terms of the projection matrix, we factorize the stochastic  component 
\begin{align*}
\langle \bepsilon, \U\bbeta + \bgamma \rangle =  \langle \bepsilon, \bP_{\I_{S}}\trans \U\bbeta \rangle + \langle \bepsilon, \bP_{\I_{S}} (\U\bbeta + \bgamma) \rangle  = \langle \bepsilon, \a_1\rangle + \langle \bepsilon, \a_2 \rangle , 
\end{align*}
where $\|\a_1\|_2^2 + \|\a_2\|_2^2 =\| \U\bbeta + \bgamma \|_2^2$. 

\begin{lemma}\label{lemma:3}
Given $\U \in \bbR^{n \times p}$, $\C \in \bbR^{p \times k}$, index set $\S$ with $s = |\S|$ such that $ 1<s <p$, and index set $\T$ with $t = |\T|$ such that $ 1<t <n$. 
%\begin{itemize}
%\item[a)] 
Define set $\Gamma_{T,S}^' = \big\{ \bs\alpha \in \bbR^n ; \,\, \|\bs\alpha\|_2 \leq 1 , \bs\alpha = \U\bbeta, \bbeta \in \mathbb{R}^p,  \bs\alpha \in \M{CS}(\U_{\T^c \S}), \, \T =  \bs\mcJ(\bgamma), \,\, \S =  \bs\mcJ(\bbeta), \,\, \C\trans\bbeta = \0 \big\}$ where operator $\bs\mcJ(\cdot)$ denote the support set. Define  $p_1^'(t,s) = \sigma^2\{s-k  + \log{n \choose t} + \log{p \choose k}  + \log{p-k \choose s-k} \}$. Then
$$
\bP\Bigg( \sup_{\bs\alpha \in \Gamma_{T,S}^'}  \langle \bepsilon , \bs\alpha \rangle \geq v\sigma + \sqrt{L p_1^'(t,s) } \Bigg) \leq C' \exp (-c v^2),
$$
for sufficiently large constant $\{L,C', c \}$.

%\item[b)] Define set $\Gamma_{T,S}^{''} = \big\{ \bs\delta \in \bbR^n ; \,\, \|\bs\delta\|_2 \leq 1 , \bs\delta = \U\bbeta, \bbeta \in \mathbb{R}^p,  \bs\alpha \in \M{CS}(\X_{\T^c \S}), \, \T =  \mcJ(\bgamma), \,\, \S =  \mcJ(\bbeta), \,\, \C\trans\bbeta = \0 \big\}$ where operator $\mcJ(\cdot)$ denote support set.
%\end{itemize}

% Consider tuning parameter $\lambda_1 = A\lambda_a$  and $\lambda_2 = A\lambda_b$ with $\lambda_a = \sigma(\log(en))^{1/2}$, $\lambda_b = \sigma(\log(ep))^{1/2}$, and $A = \sqrt{ab}A_1$  for a sufficiently large  $A_1$ satisfying $a \geq 2b > 0$. 

%Let $\U \in \bbR^{n \times p}$ with $\M{rank}(\X) = p-k$ and $1 \leq J \leq n$. Define $\Gamma_{\mcJ}^' = \Big\{ \bs\alpha \in \bbR^n : \|\bs\alpha\|_2 \leq 1 , \bs\alpha \in \M{CS}(\X_{\mcJ^c}) \Big\}$ for some index set $\mcJ = \{j : 1 \leq j\leq n\}$ with caridinality $|\mcJ| = J$. 

%Let $p_1^'(J) = \sigma^2\{(p-k) \wedge (n-J) + \log{n \choose J} \}$. Then
%$$
%\bP\Big ( \sup_{\bs\alpha \in \Gamma_{\mcJ}^'}  \langle \bepsilon , \bs\alpha \rangle \geq t\sigma + \sqrt{L p_1^'(J) } \Big ) \leq C' \exp (-c t^2),
%$$
%for some constant $\{L,C', c \}$. 
\end{lemma}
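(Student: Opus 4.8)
The plan is to control the supremum of the sub-Gaussian process $\bs\alpha\mapsto\langle\bepsilon,\bs\alpha\rangle$ over a set that, once the support sets $\T$ (with $|\T|=t$) and $\S$ (with $|\S|=s$) are frozen, sits inside the unit ball of a low-dimensional linear subspace of $\bbR^n$, and then to pay for the union over the admissible supports through the combinatorial part of $p_1^'(t,s)$. Concretely, for fixed $\T,\S$, each $\bs\alpha=\U\bbeta$ in the corresponding slice of $\Gamma_{T,S}'$ has $\bbeta$ confined to $\ker(\C\trans)\cap\spa\{\e_i:i\in\S\}$, and $\bs\alpha$ is additionally required to vanish on $\T$ and to lie in $\M{CS}(\U_{\T^c\S})$. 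Hence this slice is contained in $\{\bs\alpha\in V:\|\bs\alpha\|_2\le1\}$ for a subspace $V=V_{\T,\S}$ with $\dim V\le s-k$: using the block-diagonal form of $\C$ (whose transpose has $k$ linearly independent rows on the supports that occur), the coordinate-restricted kernel $\ker(\C\trans)\cap\spa\{\e_i:i\in\S\}$ has dimension $s-\rank(\C\trans\I_\S)=s-k$, and the membership $\bs\alpha\in\M{CS}(\U_{\T^c\S})$ only shrinks $V$ further (so also $\dim V\le n-t$).

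Next, for a fixed subspace $V$ of dimension $d\le s-k$ I would use the identity $\sup_{\bs\alpha\in V,\ \|\bs\alpha\|_2\le1}\langle\bepsilon,\bs\alpha\rangle=\|\bP_V\bepsilon\|_2$, where $\bP_V$ is the orthogonal projector onto $V$. Since the coordinates of $\bepsilon$ are IID, mean-zero, sub-Gaussian with scale $\sigma$, the quantity $\|\bP_V\bepsilon\|_2$ obeys a $\chi$-type tail: there exist absolute constants $L_0,C_0,c_0$ with
\[
\bP\big(\|\bP_V\bepsilon\|_2\ge\sigma\sqrt{L_0 d}+\sigma x\big)\le C_0\exp(-c_0 x^2),\qquad x\ge0,
\]
which one obtains either from a $1/2$-net of the unit sphere of $V$ (of cardinality at most $5^d$) together with the sub-Gaussian tail of each $\langle\bepsilon,\bs\alpha\rangle$, or from Lipschitz concentration of $\bs u\mapsto\|\bP_V\bs u\|_2$.

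Finally, I would union bound over the admissible supports. There are $\binom{n}{t}$ choices of $\T$ and, splitting $\S$ into its $k$ constraint-carrying coordinates and its $s-k$ remaining coordinates, at most $\binom{p}{k}\binom{p-k}{s-k}$ choices of $\S$, for a total of $N:=\binom{n}{t}\binom{p}{k}\binom{p-k}{s-k}$ slices. Putting $x=x_0\sqrt{\log N}+v$ in the per-slice bound, using $\sqrt{A+B}\le\sqrt A+\sqrt B$, and taking $L$ large enough that $\sigma\sqrt{L_0 d}+\sigma x_0\sqrt{\log N}\le\sqrt{L\,p_1^'(t,s)}$ (recall $d\le s-k$ and $\log N=\log\binom{n}{t}+\log\binom{p}{k}+\log\binom{p-k}{s-k}$), the union bound gives
\[
\bP\Big(\sup_{\bs\alpha\in\Gamma_{T,S}'}\langle\bepsilon,\bs\alpha\rangle\ge v\sigma+\sqrt{L\,p_1^'(t,s)}\Big)\le N\,C_0\exp\big(-c_0(x_0\sqrt{\log N}+v)^2\big)\le C'\exp(-c v^2),
\]
the factor $N\exp(-c_0 x_0^2\log N)$ being absorbed into the constants $C',c$ by enlarging $x_0$. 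This is exactly the claimed bound.

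The main obstacle I anticipate is the dimension count in the first step: one has to verify that imposing $\bs\mcJ(\bbeta)\subseteq\S$ \emph{together with} $\C\trans\bbeta=\0$ genuinely forces $\bs\alpha=\U\bbeta$ into a subspace of dimension at most $s-k$ (not merely $s$), which is where the block structure of $\C$ and the bookkeeping behind the $\log\binom{p}{k}+\log\binom{p-k}{s-k}$ term must be aligned with the supports $\S$ that actually arise in the decomposition inherited from the proof of Lemma~\ref{lemma:lm1}; supports meeting fewer than $k$ constraint blocks need to be treated (or ruled out) with some care. The remaining ingredients --- the $\chi$-tail for $\|\bP_V\bepsilon\|_2$ and the combinatorial union bound --- are routine and parallel the arguments in \citet{Shea,She2017b,she2017selective}.
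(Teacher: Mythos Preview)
Your proposal is correct and follows the standard route---fix $(\T,\S)$, bound $\sup_{\bs\alpha}\langle\bepsilon,\bs\alpha\rangle=\|\bP_V\bepsilon\|_2$ via a sub-Gaussian $\chi$-tail on a subspace of dimension at most $s-k$, then union bound over the $\binom{n}{t}\binom{p}{k}\binom{p-k}{s-k}$ admissible support configurations. The paper does not give its own proof here; it simply states that the lemma follows from Lemma~6 of \citet{Shea} and Lemma~4 of \citet{she2017selective}, and your argument is precisely the content of those results specialized to the present constraint structure, including the caveat you flag about the rank of $\C\trans$ restricted to $\S$.
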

Proof of Lemma \ref{lemma:3} follows from the Lemma 6 of \citet{Shea} and the Lemma 4  of \citet{she2017selective}. Now, using the lemma \ref{lemma:3}, we simplify the first term involving $\a_1$. Thus, write 
%In term of $\a_1$, let us consider 
\begin{align*}
2 \langle \bepsilon , \a_1 \rangle - \frac{1}{a} \|\a_1\|^2 &-2a L p_1^'(t,s) 
= 2 \langle \bepsilon , {\a_1}  \rangle - \frac{ \|\a_1\|^2}{2a}   - \frac{ \|\a_1\|^2}{2a}  - 2a L p_1^'(t,s) \\
&\leq 2 \langle \bepsilon , \frac{\a_1}{\|\a_1\|}  \rangle \|\a_1\|  - \frac{ \|\a_1\|^2}{2a}  - 2 \|\a_1\|  \{L p_1^'(t,s) \}^{1/2},
\end{align*}
on applying the Cauchy-Schwarz inequality on the last two terms. Further, simplify RHS as 
\begin{align*}
\M{RHS} &= 2 \|\a_1\| \Big ( \langle \bepsilon , \frac{\a_1}{\|\a_1\|}  \rangle -  (L p_1^'(t,s) )^{1/2}   \Big ) - \frac{1}{2a} \|\a_1\|^2 \\
&\leq 2a \Big ( \langle \bepsilon , \frac{\a_1}{\|\a_1\|}  \rangle -  (L p_1^'(t,s) )^{1/2}   \Big )_+^2 +   \frac{1}{2a} \|\a_1\|^2  - \frac{1}{2a} \|\a_1\|^2 \\
&= 2a \Big ( \langle \bepsilon , \frac{\a_1}{\|\a_1\|_2}  \rangle -  (L p_1^'(t,s) )^{1/2}   \Big )_+^2
\end{align*}
Again, using the result from lemma \ref{lemma:3}, we have
\begin{align}
&\bP\Bigg \{ \sup_{(\bbeta,\bgamma) \in \Gamma_{T,S}}   \Big (   2  \langle \bepsilon , \a_1 \rangle - \frac{\|\a_1\|^2}{a}  -2a L p_1^'(t,s) \Bigg )  \geq \frac{1}{2} a \sigma^2 v \Big \} \leq  \notag \\ 
 &\bP\Bigg \{ \sup_{(\bbeta,\bgamma) \in \Gamma_{T,S}}   2a \Big ( \langle \bepsilon , \frac{\a_1}{\|\a_1\|}  \rangle -  \{L p_1^'(t,s) \}^{1/2}   \Big )_+^2  \geq \frac{1}{2} a \sigma^2 v \Bigg \} \leq C' \exp (-cv) . \label{eq:lemma1:1}
\end{align}
Similarly for $\a_2$, we have 
\begin{align}
&\bP\Big \{ \sup_{(\bbeta,\bgamma) \in \Gamma_{T,S}}   \Big (   2  \langle \bepsilon , \a_2 \rangle - \frac{1}{a} \|\a_2\|^2 -2a L p_2^'(t,s) \Big )  \geq \frac{1}{2} a \sigma^2 v \Big \} \leq C' \exp (-cv) ,  \label{eq:lemma1:2}
\end{align}
for $p_2^'(t,s) =  \sigma^2\{t + \log{n \choose t} \}$. On applying the union bound on the results obtained in \eqref{eq:lemma1:1} and \eqref{eq:lemma1:2}, we have 
\begin{align}
\bP\Bigg \{ \sup_{(\bbeta,\bgamma) \in \Gamma_{T,S}}   \Bigg [  2  \langle \bepsilon , \U\bbeta + \bgamma \rangle &- \frac{ \|\U\bbeta + \bgamma \|_2^2 }{a}-2a L \sigma^2 \Bigg \{ s-k+t+2\log{n \choose t} +\notag \\ & \log{p \choose k} +\log{p-k \choose s-k} \Bigg\} \Bigg ]   \geq  a \sigma^2 v \Bigg \} \notag \\ & \leq C' \exp (-cv), \label{eq:subnonasyunionbd}
\end{align}
for some constants $\{L, C', c \}$. Now, for the sufficiently  large constant $A_1$ and $a \geq 2b > 0$, we can say that 
$$
 2a L \sigma^2 \Big\{ t+ 2\log{n \choose t}  \Big\} \leq 2a L \sigma^2 \Big\{ t+ 2t \log{\frac{en}{t}}  \Big\} \leq 4a L \sigma^2 t\log(en)\leq \frac{ P_{\lambda_1}^h(\bgamma)}{b}.
$$
Similarly, 
\begin{align*}
 s-k+ \log{p \choose k}  &+ \log{p-k \choose s-k}  \leq  s-k+ k\log{\frac{ep}{k}}  + (s-k)\log{ \frac{ep}{s-k}} \\ & 
=   s\log{ep} +(s-k)  - (s-k)\log{(s-k)} -k \log{k}   \\ & 
 \leq  s\log{ep}  +(s-k)   -(s-k)\Big(1-\frac{1}{s-k} \Big) -k\Big(1-\frac{1}{k} \Big) 
 \\ & = s\log{ep} + 2-k ,
 %\leq \frac{1}{b} P_h(\bbeta;\lambda_2)
\end{align*}
where second inequality is due to $1-1/m \leq \log{m}$. Thus, for sufficiently large $A_1$, we write  
\begin{align*}
  2a L \sigma^2 \Big\{ s-k+ \log{p \choose k}  + \log{p-k \choose s-k} \Big\} & \leq   2a L \sigma^2 (s\log{ep} + 2-k ) \\ & \leq \frac{1}{b} P_h(\bbeta;\lambda_2) + 2a L \sigma^2 (2-k).
\end{align*}
On applying the above inequality results in the union bound \eqref{eq:subnonasyunionbd}, we prove that
\begin{align*}
\bP\Bigg ( \sup_{(\bbeta,\bgamma) \in \Gamma_{T,S}}  \,\, \Bigg \{ 2  \langle \bepsilon , \X\bbeta + \bgamma \rangle -& \frac{1}{a} \|\X\bbeta + \bgamma \|_2^2 -  \frac{P_{\lambda_1}^{h}(\bgamma )}{b}  - \frac{P_{\lambda_2}^{h}(\bbeta)}{b}  - \\ &  2a L \sigma^2 (2-k)
\Bigg \}   \geq a \sigma^2 v \Bigg )  \leq C' \exp (-cv) .
\end{align*}
\end{proof}

\end{document}